 \def\BibTeX{{\rm B\kern-.05em{\sc i\kern-.025em b}\kern-.08em
    T\kern-.1667em\lower.7ex\hbox{E}\kern-.125emX}}
\theoremstyle{plain}
\newtheorem{theorem}{Theorem}
\newtheorem{lemma}{Lemma}
\newtheorem{corollary}{Corollary}
\newtheorem{proposition}{Proposition}
\newtheorem{problem}{Problem}
\theoremstyle{definition}
\newtheorem{definition}{Definition}
\newtheorem{example}{Example}
\theoremstyle{remark}
\newtheorem{remark}{Remark}
\newtheorem{assumption}{Assumption}
\begin{document}
\title{Data-Driven Cooperative Output Regulation of Continuous-Time 
 Multi-Agent Systems with Unknown Network Topology}
\author{Peng Ren, Yuqing Hao, Zhiyong Sun, Qingyun Wang, and Guanrong Chen, \IEEEmembership{Life~Fellow,~IEEE}
\thanks{The authors express gratitude to the National Nature Science Foundation of China (Grant Nos 12172020 and 11932003), Beijing National Science Foundation (Grant Nos 1222010), Young Elite Scientists Sponsorship Program by CAST (Grant Nos 2022QNRC001), the 111 Center under Grant B18002 and Beihang World TOP University Cooperation Program. (corresponding author: Y. Hao)}
\thanks{P. Ren, Y. Hao, Q. Wang are with the Department of Dynamics and Control, Beihang University, Beijing, China (email: haoyq@buaa.edu.cn).}
\thanks{Z. Sun is with the College of Engineering, Peking University, Beijing,
China. (e-mail: zhiyong.sun@pku.edu.cn, sun.zhiyong.cn@gmail.com).}
\thanks{G. Chen is with the Department of Electrical Engineering, City
University of Hong Kong, Hong Kong SAR, China (e-mail: eegchen@cityu.edu.hk).}}

\markboth{Journal of \LaTeX\ Class Files,~Vol.~18, No.~9, September~2020}%
{How to Use the IEEEtran \LaTeX \ Templates}

\maketitle

\begin{abstract}
This paper investigates data-driven cooperative output regulation for continuous-time multi-agent systems with unknown network topology. 
Unlike existing studies that typically assume a known network topology to directly compute controller parameters, a novel approach is proposed that allows for the computation of the parameter without prior knowledge of the topology. A lower bound on the minimum non-zero eigenvalue of the Laplacian matrix is estimated using only edge weight bounds, enabling the output regulation controller design to be independent of global network information. Additionally, the common need for state derivative measurements is eliminated, reducing the amount of data requirements. Furthermore, necessary and sufficient conditions are established to ensure that the data are informative for cooperative output regulation, leading to the design of a distributed output regulation controller. For the case with noisy data, the bound of the output error is provided, which is positively correlated with the noise bound, and a distributed controller is constructed for the approximate cooperative output regulation. Finally, the effectiveness of the proposed methods is verified through numerical simulations.
\end{abstract}

\begin{IEEEkeywords}
Data-driven control, continuous-time multi-agent system, unknown network topology, cooperative output regulation, orthogonal polynomial basis. 
\end{IEEEkeywords}

\section{Introduction}
\IEEEPARstart{I}{n} recent years, cooperative control of multi-agent systems has been a focus of research, including synchronization control\cite{sun2021distributed,wen2020coordination,zhou2021terminal}, formation control\cite{sun2016exponential,xu2023adaptive}, cooperative output regulation\cite{cheng2019cooperative,liu2017cooperative}, and so on.    
The cooperative output regulation problem extends the traditional output regulation problem by requiring distributed control laws that ensure the endosystem outputs track the reference signals and reject disturbances from the exosystem. 
Initially, the problem was investigated in the context of static communication networks in \cite{su2011cooperative}. Subsequently, it was extended to switching communication networks in \cite{su2012cooperative1}. The study in \cite{huang2013cooperative1} addressed this problem using stabilizing $H_{\infty}$ controllers and internal models. An adaptive distributed observer was proposed in \cite{huang2016cooperative,liu2018adaptive} to solve the cooperative output regulation problem for discrete-time linear multi-agent systems. Event-triggering strategies were employed in \cite{hu2017cooperative} to address the cooperative output regulation issue. However, these existing approaches typically rely on accurate system models, which are challenging to obtain in practical scenarios.

A data-driven approach can solve control problems using measured data without precise model information. This method designs controllers based on data instead of precise model knowledge. Data-driven control methods are generally classified into two categories: indirect methods and direct methods. The indirect methods begin with model identification from data, followed by controller design and stability analysis based on the identified models  \cite{van2012subspace}. The direct methods, in contrast, design controllers and conduct theoretical analyses directly for unknown systems using collected data, without performing explicit model identification \cite{8933093,van2020data}. 

Data-driven approaches have been used to solve the output regulation problem. The data-driven output regulation problem for discrete-time systems was examined in \cite{trentelman2021informativity}. The robust data-driven output regulation control was studied in \cite{zhu2024data}. In many real-world applications, the single-system output regulation cannot address the challenges posed by networked systems. Therefore, investigating the data-driven cooperative output regulation problem is necessary to address these practical challenges effectively. The data-driven cooperative output regulation problem was studied using the indirect method in \cite{liang2024data}, which requires model identification using available data. In \cite{xie2023data}, the problem was addressed by adaptive dynamic programming. These studies primarily focus on discrete-time systems that cannot fully apply to physical systems like circuits, mechanical systems, and fluid dynamics, which are modeled as continuous-time systems. Therefore, it is necessary to investigate the data-driven cooperative output regulation problem for continuous-time systems. Additionally, these works focus primarily on unknown dynamics, assuming that the network topology is known. This assumption may not always hold in practical scenarios due to the uncertainties in the communication framework. Hence, an open problem remains to explore the cooperative control of multi-agent systems with an unknown network topology.

This paper addresses the data-driven cooperative output regulation problem for continuous-time systems with unknown network topology. The main contributions of this paper are three-fold:
\begin{itemize}
    \item The direct data-driven cooperative output regulation problem for continuous-time multi-agent systems is investigated. Compared with the recent works \cite{liang2024data} and \cite{xie2023data}, which focus on the data-driven cooperative output regulation problem of discrete-time systems, this study addresses continuous-time systems, thus filling a gap in the current studies of the concerned problem.
    \item The cooperative output regulation problem of multi-agent systems with unknown network topology is solved, complementing the studies of \cite{xie2023data} and \cite{jiao2021data}, which assume known topologies. The lower bound of the minimum non-zero eigenvalue of the Laplacian matrix is estimated solely based on the bounds of the available edge weights, thereby reducing the reliance on global information and avoiding the use of topology information.
    \item The data-driven approximate cooperative output regulation problem is examined in the presence of noisy data. By minimizing the norm of the error in the regulator equations, an approximate solution is derived, in which the convergence bound is positively correlated with the noise bound, facilitating approximate synchronization of all followers with leader's outputs. Additionally, a distributed controller is designed based on noisy data.
\end{itemize}

The rest of this paper is structured as follows. Section \uppercase\expandafter{\romannumeral2} introduces some preliminaries. The problem formulation is presented in Section \uppercase\expandafter{\romannumeral3}. The main results are derived and analyzed in Section \uppercase\expandafter{\romannumeral4}. Simulation examples are provided in Section \uppercase\expandafter{\romannumeral5}. Finally, Section \uppercase\expandafter{\romannumeral6} concludes the paper.

\section{PRELIMINARIES}

\subsection{Notation}
Let $\mathbb{R}$ denote the field of real numbers, and $\mathbb{N}$ denote the set of natural numbers. For a matrix $A$, $\sigma(A)$ represents the set of all eigenvalues of $A$, $\max \{\mathrm{Re}(A)\}$ denotes the real part of the largest eigenvalue
of $A$, $\min \{\mathrm{Re}(A)\}$ denotes the real part of the minimum eigenvalue
of $A$, $\|A\|$ denotes the 2-norm of $A$, and $ \mathrm{blockdiag}(A_{1},A_{2},\cdots,A_{n})$ denotes a partitioned diagonal matrix with diagonal matrices $A_{1},A_{2},\cdots,A_{n}$. Let $A\otimes B$ be the Kronecker product of matrices $A$ and $B$, $\mathrm{col}(x_{1},x_{2})$ be the column vector $[x_{1}^{T},x_{2}^{T}]^{T}$,  $\mathrm{Ker} A$ be the kernel of matrix $A$, $\mathrm{Im} A$ be the image or range of matrix $A$, and $A^{+}$ be the right inverse matrix of $A$, which satisfies $AA^{+}=I$. Matrices are assumed to be compatible with algebraic operations if their dimensions are not explicitly indicated. 

\subsection{Graph Theory}
A directed weighted graph is expressed as $\mathcal{G}=\{\mathcal{V}, \mathcal{E}, \mathcal{A} \}$, where $\mathcal{V}=\{\mathbf{v}_{0},\mathbf{v}_{1},...,\mathbf{v}_{N} \}$ is the set of nodes, $\mathcal{E}\subseteq \mathcal{V} \times \mathcal{V}$ is the set of directed edges $(\mathbf{v}_{i}, \mathbf{v}_{j})$, and $\mathcal{A}=\{a_{ij}\}$ is the adjacency matrix with nonnegative entries $a_{ij}$. The node $\mathbf{v}_{0}$ is designated as the leader, while the remaining nodes represent the followers. Define $a_{ij}>0$ if and only if $(\mathbf{v}_{j}, \mathbf{v}_{i}) \subseteq \mathcal{E}$, i.e., agent $i$ can receive information from agent $j$; otherwise, $a_{ij}=0$. A directed path from node $\mathbf{v}_0$ to node $\mathbf{v}_p$ consists of a sequence of edges $(\mathbf{v}_{k},\mathbf{v}_{k+1})$, $k=0,...,p-1$. If $(\mathbf{v}_{i}, \mathbf{v}_{j}) \in \mathcal{E}$ and $(\mathbf{v}_{j}, \mathbf{v}_{i}) \in \mathcal{E}$, with $a_{ij}=a_{ji}$ for all $(i,j)$, then the graph is termed a weighted undirected graph. A directed graph has a directed spanning tree if there exists a root node with directed paths to all other nodes without loops. Given a graph $\mathcal{G}$,  the degree matrix $\mathcal{D}$ is defined as
 $\mathcal{D}=\mathrm{diag}\{d_{1},...,d_{N}\}$, where each $d_{i}=\sum_{j=1}^{N} a_{ij}$. The Laplacian matrix $\mathcal{L}$ of $\mathcal{G}$ is given by $\mathcal{L}=\mathcal{D}-\mathcal{A}$, which can be expressed as
\begin{equation}
   \label{d0}
 \mathcal{L}= \left[
 \begin{array}{c|c}
    0 & 0_{1\times N}\\ \hline
    -[a_{10},...,a_{N0} ]^{T} & H
     \end{array}
    \right],
\end{equation}
where the matrix $H$ represents the graph among the followers.

\subsection{Orthogonal Polynomial Basis}

Let $\mathbb{I}=(t_{0},t_{1})$ denote an interval, where $t_{0}, t_{1} \in \mathbb{R}$. The space of square-integrable real-valued functions defined on $\mathbb{I}$ is denoted by $\mathcal{L}_{2}(\mathbb{I},\mathbb{R})$, equipped with the standard inner product $\langle f(t) , g(t) \rangle_{w}$, which is defined by $\langle f(t) , g(t) \rangle _{w} = \int_{\mathbb{I}}f(t)g(t)w(t)dt$ on $\mathcal{L}_{2}(\mathbb{I},\mathbb{R})$. The space of real square-summable sequences is denoted by $l_{2}(\mathbb{N},\mathbb{R})$. These notation and definitions naturally extend to vector-valued functions.

A basis $\{b_{k}\}$ is orthogonal if element $b_{i}$ and element $b_{j}$ are mutually orthogonal  with respect to the weight $w(t)$ on $\mathbb{I}$, i.e.,  $\langle b_{i}(t) , b_{j}(t) \rangle _{w} = \int_{\mathbb{I}}b_{i}(t)b_{j}(t)w(t)dt=0$. 
If each element $b_{k}$ of the basis is a polynomial, then $\{ b_{k}\}_{k \in \mathbb{N}}$ is called an Orthogonal Polynomial Base (OPB). For example, the Chebyshev polynomials on $\mathbb{I}=(-1,1)$ are defined as follows:
\begin{align*}
C_{0}(t)=&1, C_{1}(t)=t,\\
C_{n+1}(t)=&2tC_{n}(t)-C_{n-1}(t),
\end{align*}
where the weight function $w(t)=\frac{1}{\sqrt{1-t^{2}}}$.  Let $\tilde{f}_{k}$ denote the $k$-th coefficient of $f$ in the orthogonal basis representation. 
Define the bijective mapping between $\mathcal{L}_{2}(\mathbb{I},\mathbb{R})$ and $l _{2}(\mathbb{I},\mathbb{R})$ as follows:
 \begin{align*}
  \Omega : \mathcal{L}_{2}(\mathbb{I},\mathbb{R}) &\rightarrow  l_{2}(\mathbb{N},\mathbb{R}),\\
  f &\rightarrow \tilde{f}.
  \end{align*}
 The interval $\mathbb{I}$ will be transformed with $t \rightarrow \frac{2t-(t_{1}+t_{0})}{t_{1}-t_{0}}$ later when it is deeded. 
 Let $\{b_{k}\}_{k \in \mathbb{N}} $ be a set of OPBs. Then, $f$ can be expressed as
$$  f= \sum_{k=0}^{\infty} \tilde{f}_{k} \tilde{b}_{k}= \begin{bmatrix} \tilde{f}_{0}&\tilde{f}_{1}& \cdots\,  \end{bmatrix}  \begin{bmatrix}  b_{0}& b_{1}& \cdots\, \end{bmatrix}^{T}=\tilde{f}\mathrm{b},$$
where $\tilde{f}_{i}$ denotes the coefficient vector of $f$, $\tilde{f}=\begin{bmatrix} \tilde{f}_{0}&\tilde{f}_{1}& \cdots\, \end{bmatrix}$ is the coefficient matrix of $\{ b_{k}  \}_{N \in \mathbb{N} }$, and $ \mathrm{b}= \begin{bmatrix}  b_{0}& b_{1}& \cdots\, \end{bmatrix} ^{T}$ is the OPB matrix. 
The truncation of $f$ to degree $N$ is defined by
$$
 \Omega_{N}(f)=\sum_{k=0}^{N} \tilde{f}_{k} \tilde{b}_{k},
$$
and the error of truncation is 
$$
 \delta=f-\Omega_{N}(f)=\sum_{k=N+1}^{\infty} \tilde{f}_{k} \tilde{b}_{k}.
$$Suppose $f$ is differentiable. Then 
$$
 \frac{d}{dt}f=\sum_{k=0}^{\infty} \tilde{f}_{k} \frac{d}{dt}\tilde{b}_{k}
 =\sum_{k=0}^{\infty} \tilde{f}_{k} \sum_{j=0}^{\infty}d_{k,j}b_{k}.
$$
Let $\mathcal{D}_{\mathbin{b}}=[d_{k,j}]_{k,j \in \mathbb{N}}$. Then, $\frac{d}{dt}f$
can be formulated as follows:
$$ \frac{d}{dt}f=\tilde{f} \frac{d}{dt}\mathrm{b}=\tilde{f} \mathcal{D}_{\mathbin{b}} \mathrm{b}= \begin{bmatrix} \tilde{f}_{0}^{(1)}&\tilde{f}_{1}^{(1)}& ... \end{bmatrix}\mathrm{b} ,$$
where $\tilde{f}_{k}^{(1)}$ is the $k$-th coefficient vector of $\frac{d}{dt}f$ in the orthogonal basis representation. Partition $\mathcal{D}_{\mathbin{b}}$ as follows:
\begin{equation}
\label{d18}
 \mathcal{D}_{\mathbin{b}}= \begin{bmatrix} \mathcal{D}_{11}&\mathcal{D}_{12}\\
 \mathcal{D}_{21}&\mathcal{D}_{22}\end{bmatrix},
\end{equation}
where $\mathcal{D}_{11} \in \mathbb{R}^{(N+1) \times (N+1)}$ and  $\mathcal{D}_{22} \in \mathbb{R}^{\infty \times \infty}$. Let $\tilde{f}^{(1)}=\begin{bmatrix} \tilde{f}_{0}^{(1)}&\tilde{f}_{1}^{(1)}& ... \end{bmatrix}$ be the coefficient matrix. Then, the coefficient matrices $\tilde{f}$ and $\tilde{f} ^{(1)}$ can be partitioned as
\begin{equation}
    \label{d19}
 \tilde{f}= \begin{bmatrix} \Omega_{N}(f)& \tilde{f}^{'}
 \end{bmatrix},
 \tilde{f}^{(1)}= \begin{bmatrix} \Omega_{N}(\frac{d}{dt}f)& \tilde{f}^{(1)'}
 \end{bmatrix},
\end{equation}
where the coefficient matrices $\Omega_{N}(\frac{d}{dt}f)$ and $\frac{d}{dt}\Omega(f)$ are 
\begin{equation}
 \label{d20}
 \begin{aligned}
\Omega_{N} \left( \frac{d}{dt}f \right)=\Omega(f)\mathcal{D}_{11}+\tilde{f}^{'}\mathcal{D}_{21},
\frac{d}{dt}\Omega(f)=\Omega_{N}(f)\mathcal{D}_{11}.
\end{aligned}
\end{equation}

The following conditions determine whether the orders of differentiation and truncation are interchangeable in (\ref{d20}):
\begin{itemize}
  \item  If the differentiation and the truncation cannot commute, then $\Omega_{N}(\frac{d}{dt}f) \neq \frac{d}{dt}\Omega(f)$, which implies $\tilde{f}^{'}\mathcal{D}_{21} \neq 0$. 
    \item If the differentiation and the truncation can commute, then $\Omega_{N}(\frac{d}{dt}f) = \frac{d}{dt}\Omega(f)$, which implies $\tilde{f}^{'}\mathcal{D}_{21}=0$.
\end{itemize}

The following lemma shows that $\tilde{f}^{'}\mathcal{D}_{21}$ is bounded.

\begin{lemma} (see \label{l3} \cite{trefethen2019approximation})
Let $\{C_{k}\}_{k \in \mathbb{N}}$ be the Chebyshev basis, and $f \in \mathcal{L}_{2}(\mathbb{I},\mathbb{R})$ with $N \in \mathbb{N}$ and $N \geq 1$.
Suppose that $f$, $f^{(1)}$ are absolutely continuous, and that $V(f^{(2)})=\left \| \frac{d^{2}f}{d^{2}t} \right \|_{1}=\int_{-1}^{1}\left| \frac{d^{2}f(\tau)}{d^{2}t}\right|d\tau < \infty$. Partition $\mathcal{D}$ as in (\ref{d18}) and denote $\tilde{f}$, $\tilde{f}^{(1)}$ as in (\ref{d19}). Then, $\tilde{f^{'}}D_{21}$ satisfies
$\|\tilde{f^{'}}D_{21}\|_{2}\leq  \frac{2V(f^{(2)})}{\sqrt{\pi} (N-1)}$.
\end{lemma}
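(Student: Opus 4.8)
The plan is to read $\tilde{f}^{'}\mathcal{D}_{21}$ as the coefficient vector of the commutation defect $\Omega_{N}(\frac{d}{dt}f)-\frac{d}{dt}\Omega_{N}(f)$ appearing in (\ref{d20}), a polynomial of degree at most $N$, and then to bound its entries directly. Using the explicit Chebyshev differentiation rule $\frac{d}{dt}C_{k}=2k\sum_{0\le j<k,\;k-j\ \mathrm{odd}}C_{j}$ (with the $j=0$ term weighted by $\tfrac12$), the $j$-th entry of $\tilde{f}^{'}\mathcal{D}_{21}$ is the tail sum $\sum_{k>N,\;k-j\ \mathrm{odd}}2k\,\tilde{f}_{k}$. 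A key structural observation is that this entry depends on $j$ only through the parity of $j$, since the lower summation limit is fixed at $N$ rather than at $j$; hence $\tilde{f}^{'}\mathcal{D}_{21}$ takes only two distinct values, $S_{\mathrm{odd}}$ and $S_{\mathrm{even}}$, across its $N+1$ components.

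The decay of the Chebyshev coefficients is the essential input. Since $f$ is absolutely continuous and $f^{(1)}$ is of bounded variation with total variation $V(f^{(2)})=\|f''\|_{1}$, the classical coefficient estimate from the cited source yields $|\tilde{f}_{k}|\le \frac{2V(f^{(2)})}{\pi k(k-1)}$, and applied to $f^{(1)}$ itself it gives $|\tilde{f}^{(1)}_{m}|\le \frac{2V(f^{(2)})}{\pi m}$ for the coefficients of the derivative. The difficulty I anticipate is that the tail sums defining $S_{\mathrm{odd}}$ and $S_{\mathrm{even}}$ are only conditionally convergent under these bounds: estimating $\sum_{k>N}2k|\tilde{f}_{k}|$ term by term produces a divergent harmonic-type series, so absolute values cannot be passed inside. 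I would resolve this through the three-term derivative recurrence $2k\,\tilde{f}_{k}=\tilde{f}^{(1)}_{k-1}-\tilde{f}^{(1)}_{k+1}$, under which each parity-restricted tail sum telescopes to a single derivative coefficient, $S_{\mathrm{odd}},S_{\mathrm{even}}\in\{\tilde{f}^{(1)}_{N},\tilde{f}^{(1)}_{N+1}\}$. The bounded-variation decay of $f^{(1)}$ then bounds each of these by $\frac{2V(f^{(2)})}{\pi N}$, converting the problematic non-absolute sum into a clean per-entry estimate.

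For the final bound I would assemble these per-entry estimates. Each entry is at most $\frac{2V(f^{(2)})}{\pi N}\le\frac{2V(f^{(2)})}{\sqrt{\pi}(N-1)}$, so once it is settled that the dominant, parity-constant entry controls $\|\tilde{f}^{'}\mathcal{D}_{21}\|_{2}$ — absorbing the normalization of the Chebyshev weight $\langle\cdot,\cdot\rangle_{w}$ on $(-1,1)$, which is where the $\sqrt{\pi}$ enters — the claimed bound follows. A point to handle with care is the aggregation convention for $\|\cdot\|_{2}$ over the $N+1$ components: under a plain Euclidean sum one must additionally account for the $\approx N/2$ repeated entries in each parity class, which affects the dependence on $N$. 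I expect the conditional-convergence step of the previous paragraph, namely turning a non-absolutely-convergent tail sum into a single bounded coefficient via the recurrence, to be the genuine obstacle, with the remaining assembly being routine once that cancellation is in hand.
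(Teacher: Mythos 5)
Your structural analysis is correct, and it is indeed the right route to this bound (the paper itself gives no proof---the lemma is quoted from the literature---so I assess your argument on its own merits). The $j$-th entry of $\tilde{f}'\mathcal{D}_{21}$ is the parity-restricted tail $\sum_{k>N,\ k-j\ \mathrm{odd}}2k\tilde{f}_{k}$, and via the recurrence $2k\tilde{f}_{k}=\tilde{f}^{(1)}_{k-1}-\tilde{f}^{(1)}_{k+1}$ these tails telescope, so the defect vector takes only the two values $\tilde{f}^{(1)}_{N}$ and $\tilde{f}^{(1)}_{N+1}$, repeated in alternating positions. That part of your proposal, including the resolution of the conditional-convergence issue, is sound.

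The genuine gap is the step you flagged and then set aside: the aggregation. A vector with $\lceil(N+1)/2\rceil$ entries equal to $\tilde{f}^{(1)}_{N}$ and $\lfloor(N+1)/2\rfloor$ entries equal to $\tilde{f}^{(1)}_{N+1}$ has Euclidean norm about $\sqrt{(N+1)/2}\,\bigl(|\tilde{f}^{(1)}_{N}|^{2}+|\tilde{f}^{(1)}_{N+1}|^{2}\bigr)^{1/2}$; with your per-entry estimate $|\tilde{f}^{(1)}_{m}|\leq 2V/(\pi m)$ this yields only $O(V/\sqrt{N})$, not $O(V/N)$, and no renormalization by the Chebyshev weight can absorb a factor of $\sqrt{N}$ --- the $\sqrt{\pi}$ in the statement has nothing to do with this. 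Worse, the gap is not patchable under your reading of the hypothesis: if $V(f^{(2)})$ really meant $\|f''\|_{1}$ (the paper's literal, but mistranscribed, definition --- this is the variation of $f^{(1)}$, not of $f^{(2)}$), the claimed bound is false. Take $f=C_{N+1}$: the defect then has $\approx(N+1)/2$ entries equal to $2(N+1)$, hence norm $\sim\sqrt{2}\,N^{3/2}$, while $\|C_{N+1}''\|_{1}=\Theta(N^{2}\log N)$ makes the right-hand side only $\Theta(N\log N)$, which is smaller for large $N$. The lemma is true precisely when $V(f^{(2)})$ is the total variation of $f^{(2)}$ itself (Trefethen's coefficient theorem with $\nu=2$; for smooth $f$ this is $\|f^{(3)}\|_{1}$, one derivative more than you invoked). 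Under that reading one has $|\tilde{f}^{(1)}_{m}|\leq\frac{2V}{\pi m(m-1)}$, so each defect entry is $O(V/N^{2})$, and your telescoping combined with the $\sqrt{N+1}$ multiplicity factor gives $\|\tilde{f}'\mathcal{D}_{21}\|_{2}\leq\frac{2V\sqrt{N+1}}{\pi N(N-1)}\leq\frac{2V}{\sqrt{\pi}(N-1)}$, which is exactly the statement. So keep your telescoping step, but upgrade the coefficient-decay input by one order (and note that the hypothesis must be read as bounded variation of $f''$); as written, your final assembly does not close, and under your interpretation of $V(f^{(2)})$ no correct assembly exists.
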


\section{PROBLEM FORMULATION}
Consider a leader-follower multi-agent system (MAS) composed of one leader and $N$ heterogeneous followers. The dynamics of follower $i \in \{1,...,N\}$ is
\begin{equation}
\label{d1.1}
\begin{aligned}
\dot{x}_{i}(t)&=\bar{A}_{i}x_{i}(t)+\bar{B}_{i}u_{i}(t)+\bar{E}_{i}d(t),\\
 y_{i}(t)&=\bar{C}_{i}x_{i}(t)+\bar{D}_{i}u_{i}(t)+\bar{F}_{i}d(t),
 \end{aligned}
\end{equation}
where $x_{i}(t)\in \mathbb{R}^{n_{i}}$ is the state, $u_{i}(t) \in \mathbb{R}^{m_{i}}$ is the input, $y_{i}(t)\in \mathbb{R}^{p}$ is the output, and $d(t) \in \mathbb{R}^{q_{1}}$ is the disturbance, which is generated by $\dot d(t)=A_{0d}d(t)$.
The dynamics of the leader is
\begin{equation}
\label{d2.1}
\begin{aligned}
\dot x_{0}(t)&=A_{0r}x_{0}(t),\\
 y_{0}(t)&=C_{0}x_{0}(t),\\
 \end{aligned}
\end{equation}
where $x_{0}(t) \in \mathbb{R}^{q_{2}}$ is the state of the leader, and $y_{0}(t) \in \mathbb{R}^{p}$ is the output of the leader.

Let $v(t)=\begin{bmatrix} x_{0}(t) \\ d(t)\end{bmatrix}$, $S= \begin{bmatrix} A_{0r} & 0 \\ 0 & A_{0d}\end{bmatrix}$, the exosystem can be expressed as
\begin{equation}
\label{d2}
 \dot{v}(t)=Sv(t),
\end{equation}
where $S\in \mathbb{R}^{q\times q}$ is a known matrix, and $v(t)\in \mathbb{R}^{q}$ is the tracking signals and/or disturbances.

The dynamics of the followers in (\ref{d1.1}) can be rewritten as
\begin{equation}
\label{d1}
\begin{aligned}
 \dot{x}_{i}(t)&=\bar{A}_{i}x_{i}(t)+\bar{B}_{i}u_{i}(t)+E_{i}v(t),\\
 e_{i}(t)&=y_{i}(t)-y_{0}(t)\\
 &=\bar{C}_{i}x_{i}(t)+\bar{D}_{i}u_{i}(t)+F_{i}v(t),
\end{aligned}
\end{equation}
where $e_{i}(t)\in \mathbb{R}^{p}$ is the tracking error, and $\begin{bmatrix} E_{i}\\ F_{i} \end{bmatrix}=\begin{bmatrix} 0& \bar{E}_{i}\\ -C_{0}& \bar{F}_{i}\end{bmatrix}$.

 The system matrices $\bar{A}_{i}\in \mathbb{R}^{n_{i}\times n_{i}}, \bar{B}_{i}\in \mathbb{R}^{n_{i}\times m_{i}}, \bar{C}_{i}\in \mathbb{R}^{p\times n_{i}}, \bar{D}_{i}\in \mathbb{R}^{p\times m_{i}}$ are unknown real matrices. The matrices $E_{i}\in \mathbb{R}^{n_{i}\times q}$ and $F_{i}\in \mathbb{R}^{p\times q}$ determine how disturbances and reference signals enter the system and are assumed to be known.

The following assumptions are made regarding the multi-agent systems.

\begin{assumption} \label{a1} The pairs $(\bar{A}_{i},\bar{B}_{i})$ are stabilizable. \end{assumption}

\begin{assumption} \label{a2} $S$ has no eigenvalues with negative real parts. \end{assumption}

\begin{assumption} \label{a3} The directed weighted graph $\mathcal{G}$ is unknown, but it contains a directed spanning tree with the leader node as the root.  \end{assumption}

\begin{assumption} \label{a4} There exist positive numbers $\varepsilon_{1}$ and $\varepsilon_{2}$, such that the nonzero elements of the matrix $\mathcal{L}$ satisfy  $\varepsilon_{1} \leq |\mathcal{L}_{ij} |\leq \varepsilon_{2} $ whenever $\mathcal{L}_{ij}\neq 0$. \end{assumption}

\begin{remark}
Assumptions \ref{a1}, \ref{a2} and \ref{a3} are common assumptions for the cooperative output regulation problem. Assumption~\ref{a4} ensures that the communication strength between different agents is bounded.
\end{remark}

Consider the following distributed controller:
\begin{equation}
\label{d3}
\begin{aligned}
 u_{i}(t)&=K_{1i}x_{i}(t)+K_{2i}\eta_{i}(t),\\
 \dot{\eta}_{i}(t)&=S\eta_{i}(t)+\mu\left[\sum_{j\in N_{i}} a_{ij}(\eta_{j}(t)-\eta_{i}(t))+a_{i0}(v(t)-\eta_{i}(t))\right],
\end{aligned}
\end{equation}
where $\eta_{i}(t)$ represents the state of the $i$-th controller, the gain matrices $K_{1i}$, $K_{2i}$, and the constant $\mu$ are to be designed.

Substituting Equation (\ref{d3}) into Equation~\eqref{d1}, the closed-loop system for the $i$-th agent is obtained as follows:
\begin{equation}
\label{d4}
\begin{aligned}
\dot{x}_{i}(t)&=(\bar{A}_{i}+\bar{B}_{i}K_{1i})x_{i}(t)+\bar{B}_{i}K_{2i}\eta_{i}(t)+E_{i}v(t),\\
\dot{\eta}_{i}(t)&=S\eta_{i}(t)+\mu\left[\sum_{j\in N_{i}} a_{ij}(\eta_{j}(t)-\eta_{i}(t))+a_{i0}(v(t)-\eta_{i}(t))\right],\\
e_{i}(t)&=(\bar{C}_{i}+\bar{D}_{i}K_{1i})x_{i}(t)+\bar{D}_{i}K_{2i}\eta_{i}(t)+F_{i}v(t).
\end{aligned}
\end{equation}

Let $\bar{A}=\mathrm{blockdiag}(\bar{A}_{1}, ...,\bar{A}_{N})$, $\bar{B}=\mathrm{blockdiag}(\bar{B}_{1}, ...,\bar{B}_{N})$, $\bar{C}=\mathrm{blockdiag}(\bar{C}{1}, ...,\bar{C}_{N})$, $\bar{D}=\mathrm{blockdiag}(\bar{D}_{1}, ...,\bar{D}_{N})$, $E=\mathrm{blockdiag}(E_{1}, ...,E_{N})$, $F=\mathrm{blockdiag}(F_{1}, ...,F_{N})$, $K_{1}=\mathrm{blockdiag}(K_{11}, ...,K_{1N})$, $K_{2}=\mathrm{blockdiag}(K_{21}, ...,K_{2N})$, $\hat{v}(t)=1_{N} \otimes v(t)$, $\eta(t)=\mathrm{col}(\eta_{1}(t), ...,\eta_{N}(t))$, $x(t)=\mathrm{col}(x_{1}(t), ...,x_{N}(t))$, $e(t)=\mathrm{col}(e_{1}(t), ...,e_{N}(t))$. Then, the compact form of Equation~(\ref{d4}) can be expressed as follows:
\begin{equation}
\label{d5}
\begin{aligned}
\dot{x}(t)&=(\bar{A}+\bar{B}K_{1})x(t)+\bar{B}K_{2}\eta(t)+E\hat{v}(t),\\
\dot{\eta}(t)&=[(I_{N}\otimes S)-\mu (H\otimes I_{q})]\eta(t)+\mu(H\otimes I_{q})\hat{v}(t),\\
e(t)&=(\bar{C}+\bar{D}K_{1})x(t)+\bar{D}K_{2}\eta(t)+F\hat{v}(t).
\end{aligned}
\end{equation}

Let $x_{c}(t)=\mathrm{col}(x(t), \eta(t))$. Equation (\ref{d5}) can be transformed to the following form:
\begin{equation}
\label{d6}
\begin{aligned}
\dot{x}_{c}(t)=&\begin{bmatrix} \bar{A}+\bar{B}K_{1} & \bar{B}K_{2}\\ 0 & (I_{N}\otimes S)-\mu (H\otimes I_{q}) \end{bmatrix}x_{c}(t) \\ &+\begin{bmatrix} E \\ \mu(H\otimes I_{q}) \end{bmatrix} \tilde{v}(t),\\
e(t)=&\begin{bmatrix} \bar{C}+\bar{D}K_{1} & \bar{D}K_{2}  \end{bmatrix}x_{c}(t)+F\tilde{v}(t).
\end{aligned}
\end{equation}

Let $\bar{A}_{C}=\begin{bmatrix} \bar{A}+\bar{B}K_{1} & \bar{B}K_{2}\\ 0 & (I_{N}\otimes S)-\mu (H\otimes I_{q}) \end{bmatrix}$, $B_{C}=\begin{bmatrix} E \\ \mu(H\otimes I_{q}) \end{bmatrix}$, $\bar{C}_{C}=\begin{bmatrix} \bar{C}+\bar{D}K_{1} & \bar{D}K_{2}  \end{bmatrix}$, and $D_{C}=F$. Equation~(\ref{d6}) can be compactly expressed as follows:
\begin{align*}
\dot{x}_{c}(t)&=\bar{A}_{C}x_{c}(t)+B_{C}\tilde{v}(t),\\
e(t)&=\bar{C}_{C}x_{c}(t)+D_{C}\tilde{v}(t).
\end{align*}

The cooperative output regulation is defined as follows.
\begin{definition} (see \cite{su2011cooperative})
Consider the systems given by \eqref{d2} and \eqref{d1}, along with the graph $\mathcal{G}$. The cooperative output regulation involves designing a distributed controller such that the following conditions are satisfied:
\begin{itemize}
    \item The matrix $\bar A_{C}$ is Hurwitz.
    \item The tracking error $e(t)$ satisfies $\lim \limits_{t \rightarrow + \infty} e(t)=0$.
\end{itemize}
\end{definition}

The following lemma solves the cooperative output regulation problem when the matrices $A_{i},B_{i},C_{i},D_{i}$ are known.

\begin{lemma} \label{l6} 
Under Assumptions \ref{a1}-\ref{a4}, suppose that $\mu$ is sufficiently large. The cooperative output regulation problem is solvable using the controller (\ref{d3}), if either of the following two conditions is satisfied:
\begin{itemize}
\item 
The transmission zero condition is satisfied, which means
\begin{equation} \label{d28.1}
\mathrm{rank}\begin{bmatrix}
   \bar A_{i}-\lambda_{S} I_{i} & \bar B_{i} \\
   \bar C_{i} & \bar D_{i}
\end{bmatrix}=n_{i}+p_{i},
\end{equation}
for all $\lambda_{S}\in\sigma(S)$, where $\sigma(S)$ denotes the spectrum of $S$ and  $i \in \{1,...,N\}$.
\item The following regulator equations have a solution $(\Pi_{i},\Gamma_{i})$:
\begin{equation}
\label{d34.1}
\begin{aligned}
  \Pi_{i} S&=\bar A_{i} \Pi_{i}+ \bar B_{i} \Gamma_{i}+E_{i},\\
  0&=\bar C_{i}\Pi_{i}+\bar D_{i} \Gamma_{i}+F_{i},\\
\end{aligned}
\end{equation}
where $\Gamma_{i}=K_{1i} \Pi_{i}+K_{2i}$.
\end{itemize}
\end{lemma}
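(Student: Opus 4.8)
The plan is to verify the two defining properties of cooperative output regulation in turn: that the gains can be chosen to render $\bar{A}_C$ Hurwitz, and that, given a solution of the regulator equations, the tracking error vanishes. Since $\bar{A}_C$ is block upper-triangular, its spectrum is $\sigma(\bar{A}+\bar{B}K_1)\cup\sigma((I_N\otimes S)-\mu(H\otimes I_q))$. Under Assumption~\ref{a1} each pair $(\bar{A}_i,\bar{B}_i)$ is stabilizable, so I would choose $K_{1i}$ placing $\bar{A}_i+\bar{B}_i K_{1i}$ in the open left-half plane, making the block-diagonal $\bar{A}+\bar{B}K_1$ Hurwitz. For the second block I would invoke the standard consequence of Assumption~\ref{a3}, that a directed spanning tree rooted at the leader forces every eigenvalue of $H$ to have strictly positive real part. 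Writing $H$ in (real) Jordan form and applying the similarity $T\otimes I_q$, the matrix $(I_N\otimes S)-\mu(H\otimes I_q)$ becomes block-triangular with diagonal blocks $S-\mu\lambda_k(H)I_q$, whose eigenvalues are $\lambda_S-\mu\lambda_k(H)$ for $\lambda_S\in\sigma(S)$; since $\mathrm{Re}(\lambda_k(H))>0$, choosing $\mu$ sufficiently large drives all of these into the open left-half plane, so $\bar{A}_C$ is Hurwitz.

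Next I would reduce the closed-loop regulator equations $X_c(I_N\otimes S)=\bar{A}_C X_c+B_C$ and $0=\bar{C}_C X_c+D_C$ to the per-agent equations \eqref{d34.1}. Partitioning $X_c=\mathrm{col}(X,Z)$, the lower block reads $Z(I_N\otimes S)=((I_N\otimes S)-\mu(H\otimes I_q))Z+\mu(H\otimes I_q)$, which is solved by $Z=I_{Nq}$, expressing that the distributed observer reproduces the exosystem state. Substituting $Z=I_{Nq}$ and the block-diagonal ansatz $X=\mathrm{blockdiag}(\Pi_1,\dots,\Pi_N)$, the upper block and the output equation decouple across agents, and using $\Gamma_i=K_{1i}\Pi_i+K_{2i}$ they collapse exactly onto \eqref{d34.1}. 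Thus any solution $(\Pi_i,\Gamma_i)$ of \eqref{d34.1} assembles into a solution $X_c$ of the closed-loop regulator equations, which is Condition~2.

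With such an $X_c$ available, I would introduce the coordinate $\bar{x}_c=x_c-X_c\tilde{v}$; differentiating and substituting the regulator equations yields $\dot{\bar{x}}_c=\bar{A}_C\bar{x}_c$, so $\bar{x}_c\to 0$ by the Hurwitz property, and $e=\bar{C}_C\bar{x}_c+(\bar{C}_C X_c+D_C)\tilde{v}=\bar{C}_C\bar{x}_c\to 0$, giving regulation. To subsume Condition~1, I would appeal to the classical Rosenbrock/Sylvester fact that the transmission-zero condition \eqref{d28.1} at every $\lambda_S\in\sigma(S)$ is precisely what makes the linear map $(\Pi_i,\Gamma_i)\mapsto(\bar{A}_i\Pi_i+\bar{B}_i\Gamma_i-\Pi_i S,\ \bar{C}_i\Pi_i+\bar{D}_i\Gamma_i)$ surjective, hence \eqref{d34.1} is solvable for the prescribed $E_i,F_i$; so Condition~1 implies Condition~2 and the same tracking argument applies.

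The main obstacle I anticipate is the spectral analysis of $(I_N\otimes S)-\mu(H\otimes I_q)$: one must couple the positivity of $\mathrm{Re}(\lambda_k(H))$, which rests on the spanning-tree assumption, with an explicit threshold on $\mu$, and handle a possibly non-diagonalizable $H$ through its Jordan blocks while verifying that the marginally stable perturbation by $S$ (Assumption~\ref{a2}) does not spoil stability. Once this eigenvalue fact and the surjectivity characterization are in place, the reduction of the regulator equations and the change-of-variables tracking argument are routine.
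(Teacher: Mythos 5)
Your proposal is correct, and in fact it supplies more than the paper does: the paper states Lemma~2 \emph{without any proof}, treating it as a known model-based result from the cooperative output regulation literature (the framework of \cite{su2011cooperative}). Your argument is the standard one that underlies that cited result, and it is sound in all three steps: (i) the block upper-triangular structure of $\bar{A}_C$ reduces Hurwitzness to stabilizing $\bar{A}+\bar{B}K_1$ (Assumption~1) and to the spectrum of $(I_N\otimes S)-\mu(H\otimes I_q)$, whose eigenvalues $\lambda_S-\mu\lambda_k(H)$ are pushed into the open left-half plane for large $\mu$ because the spanning-tree condition (Assumption~3) gives $\mathrm{Re}(\lambda_k(H))>0$ — this is exactly the similarity-transformation computation the paper itself performs later, in the proof of its Theorem~1, via the matrices $Z_1,Z_2$ in Equation~(\ref{d31.1}); (ii) the observation that $Z=I_{Nq}$ solves the lower block of the closed-loop regulator equations, so that the block-diagonal ansatz $X=\mathrm{blockdiag}(\Pi_1,\dots,\Pi_N)$ collapses them onto the per-agent equations~(\ref{d34.1}) with $K_{2i}=\Gamma_i-K_{1i}\Pi_i$, is the right reduction; and (iii) the change of variables $\bar{x}_c=x_c-X_c\tilde{v}$ together with the classical fact that the rank condition~(\ref{d28.1}) at every $\lambda_S\in\sigma(S)$ makes the Sylvester-type map $(\Pi_i,\Gamma_i)\mapsto(\Pi_iS-\bar{A}_i\Pi_i-\bar{B}_i\Gamma_i,\ \bar{C}_i\Pi_i+\bar{D}_i\Gamma_i)$ surjective correctly subsumes Condition~1 under Condition~2 (only sufficiency is needed here, which is the direction that holds for arbitrary $E_i,F_i$). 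The one caution worth recording is that Assumption~2 plays no role in your sufficiency argument — it is the standard without-loss-of-generality hypothesis ensuring the exosystem modes do not decay — so your passing remark about it "not spoiling stability" is harmless but inessential.
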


Since the matrices $\bar{A}_{i}$, $\bar{B}_{i}$, $\bar{C}_{i}$, and $\bar{D}_{i}$ are unknown,  the following lemma is introduced to represent the continuous-time multi-agent systems.

\begin{lemma} \label{l4} (see \cite{rapisarda2023orthogonal})
Let $ \{ b_{k}\}_{k \in \mathbb{N}}$  be the complete OPBs for $\mathcal{L}_{2}(\mathbb{I},\mathbb{R})$. The following statements are equivalent:
\begin{itemize}
    \item The input $u$ and the state $x$ satisfy the continuous-time system $\dot{x}=Ax+Bu$.
    \item The sequences $\tilde{x}=\{\tilde{x}\}_{k \in \mathbb{N}}$, $\tilde{x}^{(1)}=\{\tilde{x}^{(1)}\}_{k \in \mathbb{N}}$, and $\tilde{u}=\{\tilde{u}\}_{k \in \mathbb{N}},$ which correspond to the OPB representations, satisfy
$$
\tilde{x}^{(1)}\mathrm{b}=\tilde{x} \mathcal{D}_{\mathbin{b}}\mathrm{b}=A\tilde{x}\mathrm{b}+B\tilde{u}\mathrm{b},
$$
i.e.,
$$\tilde{x}^{(1)}=\tilde{x} \mathcal{D}_{\mathbin{b}}=A\tilde{x}+B\tilde{u}.$$
\end{itemize}
\end{lemma}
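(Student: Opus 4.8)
The plan is to exploit two facts already in place: the completeness of the OPB $\{b_{k}\}_{k\in\mathbb{N}}$, which makes the coordinate map $\Omega$ a bijection so that a function in $\mathcal{L}_{2}(\mathbb{I},\mathbb{R})$ vanishes if and only if all of its OPB coefficients vanish; and the differentiation identity $\frac{d}{dt}f=\tilde{f}\mathcal{D}_{\mathbin{b}}\mathrm{b}$ derived above, which says that differentiating a function is the same as right-multiplying its coefficient matrix by $\mathcal{D}_{\mathbin{b}}$. With these, the ODE and its coefficient form become two readings of a single identity, one in function space and one in coefficient space, and the equivalence reduces to passing between them.

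First I would prove that the first statement implies the second. Writing $x=\tilde{x}\mathrm{b}$ and $u=\tilde{u}\mathrm{b}$, the differentiation identity gives $\dot{x}=\tilde{x}\mathcal{D}_{\mathbin{b}}\mathrm{b}$, and since $\dot{x}=\tilde{x}^{(1)}\mathrm{b}$ by definition of the coefficient matrix of $\dot{x}$, matching coefficients yields $\tilde{x}^{(1)}=\tilde{x}\mathcal{D}_{\mathbin{b}}$; note that this holds for \emph{any} differentiable $x$, independently of the dynamics. Substituting $x$ and $u$ into $\dot{x}=Ax+Bu$ then gives $\tilde{x}^{(1)}\mathrm{b}=(A\tilde{x}+B\tilde{u})\mathrm{b}$, i.e. $\big(\tilde{x}^{(1)}-A\tilde{x}-B\tilde{u}\big)\mathrm{b}=0$. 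Completeness of $\{b_{k}\}$ forces this coefficient matrix to vanish, so $\tilde{x}^{(1)}=A\tilde{x}+B\tilde{u}$, and combined with $\tilde{x}^{(1)}=\tilde{x}\mathcal{D}_{\mathbin{b}}$ this is exactly the second statement.

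Next I would prove the reverse implication, which is essentially the same computation run backwards. Starting from $\tilde{x}^{(1)}=\tilde{x}\mathcal{D}_{\mathbin{b}}=A\tilde{x}+B\tilde{u}$ and right-multiplying by the basis column $\mathrm{b}$, the differentiation identity turns the left-hand expression $\tilde{x}\mathcal{D}_{\mathbin{b}}\mathrm{b}$ into $\dot{x}$, while the right-hand expression $(A\tilde{x}+B\tilde{u})\mathrm{b}$ becomes $Ax+Bu$. Hence $\dot{x}=Ax+Bu$, which is the first statement. No completeness argument is needed in this direction, only linearity of right-multiplication by $\mathrm{b}$.

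The step I expect to require the most care is the legitimacy of differentiating the infinite expansion term by term so that $\frac{d}{dt}(\tilde{x}\mathrm{b})=\tilde{x}\mathcal{D}_{\mathbin{b}}\mathrm{b}$ holds with the \emph{infinite} matrix $\mathcal{D}_{\mathbin{b}}$; this requires $x$ to lie in a subspace on which the series and its derivative converge suitably, in the spirit of the regularity hypotheses of Lemma \ref{l3}. The other delicate point is the uniqueness-of-coefficients step in the forward direction, which rests entirely on completeness of the OPB and hence on injectivity of $\Omega$. Once these two analytic facts are secured, the remainder is routine algebraic bookkeeping with the coefficient matrices $\tilde{x}$, $\tilde{x}^{(1)}$, $\tilde{u}$ and the blocks of $\mathcal{D}_{\mathbin{b}}$.
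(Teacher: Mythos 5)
The paper itself offers no proof of this lemma: it is imported verbatim from \cite{rapisarda2023orthogonal}, and the surrounding preliminaries (the bijection $\Omega$ and the identity $\frac{d}{dt}f=\tilde{f}\mathcal{D}_{\mathbin{b}}\mathrm{b}$) are exactly the machinery the cited source uses. Your argument is correct and is essentially that standard argument: translate the ODE into coefficient space via the differentiation operator $\mathcal{D}_{\mathbin{b}}$ and uniqueness of OPB coefficients, and translate back by right-multiplication with $\mathrm{b}$; your flagged caveats (term-by-term differentiability and injectivity of $\Omega$) are precisely the regularity hypotheses under which the cited lemma is stated, so nothing is missing.
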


\begin{remark}
The OPB method does not require measuring the derivative of the state, as it can be computed from the state. Consequently, this method reduces the amount of data that are needed to be measured.
\end{remark}

According to Lemma \ref{l4} and Equation \eqref{d20}, Equation~(\ref{d1}) can be expressed as follows:
\begin{subequations}
\label{d9}
\begin{align}
X_{i}\mathcal{D}_{11}&=A_{i}X_{i}+B_{i}U_{i}+E_{i}V+W_{i},\\
 \Im_{i}&=C_{i}X_{i}+D_{i}U_{i}+F_{i}V,
\end{align}
\end{subequations}
 where the coefficient vectors $\tilde{x}_{ik}$, $\tilde{u}_{ik}$,  $\tilde{e}_{ik}$, $\tilde{v}_{k}$ corresponding to the OPB $\{b_{k}\}_{k \in \mathbb{N}}$ for $\mathcal{L}_{2}(\mathbb{I},\mathbb{R})$ are collected from the following matrices: 
 \begin{equation}
\label{d8}
\begin{aligned}
X_{i}&= \begin{bmatrix} \tilde{x}_{i0} & \tilde{x}_{i1} & \tilde{x}_{i2} & ... & \tilde{x}_{iN}
\end{bmatrix} \in \mathbb{R}^{n_{i}\times (N+1)},\\
U_{i}&= \begin{bmatrix} \tilde{u}_{i0} & \tilde{u}_{i1} & \tilde{u}_{i2} & ... & \tilde{u}_{iN}
\end{bmatrix}\in \mathbb{R}^{m_{i}\times (N+1)},\\
\Im_{i}&= \begin{bmatrix} \tilde{e}_{i0} & \tilde{e}_{i1} & \tilde{e}_{i2} & ... & \tilde{e}_{iN}
\end{bmatrix}\in \mathbb{R}^{p\times (N+1)},\\
V&= \begin{bmatrix} \tilde{v}_{0} & \tilde{v}_{1} & \tilde{v}_{2} & ... & \tilde{v}_{N}
\end{bmatrix} \in \mathbb{R}^{q\times (N+1)}.\\
 \end{aligned}
\end{equation}
The noise term $W_{i}$ is defined as $W_{i} = -\tilde{X_{i}'}D_{21}$, where $\tilde{X_{i}'}$ represents the truncation error of $\tilde{x_{i}}$. The matrix $\tilde{x}_{i}$ is partitioned analogously to Equation \eqref{d19}, as $\tilde{x}_{i} = \begin{bmatrix} X_{i} & \tilde{X_{i}'} \end{bmatrix}$. According to Lemma \ref{l3}, the 2-norm of $W_{i}$ is bounded by $\|W_{i}\|_{2}\leq \frac{2V(x_{i}^{(2)})}{\sqrt{\pi}. (N-1)}$. Consequently, the noise $W_{i}$ satisfies $W_{i}W_{i}^{T}\leq c_{i}I_{i}$, where $c_{i}$ is a constant defined as $c_{i} = \left ( \frac{2V(x_{i}^{(2)})}{\sqrt{\pi} (N-1)}\right )^{2}$. Equivalently, the following inequality holds:
 \begin{equation}
\label{d8.8}
\begin{aligned}
 \begin{bmatrix} I_{i} & W_{i} \end{bmatrix} \begin{bmatrix} c_{i}I_{i} & 0 \\ 0 & -I_{i}\end{bmatrix}  \begin{bmatrix} I_{i} \\ W_{i}^{T} \end{bmatrix} \geq 0.
 \end{aligned}
\end{equation}

 A model that can generate the measured data (\ref{d8}) is formulated as follows:
 \begin{equation}
\label{d10}
\begin{aligned}
{\sum}_{Fi}=\left\{\begin{bmatrix} A_{i} & B_{i} \\ C_{i} & D_{i} \end{bmatrix}\Bigg|\begin{bmatrix} A_{i} & B_{i} \\ C_{i} & D_{i} \end{bmatrix}\begin{bmatrix} X_{i}\\ U_{i} \end{bmatrix}=\begin{bmatrix} X_{i}\mathcal{D}_{11}-E_{i}V-W_{i}\\ \Im_{i}-F_{i}V \end{bmatrix}\right\}.
 \end{aligned}
\end{equation}
Obviously, $\begin{bmatrix} \bar{A}_{i} & \bar{B}_{i} \\ \bar{C}_{i} & \bar{D}_{i} \end{bmatrix} \in \sum_{Fi}$. 
Moreover, let
\begin{equation}
\label{d22}
{\sum}_{Fi}^{0}~=~\left\{\begin{bmatrix} A_{i} & B_{i} \\ C_{i} & D_{i} \end{bmatrix}~\Bigg|~ \begin{bmatrix} A_{i} & B_{i} \\ C_{i} & D_{i} \end{bmatrix}\begin{bmatrix} X_{i}\\ U_{i} \end{bmatrix}=0 \right\}.
\end{equation}




When the data are exact, i.e., $W_{i}=0$, 
the data-driven cooperative output regulation problem is formulated as follows.

\begin{problem} \label{p1}
Consider the multi-agent system (\ref{d2}) and (\ref{d1}). For any $\begin{bmatrix} A_{i} & B_{i} \\ C_{i} & D_{i} \end{bmatrix} \in \sum_{Fi}$ and the unknown graph $\mathcal{G}$, find conditions and design the controller for the structure of (\ref{d3}) based on the exact data
such that the cooperative output regulation problem is solvable, i.e., both the following objectives hold:
\begin{itemize}
\item The matrix
 \begin{equation}
\label{d23}
\begin{aligned}
A_{C}=\begin{bmatrix} A+BK_{1} & BK_{2}\\ 0 & (I_{N}\otimes S)-\mu (H\otimes I_{q}) \end{bmatrix}
\end{aligned}
\end{equation}
is stabilizable, where $A=\mathrm{blockdiag}(A_{1}, ...,A_{N})$, $B=\mathrm{blockdiag}(B_{1}, ...,B_{N})$.
\item The tracking error $e(t)$ satisfies
    \begin{equation}
        \label{d24}
    \lim _{t \rightarrow +\infty}  e(t)=\lim _{t \rightarrow +\infty}  (C_{C}x_{c}(t)+D_{C}\tilde{v}(t))=0,
    \end{equation}
where $C_{C}=\begin{bmatrix} C+DK_{1} & DK_{2} \end{bmatrix}$, $C=\mathrm{blockdiag}(C_{1}, ...,C_{N})$, and $D=\mathrm{blockdiag}(D_{1}, ...,D_{N})$.
\end{itemize}
\end{problem}

When the data are noisy, i.e., $W_{i} \neq 0$,
the data-driven approximate cooperative output regulation problem is formulated as follows.

\begin{problem}
Consider the multi-agent system (\ref{d2}) and (\ref{d1}). For any $\begin{bmatrix} A_{i} & B_{i} \\ C_{i} & D_{i} \end{bmatrix} \in \sum_{Fi}$ with an unknown graph $\mathcal{G}$,
find conditions and design the controller for the structure of (\ref{d3}) using the noisy data such that the approximate cooperative output regulation problem is solvable, i.e., both the following objectives hold:
\begin{itemize}
\item The matrix (\ref{d23}) is quadratically stabilizable.
\item The tracking error $e(t)$ is ultimately uniformly bounded, i.e., 
    \begin{equation}
        \label{d25}
    \lim _{t \rightarrow +\infty} \|e(t)\|=\lim _{t \rightarrow +\infty} \|C_{c}x_{c}(t)+D_{c}\tilde{v}(t)\| \leq \gamma,
    \end{equation}
    where $\gamma$ is a constant.
\end{itemize}
\end{problem}

\section{MAIN RESULT}
 This section investigates the cooperative output regulation problem under the data-informativity framework. The main results are divided into two subsections. The first subsection addresses the data-driven cooperative output regulation problem based on exact data. The second subsection addresses the approximate data-driven cooperative output regulation problem with noise data.
\subsection{Exact data case}
When $W_{i}=0$, the set of systems can be  characterized as follows:
 \begin{equation}
\label{d27}
\begin{aligned}
{\sum}_{Fi}^{e}=\left\{\begin{bmatrix} A_{i} & B_{i} \\ C_{i} & D_{i} \end{bmatrix}\Bigg|\begin{bmatrix} A_{i} & B_{i} \\ C_{i} & D_{i} \end{bmatrix}\begin{bmatrix} X_{i}\\ U_{i} \end{bmatrix}=\begin{bmatrix} X_{i}\mathcal{D}_{11}-E_{i}V\\ \Im_{i}-F_{i}V \end{bmatrix}\right\}.
 \end{aligned}
\end{equation}
The controller (\ref{d3}) designed based on the available exact data is expressed as follows:
\begin{equation}
\label{d3.1}
\begin{aligned}
 U_{i}&=K_{1i}X_{i}+K_{2i}\bar{\eta}_{i},\\
 \bar{\eta}_{i}\mathcal{D}_{11}&=S\bar{\eta}_{i}+\mu(\sum_{j\in N_{i}} a_{ij}(\bar{\eta}_{j}-\bar{\eta}_{i})+a_{i0}(V-\bar{\eta}_{i})),
\end{aligned}
\end{equation}
where $\bar{\eta}_{i}= \begin{bmatrix} \tilde{\eta}_{i0} & \tilde{\eta}_{i1} & \tilde{\eta}_{i2} & ... & \tilde{\eta}_{iN} 
\end{bmatrix} $ and $\tilde{\eta}_{ik}$ is the coefficient vector of $ b_{k}$.
Substituting Equation (\ref{d3.1}) into Equation~(\ref{d27}), the augmented system is obtained, as 
\begin{equation}
\label{d3.3}
\begin{aligned}
X_{c}\mathcal{D}_{11}=&\begin{bmatrix} A+BK_{1} & BK_{2}\\ 0 & (I_{N}\otimes S)-\mu (H\otimes I_{q}) \end{bmatrix}X_{c} \\ &+\begin{bmatrix} E \\ \mu(H\otimes I_{q}) \end{bmatrix} \bar{V},\\
\Im=&\begin{bmatrix} C+DK_{1} & DK_{2}  \end{bmatrix}X_{c}+F\bar{V},
\end{aligned}
\end{equation}
 where $X_{c}=\mathrm{col}(X, \bar{\eta})$, $X=\mathrm{col}(X_{1}, ...,X_{N})$, $\bar{\eta}=\mathrm{col}(\bar{\eta}_{1}, ...,\bar{\eta}_{N})$, $\Im=\mathrm{col}(\Im_{1}, ...,\Im_{N})$, and $\bar{V}=1_{N} \otimes V$. The system \eqref{d3.3} can be described as
\begin{equation}
\label{d7.1}
\begin{aligned}
{X}_{c}\mathcal{D}_{11}&=A_{C}X_{c}+B_{C}\bar{V},\\
\Im&={C}_{C}X_{c}+D_{C}\bar{V},
\end{aligned}
\end{equation}
where $A_{C}$, $B_{C}$, $C_{C}$ and $D_{C}$ have been defined previously.

The cooperative output regulation problem involves two primary objectives:  ensuring that $A_C$ is stabilizable, and achieving $\lim \limits_{t \rightarrow + \infty} e(t)=0$. $A_C$ is  stabilizable if both $A+BK_1$ and $(I_{N}\otimes S)-\mu (H\otimes I_{q}) $ are stabilizable. Based on these criteria, the following subsection is organized into three parts: 1) The condition for stabilizability of $A_{C}$ is investigated and the matrix $K_{1i}$ is computed; 2) The unknown network topology is investigated, yielding an estimation of $\mu$. 3) The condition of $\lim \limits_{t \rightarrow + \infty} e(t)=0$ is studied, with $K_{2i}$ designed accordingly.

\subsubsection{The stabilizability of $A_C$} \hfill 

The following definition relates to the informativity for the stabilizability of $A_C$.

\begin{definition} \label{de2}
The data $(X_{i}, U_{i})$ are informative for the stabilizability of $A_{C}$ if $\sum_{D}\subseteq \sum_{A_{C}} $, where $\sum_{D}=\left\{(A, B) | XD_{11}=AX+BU+E\bar V\right\}$ and $\sum_{A_{C}}$=\{ $(A, B)$ $|$ There exist $K_{1}$, $K_{2}$ and $\mu$ such that $A_{C}$ in \eqref{d23} is Hurwitz\}.
\end{definition}

 The following lemma is introduced to compute $K_{1i}$ such that the data $(X_{i}, U_{i})$ are informative for the stabilizability of $A_{C}$. 

\begin{lemma} \label{l5} 
\cite{van2020data} Suppose the data $(X_i, U_i)$ are informative for stabilizability, and let $K_{1i}$ be a feedback gain such that $\sum_{F_i}^{e1} \subseteq \sum_{K_{1i}}$, where $\sum_{K_{1i}} = \{(A_i, B_i) \mid A_i + B_i K_{1i} \text{ is stable}\}$, $\sum_{F_i}^{e1}=\{(A_i, B_i) \mid X_i\mathcal{D}_{11}=A_iX_i + B_iU_i+E_iV \}$ and $i \in \{1,...,N\}$.
Then
\begin{equation}
\label{d27.7}
\begin{aligned}
\mathrm{Im} \begin{bmatrix} I_{i} \\ K_{1i} \end{bmatrix} \subseteq \mathrm{Im} \begin{bmatrix} X_{i} \\ U_{i} \end{bmatrix}.
\end{aligned}
\end{equation}
\end{lemma}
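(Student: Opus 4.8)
The plan is to derive the image inclusion from a perturbation argument that exploits the fact that $K_{1i}$ must stabilize \emph{every} system consistent with the data, not just the nominal one. First I would fix any nominal pair $(A_i^{0},B_i^{0})\in\sum_{F_i}^{e1}$ (this set is nonempty, since $(\bar A_i,\bar B_i)$ satisfies the data equation when $W_i=0$) and recall that $\sum_{F_i}^{e1}$ is an affine set whose direction space is the homogeneous set $\sum_{Fi}^{0}$ of \eqref{d22}, restricted to the $(A_i,B_i)$ block, i.e. the pairs $[\,\Delta A_i\ \Delta B_i\,]$ with $[\,\Delta A_i\ \Delta B_i\,]\begin{bmatrix}X_i\\U_i\end{bmatrix}=0$. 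Then for every such $[\,\Delta A_i\ \Delta B_i\,]\in\sum_{Fi}^{0}$ and every $t\in\mathbb{R}$, the perturbed pair $(A_i^{0}+t\Delta A_i,\,B_i^{0}+t\Delta B_i)$ again lies in $\sum_{F_i}^{e1}$, so the hypothesis $\sum_{F_i}^{e1}\subseteq\sum_{K_{1i}}$ forces
$$
A_i^{0}+B_i^{0}K_{1i}+t\,[\,\Delta A_i\ \ \Delta B_i\,]\begin{bmatrix}I_i\\K_{1i}\end{bmatrix}
$$
to be Hurwitz for all $t\in\mathbb{R}$. (Informativity for stabilizability is used only to guarantee that such a $K_{1i}$ exists; the derivation itself uses the stabilization property directly.)

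The core step is to show that the perturbation term $[\,\Delta A_i\ \Delta B_i\,]\begin{bmatrix}I_i\\K_{1i}\end{bmatrix}$ vanishes for \emph{every} $[\,\Delta A_i\ \Delta B_i\,]\in\sum_{Fi}^{0}$, and I would argue this by contradiction. The key structural observation is that $\sum_{Fi}^{0}$ has a row-wise product structure: each row of an admissible $[\,\Delta A_i\ \Delta B_i\,]$ may be chosen independently within the left null space $\mathcal{S}$ of $\begin{bmatrix}X_i\\U_i\end{bmatrix}$. If some admissible $[\,\Delta A_i\ \Delta B_i\,]$ produced a nonzero image $\rho=[\,\Delta A_i\ \Delta B_i\,]\begin{bmatrix}I_i\\K_{1i}\end{bmatrix}$, then $\rho$ would have a nonzero entry $\rho_{k,j}\neq 0$. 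Copying the offending $k$-th row of $[\,\Delta A_i\ \Delta B_i\,]$ into row $j$ and zeroing all other rows remains admissible by the product structure, and yields a \emph{rank-one} direction $e_j\rho_{(k)}$ whose scalar $v^{T}u$ equals $\rho_{k,j}\neq 0$, where $\rho_{(k)}$ denotes the $k$-th row of $\rho$. By the matrix determinant lemma, $A_i^{0}+B_i^{0}K_{1i}+t\,e_j\rho_{(k)}$ then has an eigenvalue behaving like $t\,\rho_{k,j}$ for large $|t|$, which escapes into the open right half-plane for $t$ of the appropriate sign, contradicting Hurwitzness for all $t$.

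Finally, from $[\,\Delta A_i\ \Delta B_i\,]\begin{bmatrix}I_i\\K_{1i}\end{bmatrix}=0$ for all $[\,\Delta A_i\ \Delta B_i\,]$ whose rows lie in $\mathcal{S}$, I would conclude that every column of $\begin{bmatrix}I_i\\K_{1i}\end{bmatrix}$ is orthogonal to $\mathcal{S}$, hence lies in $\mathcal{S}^{\perp}=\mathrm{Im}\begin{bmatrix}X_i\\U_i\end{bmatrix}$. This is exactly the claimed inclusion $\mathrm{Im}\begin{bmatrix}I_i\\K_{1i}\end{bmatrix}\subseteq\mathrm{Im}\begin{bmatrix}X_i\\U_i\end{bmatrix}$.

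I expect the middle step to be the main obstacle. The tempting shortcut, namely ``$M+tN$ Hurwitz for all $t$ forces $N=0$,'' is \emph{false} (any skew-symmetric $N$ is a counterexample), so a trace argument or naive eigenvalue bound does not suffice. The genuine content is to exploit the row-wise freedom in $\sum_{Fi}^{0}$ to extract a rank-one perturbation with nonzero $v^{T}u$, which is what actually drives an eigenvalue to $+\infty$; without that reduction the conclusion would not follow.
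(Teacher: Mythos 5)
Your proof is correct and complete. One thing to be aware of: the paper itself does not prove this lemma at all --- it is imported from the cited reference \cite{van2020data}, so there is no in-paper proof to compare against; the comparison has to be with that source. Your argument is essentially the one used there (stated in that reference for discrete-time/Schur stability and adapted here to the Hurwitz case). In the source, the contrapositive is taken directly: if the image inclusion fails, there exists $\xi$ in the left kernel of $\begin{bmatrix} X_{i} \\ U_{i} \end{bmatrix}$ with $v^{T}:=\xi^{T}\begin{bmatrix} I_{i} \\ K_{1i} \end{bmatrix}\neq 0$, and then the rank-one perturbation $[\,\Delta A_{i}\ \ \Delta B_{i}\,]=v\,\xi^{T}$ is admissible and gives $\Delta A_{i}+\Delta B_{i}K_{1i}=vv^{T}$, whose trace $\|v\|^{2}>0$ makes $\mathrm{tr}\bigl(A_{i}^{0}+B_{i}^{0}K_{1i}+t\,vv^{T}\bigr)$ unbounded above in $t$, contradicting the fact that a Hurwitz matrix has negative trace. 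Your row-copying construction produces the same kind of rank-one direction, merely indexed by an offending entry $\rho_{k,j}$ rather than by a kernel vector, and your final caveat is exactly right: the naive claim ``$M+tN$ Hurwitz for all $t$ implies $N=0$'' is false (nilpotent $N$, e.g.\ strictly upper triangular, works as well as skew-symmetric $N$), which is precisely why both proofs need a rank-one direction with nonzero $v^{T}u$. One small simplification you could make: once you have the rank-one direction $e_{j}\rho_{(k)}$, the matrix determinant lemma is unnecessary --- the trace argument you dismissed for general perturbations does finish the job at that point, since $\mathrm{tr}\bigl(A_{i}^{0}+B_{i}^{0}K_{1i}+t\,e_{j}\rho_{(k)}\bigr)=\mathrm{tr}\bigl(A_{i}^{0}+B_{i}^{0}K_{1i}\bigr)+t\rho_{k,j}$ must remain negative for all $t\in\mathbb{R}$, which forces $\rho_{k,j}=0$.
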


Lemma \ref{l5} provides a condition for calculating $K_{1i}$. 
The following theorem establishes the necessary and sufficient condition such that the data $(X_{i}, U_{i})$ are informative for the stabilizability of $A_{C}$. 

\begin{theorem}
\label{t0} Under Assumptions \ref{a2}-\ref{a4}, assume that $\mu$ is sufficiently large. The data $(X_{i}, U_{i})$ are informative for the stabilizability of $A_{c}$, if and only if, there exists a right inverse $X_{i}^{+}$ of $X_{i}$ such that $(X_{i}\mathcal{D}_{11}-E_{i}V)X_{i}^{+}$ is stable for all  $i \in \{1,...,N\}$.
\end{theorem}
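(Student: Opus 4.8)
The plan is to exploit the block upper-triangular structure of $A_{C}$ in \eqref{d23} to decouple the stabilizability question, and then to reduce the informativity condition to a single data-dependent matrix whose stability does not depend on the particular consistent model. First I would note that, since $A_{C}$ is block upper triangular, its spectrum is the union of the spectra of the diagonal blocks $A+BK_{1}$ and $(I_{N}\otimes S)-\mu(H\otimes I_{q})$, independently of the off-diagonal term $BK_{2}$; hence $A_{C}$ is Hurwitz if and only if both diagonal blocks are Hurwitz. Under Assumptions \ref{a3}--\ref{a4} the follower matrix $H$ has all eigenvalues with positive real part (the graph contains a spanning tree rooted at the leader), and since by Assumption \ref{a2} the modes of $S$ lie on or to the right of the imaginary axis, for sufficiently large $\mu$ the eigenvalues $\lambda_{j}(S)-\mu\lambda_{k}(H)$ are pushed into the open left half-plane, so the second block is Hurwitz. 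This removes the coupling/topology block from the argument and leaves only the stabilizability of $A+BK_{1}$, which, because $A$, $B$, $K_{1}$ are block diagonal, decouples into the per-agent task of stabilizing $(A_{i},B_{i})$ by a common $K_{1i}$.

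The crux is a data-based identity. For any $(A_{i},B_{i})\in\sum_{F_i}^{e1}$ the exact-data relation reads $A_{i}X_{i}+B_{i}U_{i}=X_{i}\mathcal{D}_{11}-E_{i}V$. If $X_{i}^{+}$ is a right inverse of $X_{i}$ and one sets $K_{1i}=U_{i}X_{i}^{+}$, then right-multiplying by $X_{i}^{+}$ and using $X_{i}X_{i}^{+}=I_{i}$ gives
\begin{equation*}
A_{i}+B_{i}K_{1i}=(A_{i}X_{i}+B_{i}U_{i})X_{i}^{+}=(X_{i}\mathcal{D}_{11}-E_{i}V)X_{i}^{+}.
\end{equation*}
The right-hand side depends only on the collected data, not on which $(A_{i},B_{i})$ is chosen, so one and the same closed-loop matrix is realized by every consistent model.

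For sufficiency, if a right inverse makes $(X_{i}\mathcal{D}_{11}-E_{i}V)X_{i}^{+}$ stable for each $i$, then $K_{1i}=U_{i}X_{i}^{+}$ renders $A_{i}+B_{i}K_{1i}$ Hurwitz for all consistent $(A_{i},B_{i})$; combined with the large-$\mu$ block this makes $A_{C}$ Hurwitz for every consistent system, so the data are informative in the sense of Definition \ref{de2}. For necessity, informativity for stabilizability supplies, via the informativity result invoked in Lemma \ref{l5} (see \cite{van2020data}), a common gain $K_{1i}$ satisfying the image inclusion \eqref{d27.7}; that is, there is a matrix $G_{i}$ with $\begin{bmatrix} I_{i}\\ K_{1i}\end{bmatrix}=\begin{bmatrix} X_{i}\\ U_{i}\end{bmatrix}G_{i}$. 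The top block forces $X_{i}G_{i}=I_{i}$, so $G_{i}=:X_{i}^{+}$ is a right inverse of $X_{i}$ and $K_{1i}=U_{i}X_{i}^{+}$; the identity above then shows the stable closed-loop matrix equals $(X_{i}\mathcal{D}_{11}-E_{i}V)X_{i}^{+}$, which is the claimed right inverse.

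The main obstacle I expect is the necessity direction: converting the set-theoretic condition $\sum_{D}\subseteq\sum_{A_{C}}$ (every consistent model is individually stabilizable) into the existence of a \emph{single} gain $K_{1i}$ that stabilizes them uniformly, since only such a common gain yields a well-defined data matrix; this equivalence is exactly what the informativity framework of \cite{van2020data} through Lemma \ref{l5} provides, and it implicitly requires $X_{i}$ to have full row rank so that a right inverse exists. A secondary care point is justifying the ``sufficiently large $\mu$'' claim without topology knowledge, but here only the existence of such a $\mu$ is needed, with the explicit bound deferred to the next subsection where Assumption \ref{a4} is used.
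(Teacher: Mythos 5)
Your proposal is correct and follows essentially the same route as the paper's own proof: the block-triangular decoupling with the large-$\mu$ argument for the $(I_{N}\otimes S)-\mu(H\otimes I_{q})$ block, the key identity $A_{i}+B_{i}K_{1i}=(X_{i}\mathcal{D}_{11}-E_{i}V)X_{i}^{+}$ with $K_{1i}=U_{i}X_{i}^{+}$ for sufficiency, and Lemma \ref{l5} to obtain the image inclusion \eqref{d27.7} (hence a right inverse) for necessity. Your explicit remark that every consistent model yields one and the same data-determined closed-loop matrix, and your flagging of the common-gain subtlety in the necessity direction, are faithful to (indeed slightly more careful than) the paper's argument.
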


\begin{proof}
Sufficiency: First, it is needed to investigate the eigenvalues of $I_{N}\otimes S-\mu (H\otimes I_{q})$ based on \eqref{d23}. Define nonsingular matrices $Z_{1}$ and $Z_{2}$, such that the following equations hold:
$$
 Z_{1}^{-1}SZ_{1}=\tilde{S}=\begin{bmatrix} \lambda_{S_{1}} & \varpi_{1}& 0 & 0 &...& 0  \\ 0  & \lambda_{S_{2}} & \varpi_{2} & 0 & ... & 0\\ \vdots & \vdots & \ddots & \ddots &...& 0\\ 0 & 0 & ... & 0 & \lambda_{S_{q-1}} & \varpi_{q-1} \\ 0 & 0& 0 & ... & 0 &\lambda_{S_{q}} \end{bmatrix},$$\\
$$ Z_{2}^{-1}HZ_{2}=\tilde{H}=\begin{bmatrix} \lambda_{H_{1}} & \bar \varpi_{1}& 0 & 0 &...& 0  \\ 0  & \lambda_{H_{2}} & \bar \varpi_{2} & 0 & ... & 0\\ \vdots & \vdots & \ddots & \ddots &...& 0\\ 0 & 0 & ... & 0 &  \lambda_{H_{N-1}} & \bar \varpi_{N-1} \\ 0 & 0& 0 & ... & 0 & \lambda_{H_{N}} \end{bmatrix}, 
$$
 where $\bar \varpi_{i} \in \{0,1\}$ and $\varpi_{j} \in \{0,1\}$.
 It follows that
\begin{equation} \label{d31.1}
\begin{aligned}
 &(Z_{2}^{-1} \otimes Z_{1}^{-1})[I_{N}\otimes S-\mu (H\otimes I_{q})](Z_{2} \otimes Z_{1}) \\
 =&[I_{N}\otimes (Z_{1}^{-1}SZ_{1})]-[\mu (Z_{2}^{-1}HZ_{2}^{-1}) \otimes I_{q}]\\
 =&(I_{N}\otimes \tilde{S})-\mu (\tilde{H} \otimes I_{q}).
 \end{aligned}
 \end{equation} 
Equation (\ref{d31.1}) implies that the eigenvalues of $(I_{N}\otimes S)-\mu (H\otimes I_{q})$ are $\lambda_{S_{i}}-\mu \lambda_{H_{j}}$, where $i=1,...,q$, $j=1,...,N$. Since $\mu$ is sufficiently large, with $\mu>\frac{\max\{\mathrm{Re}(\lambda_{S})\}}{\min\{\mathrm{Re}(\lambda_{H})\}}$, the matrix $I_{N}\otimes S-\mu (H\otimes I_{q})$ is Hurwitz.

According to the set (\ref{d27}), one has $X_{i}\mathcal{D}_{11}-E_{i}V=A_{i}X_{i}+B_{i}U_{i}$. Since $(X_{i}\mathcal{D}_{11}-E_{i}V)X_{i}^{+}$ is stable, it follows that
$$(X_{i}\mathcal{D}_{11}-E_{i}V)X_{i}^{+}=(A_{i}X_{i}+B_{i}U_{i})X_{i}^{+}=A_{i}+B_{i}K_{1i},$$ 
where $K_{1i}=U_{i}X_{i}^{+}$.
Therefore, $(A_{i},B_{i}) \in {\sum}_{Fi}^{e}$ is stabilizable. Hence, the data $(X_{i}, U_{i})$ are informative for the stabilizability of $A_{C}$.

Necessity: Since the data $(X_{i}, U_{i})$ are informative for the stabilizability of $A_{c}$, there exists a constant $\mu$ such that $I_{N}\otimes S-\mu (H\otimes I_{q})$ is Hurwitz and $A_{i}+B_{i}K_{1i}$ is stable for all $(A_{i},B_{i}) \in {\sum}_{Fi}^{e}$. By Equation (\ref{d27.7}), it can be deduced that $X_{i}$ has full row-rank and that there exists a right inverse $X_{i}^{+}$ of $X_{i}$, such that
\begin{equation}
\label{d119}
\begin{aligned}
\begin{bmatrix} I_{i} \\ K_{1i} \end{bmatrix} =  \begin{bmatrix} X_{i} \\ U_{i} \end{bmatrix} X_{i}^{+}.
\end{aligned}
\end{equation}
Then, the following equation can be obtained: 
$$A_{i}+B_{i}K_{1i}=\begin{bmatrix} A_{i} & B_{i} \end{bmatrix}\begin{bmatrix} I_{i} \\ K_{1i} \end{bmatrix}=\begin{bmatrix} A_{i} & B_{i} \end{bmatrix}\begin{bmatrix} X_{i} \\ U_{i} \end{bmatrix} X_{i}^{+}.$$ By Equations (\ref{d9}) and (\ref{d27}), $A_{i}+B_{i}K_{1i}=(X_{i}\mathcal{D}_{11}-E_{i}V) X_{i}^{+}$. Therefore, $(X_{i}\mathcal{D}_{11}-E_{i}V) X_{i}^{+}$ is stable.
\end{proof}

Theorem \ref{t0} provides the necessary and sufficient condition for $A_C$ to be Hurwitz.
The following theorem presents a method to compute $K_{1i}$.
\begin{theorem}\label{t2}
The data $(X_{i}, U_{i})$ are informative for stabilizability, if and only if, there exists a matrix $\theta_{i} \in \mathbb{R}^{(N+1) \times n_{i} }$ satisfying
\begin{equation}
\begin{aligned}
\label{d17.7}
(X_{i}\theta_{i})&=(X_{i}\theta_{i})^{T} > 0,\\
\theta_{i}^{T}(X_{i}\mathcal{D}_{11} -E_{i}V)^{T}&+(X_{i}\mathcal{D}_{11} -E_{i}V)\theta_{i} < 0,
\end{aligned}
\end{equation}
for $i \in \{1,...,N\}$. Furthermore, $K_{1i}$ can be computed by
\begin{equation}
\label{d28}
\begin{aligned}
K_{1i}=U_{i}X_{i}^{+}=U_{i}\theta_{i}(X_{i}\theta_{i})^{-1}.
\end{aligned}
\end{equation}
\end{theorem}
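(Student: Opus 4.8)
The plan is to build directly on Theorem~\ref{t0}, which already reduces informativity for stabilizability to the existence of a right inverse $X_i^+$ of $X_i$ making $(X_i\mathcal{D}_{11}-E_iV)X_i^+$ Hurwitz. The remaining task is purely to re-express this stability requirement as the linear matrix inequality in~(\ref{d17.7}). The two ingredients I would use are: (i) the Lyapunov characterization of Hurwitz matrices, namely that a square matrix $M$ is Hurwitz if and only if there is $P=P^T>0$ with $MP+PM^T<0$; and (ii) the parametrization of an arbitrary right inverse of a full-row-rank $X_i$ as $X_i^+=\theta_i(X_i\theta_i)^{-1}$, where $\theta_i$ is any matrix rendering $X_i\theta_i$ invertible. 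Note that $X_iX_i^+=X_i\theta_i(X_i\theta_i)^{-1}=I$ confirms this is indeed a right inverse, and conversely any right inverse is recovered by taking $\theta_i=X_i^+$, so no generality is lost.

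For sufficiency, I would take a feasible $\theta_i$, set $P:=X_i\theta_i$ (symmetric and positive definite by the first line of~(\ref{d17.7})), and define $X_i^+:=\theta_i P^{-1}$ together with $K_{1i}:=U_iX_i^+$. Using the defining relation of $\sum_{Fi}^{e}$ in~(\ref{d27}), namely $X_i\mathcal{D}_{11}-E_iV=A_iX_i+B_iU_i$, one obtains $A_i+B_iK_{1i}=(A_iX_i+B_iU_i)X_i^+=(X_i\mathcal{D}_{11}-E_iV)\theta_i P^{-1}$. Applying the congruence transformation of the Lyapunov inequality $(A_i+B_iK_{1i})P+P(A_i+B_iK_{1i})^T<0$ then cancels the factors $P^{-1}P=I$ and $PP^{-1}=I$ (using $P=P^T$), yielding exactly $(X_i\mathcal{D}_{11}-E_iV)\theta_i+\theta_i^T(X_i\mathcal{D}_{11}-E_iV)^T<0$, the second line of~(\ref{d17.7}). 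Hence $A_i+B_iK_{1i}=(X_i\mathcal{D}_{11}-E_iV)X_i^+$ is Hurwitz for every $(A_i,B_i)\in\sum_{Fi}^{e}$, and Theorem~\ref{t0} delivers informativity.

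For necessity, I would invoke Theorem~\ref{t0} to obtain a right inverse $X_i^+$ with $M:=(X_i\mathcal{D}_{11}-E_iV)X_i^+$ Hurwitz, then pick $P=P^T>0$ solving $MP+PM^T<0$ via the Lyapunov theorem, and finally set $\theta_i:=X_i^+P$. Then $X_i\theta_i=X_iX_i^+P=P>0$ is symmetric, which is the first inequality, while $(X_i\mathcal{D}_{11}-E_iV)\theta_i+\theta_i^T(X_i\mathcal{D}_{11}-E_iV)^T=MP+PM^T<0$ recovers the second. The gain formula follows immediately from $K_{1i}=U_iX_i^+=U_i\theta_i(X_i\theta_i)^{-1}$. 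The only delicate point, and the step I would treat most carefully, is the congruence manipulation: one must verify that pre- and post-multiplying the bilinear Lyapunov inequality by $P$ and $P^{-1}$ on the correct sides linearizes it in $\theta_i$ without sign or symmetry errors, and that the constraint $X_i\theta_i>0$ is precisely what guarantees the invertibility making $X_i^+=\theta_i(X_i\theta_i)^{-1}$ well defined (and forcing $X_i$ to have full row rank).
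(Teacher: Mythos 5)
Your proposal is correct and follows essentially the same route as the paper's own proof: both parametrize the right inverse as $X_i^{+}=\theta_i(X_i\theta_i)^{-1}$, exploit the data-consistency identity $A_i+B_iK_{1i}=(X_i\mathcal{D}_{11}-E_iV)X_i^{+}$, and convert Hurwitzness into the LMI \eqref{d17.7} by choosing $\theta_i=X_i^{+}P_i$ in the necessity direction. The only cosmetic differences are that you read the LMI directly as the dual Lyapunov inequality $(A_i+B_iK_{1i})P+P(A_i+B_iK_{1i})^{T}<0$ with $P=X_i\theta_i$ and delegate the informativity-to-stable-right-inverse step to Theorem~\ref{t0}, whereas the paper performs a congruence by $(X_i\theta_i)^{-1}$ to reach the primal form $(A_i+B_iK_{1i})^{T}P_i+P_i(A_i+B_iK_{1i})<0$ and leans on Lemma~\ref{l5} directly.
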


\begin{proof}
Sufficiency: Since the matrix $X_{i}\theta_{i}$ is symmetric and positive definite, the matrix $X_{i}$ has full rank.  It follows that $X_{i}^{+}=\theta_{i}(X_{i}\theta_{i})^{-1}$. By left-multiplying $(X_{i}\theta_{i})^{-T}$ and right-multiplying $(X_{i}\theta_{i})^{-1}$ to the second formula of Equation $(\ref{d17.7})$, one obtains
$(X_{i}\theta_{i})^{-T}\theta_{i}^{T}(X_{i}\mathcal{D}_{11} -E_{i}V)^{T}(X_{i}\theta_{i})^{-1}+(X_{i}\theta_{i})^{-T}(X_{i}\mathcal{D}_{11} -E_{i}V)\theta_{i}(X_{i}\theta_{i})^{-1}.$ Combing it with Equation \eqref{d28}, one has that $(A_{i}+B_{i}K_{1i})^{T}(X_{i}\theta_{i})^{-1}+(X_{i}\theta_{i})^{-T}(A_{i}+B_{i}K_{1i})< 0$. Define $P_{i}=(X_{i}\theta_{i})^{-T}$. Then, the Lyapunov function $V_{i}=x_{i}^{T}P_{i}x_{i}$ satisfies 
\begin{equation}
\label{d28.012}
(A_{i}+B_{i}K_{1i})^{T}P_{i}+P_{i}(A_{i}+B_{i}K_{1i})<0.
\end{equation}
Therefore, the data $(X_{i}, U_{i})$ are informative for stabilizability. 

Necessity:
Since the data $(X_{i}, U_{i})$ are informative for stabilizability, there exists a positive definite matrix $P_{i}=P_{i}^{T}$, such that the Lyapunov function $V_{i}$ satisfies \eqref{d28.012} for all $(A_i,B_i) \in \sum_{Fi}^{e1}$. Then, it can be concluded that $$P_{i}[(X_{i}\mathcal{D}_{11}-E_{i}V) X_{i}^{+}]^{T}+[(X_{i}\mathcal{D}_{11}-E_{i}V) X_{i}^{+}]P_{i}<0.$$  Define $\theta_{i}=X_{i}^{+}P_{i}$. Equation \eqref{d17.7} can be obtained. It is obvious that $X_{i}^{+}=\theta_{i}(X_{i}\theta_{i})^{-1}$. Combing Equation \eqref{d119}, Equation \eqref{d28} can be obtained. 
\end{proof}

\subsubsection{Unkown network topology} \hfill 

This part estimates a lower bound of the minimum non-zero eigenvalue of the Laplacian matrix based solely on edge weight bounds and provides the corresponding lower bound for $\mu$. The following theorem proposes a sufficient condition on $\mu$ such that  $(I_{N}\otimes S)-\mu (H\otimes I_{q})$ is stabilizable. 

\begin{theorem}
\label{t1} Under Assumptions \ref{a2}-\ref{a4}, the matrix $(I_{N}\otimes S)-\mu (H\otimes I_{q})$ is stabilizable, if 
\begin{equation}
\label{d31}
\mu > \frac{\max\{\mathrm{Re}(\lambda_{S})\}N(2\varepsilon_{2})^{N-1}}{\varepsilon_{1}^{N}},
\end{equation}
where $\max\{\mathrm{Re}(\lambda_{S})\}$ is the largest real part of the eigenvalues of $S$. 
\end{theorem}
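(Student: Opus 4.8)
The plan is to reduce the stabilizability claim to a quantitative lower bound on the minimum real part of the eigenvalues of $H$, and then to produce that bound from the spanning-tree structure together with the edge-weight bounds of Assumption \ref{a4}. From the diagonalization argument already used in the proof of Theorem \ref{t0} (see \eqref{d31.1}), the eigenvalues of $(I_{N}\otimes S)-\mu(H\otimes I_{q})$ are exactly $\lambda_{S_{i}}-\mu\lambda_{H_{j}}$, so the matrix is Hurwitz---and hence a fortiori stabilizable---as soon as $\mu\,\min\{\mathrm{Re}(\lambda_{H})\}>\max\{\mathrm{Re}(\lambda_{S})\}$. Thus it suffices to show $\min\{\mathrm{Re}(\lambda_{H})\}\geq \varepsilon_{1}^{N}/(N(2\varepsilon_{2})^{N-1})$, after which the hypothesis \eqref{d31} forces $\mu>\max\{\mathrm{Re}(\lambda_{S})\}/\min\{\mathrm{Re}(\lambda_{H})\}$ and the conclusion follows immediately.

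First I would record the structural facts about $H$. Under Assumption \ref{a3} the graph has a directed spanning tree rooted at the leader, which makes $H$ (the follower block in \eqref{d0}) a nonsingular $M$-matrix, so all its eigenvalues have strictly positive real part. The observation I would exploit is that, for such an $M$-matrix $H=sI-B$ with $B\geq 0$ and $s\geq\rho(B)$, every eigenvalue $\lambda_{H}=s-\nu$ with $\nu\in\sigma(B)$ satisfies $|\lambda_{H}|=|s-\nu|\geq s-\mathrm{Re}(\nu)\geq s-\rho(B)$; since the Perron root $\rho(B)$ is real and maximizes $\mathrm{Re}(\nu)$, the eigenvalue of smallest real part is real, positive, and coincides with the eigenvalue of smallest modulus. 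Consequently $\min\{\mathrm{Re}(\lambda_{H})\}=\min_{j}|\lambda_{H_{j}}|$. This coincidence is essential: without it a determinant lower bound alone could never bound the minimum real part from below, because a large $\det H$ is compatible with one arbitrarily small real part once the other eigenvalues are large.

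Next I would bound the two ingredients of $\det H=\prod_{j}\lambda_{H_{j}}$. For the lower bound, the (all-minors) matrix-tree theorem identifies $\det H$ with the weighted sum over spanning trees rooted at the leader; since Assumption \ref{a3} guarantees at least one such tree and each of its $N$ edges has weight at least $\varepsilon_{1}$ by Assumption \ref{a4}, we obtain $\det H\geq\varepsilon_{1}^{N}$. For the remaining factors I would bound the moduli of the other $N-1$ eigenvalues from above using Assumption \ref{a4}, so that $\prod_{j\neq\min}|\lambda_{H_{j}}|\leq N(2\varepsilon_{2})^{N-1}$. Writing $\min\{\mathrm{Re}(\lambda_{H})\}=\min_{j}|\lambda_{H_{j}}|=\det H/\prod_{j\neq\min}|\lambda_{H_{j}}|$ then yields the claimed bound $\varepsilon_{1}^{N}/(N(2\varepsilon_{2})^{N-1})$.

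The step I expect to be the main obstacle is precisely this modulus estimate producing the factor $N(2\varepsilon_{2})^{N-1}$: a crude Gershgorin/row-sum bound on the entries of $H$ gives only $|\lambda_{H_{j}}|\leq 2N\varepsilon_{2}$ and hence the looser constant $(2N\varepsilon_{2})^{N-1}$, so the tight factor must come from a more careful accounting of the elementary symmetric function $\prod_{j\neq\min}\lambda_{H_{j}}$ (equivalently, the relevant characteristic-polynomial coefficient, i.e.\ the sum of principal $(N-1)$-minors of the $M$-matrix $H$) rather than a term-by-term spectral-radius bound. Getting this constant right, and rigorously justifying the $M$-matrix identity between the minimum real part and the minimum modulus, is where the real work lies; by contrast, the reduction in the first paragraph and the matrix-tree lower bound $\det H\geq\varepsilon_{1}^{N}$ are routine.
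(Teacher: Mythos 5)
Your strategy is viable and genuinely different from the paper's in its key spectral step, but as written it is not a proof: the decisive estimate $\prod_{j\neq\min}|\lambda_{H_{j}}|\leq N(2\varepsilon_{2})^{N-1}$, which you flag as the ``main obstacle,'' is never established, and your diagnosis of why it is hard is mistaken. Assumption \ref{a4} bounds \emph{all} nonzero entries of $\mathcal{L}$, including the diagonal ones; each $H_{ii}$ is such an entry (it is nonzero because Assumption \ref{a3} gives every follower an incoming edge), so $H_{ii}\leq\varepsilon_{2}$. Combined with the Laplacian structure \eqref{952}, $H_{ii}=\sum_{j\neq i}|H_{ij}|+a_{i0}$, the $i$-th Gershgorin radius satisfies $\sum_{j\neq i}|H_{ij}|\leq H_{ii}$, and hence every eigenvalue of $H$ obeys $|\lambda_{H_{j}}|\leq H_{ii}+\sum_{j\neq i}|H_{ij}|\leq 2H_{ii}\leq 2\varepsilon_{2}$ --- not $2N\varepsilon_{2}$ as you claim. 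Your row-sum estimate forgot that Assumption \ref{a4} caps the diagonal, which \emph{is} the off-diagonal row sum plus $a_{i0}$. With this, $\prod_{j\neq\min}|\lambda_{H_{j}}|\leq(2\varepsilon_{2})^{N-1}$, and your chain $\min\{\mathrm{Re}(\lambda_{H})\}=\min_{j}|\lambda_{H_{j}}|=\det H/\prod_{j\neq\min}|\lambda_{H_{j}}|\geq\varepsilon_{1}^{N}/(2\varepsilon_{2})^{N-1}$ gives a bound even sharper than the one in Theorem \ref{t1}, so hypothesis \eqref{d31} suffices a fortiori. No elementary-symmetric-function or principal-minor accounting is needed; the term-by-term bound you dismissed closes the argument.

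Once that step is repaired, it is worth noting how your route differs from the paper's. The paper shares your reduction (via \eqref{d31.1}) and your matrix-tree bound $\det(H)\geq\varepsilon_{1}^{N}$, but instead of the M-matrix identity between minimum real part and minimum modulus, it invokes \cite{shivakumar1996two} to get $\min\{\mathrm{Re}(\lambda_{H})\}\geq 1/\|H^{-1}\|_{\infty}$, writes $\|H^{-1}\|_{\infty}=\|H^{*}\|_{\infty}/\det(H)$, and bounds the adjugate by $\|H^{*}\|_{\infty}\leq N(2\varepsilon_{2})^{N-1}$, estimating each cofactor by $(2\varepsilon_{2})^{N-1}$ and summing $N$ of them per row --- which is precisely where the factor $N$ in \eqref{d31} originates. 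Your approach trades the adjugate for the spectral factorization $\det H=\prod_{j}|\lambda_{H_{j}}|$, applies Gershgorin only to $H$ itself rather than to its (non-principal) minors, and naturally yields the constant without the factor $N$. Your justification of the M-matrix identity (the chain $|s-\nu|\geq s-\mathrm{Re}(\nu)\geq s-\rho(B)$ with $\rho(B)\in\sigma(B)$ by Perron--Frobenius, so the minimum-real-part eigenvalue is real and of minimum modulus) is correct, as is your observation that some such identity is indispensable, since a determinant bound alone cannot control the minimum real part when complex eigenvalues are present.
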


\begin{proof}
By Equation (\ref{d31.1}), it can be deduced that the eigenvalues of $(I_{N}\otimes S)-\mu (H\otimes I_{q})$ are $\lambda_{S_{i}}-\mu \lambda_{H_{j}}$, where $i=1,...,q$, $j=1,...,N$.  According to \cite{shivakumar1996two}, the following inequality holds:
$$
\min \{\mathrm{Re}(\lambda_{H})\}=\frac{1}{\max\{\mathrm{Re}(\lambda_{H^{-1}})\}} \geq \frac{1}{\|H^{-1}\|_{\infty}}.
$$
Thus, an upper bound for $\|H^{-1}\|_{\infty}$ needs to be found. Since $H^{-1}=\frac{H^{*}}{\det(H)}$, where $H^{*}$ is the adjoint matrix of $H$, it follows that
$$
\|H^{-1}\|_{\infty}=\frac{\|H^{*}\|_{\infty}}{|\det(H)|}=\frac{\|H^{*}\|_{\infty}}{\det(H)}.
$$
According to Equation (\ref{d0}), the diagonal element $H_{ii}$ of the matrix $H$ satisfies the following equality:
\begin{equation}
\label{952}
H_{ii}=\sum_{i\neq j,i=1}^{N}|H_{ij}|+a_{i0},
\end{equation}
where $a_{i0}$ is the weight of the edge from the leader to follower $i$. 
Under Assumption \ref{a4}, the nonzero elements $\mathcal{L}_{ij}$ in the matrix $\mathcal{L}$  satisfy the inequality $\varepsilon_{1} \leq |\mathcal{L}_{ij} |\leq \varepsilon_{2}.$ As a result, $\varepsilon_{1} \leq |H_{ij} |\leq \varepsilon_{2}$. Combing Gerschgorin’s Circle Theorem and Equation (\ref{952}), the following inequalities hold:
\begin{align*}
\max \{\lambda_{H}\}
&\leq H_{ii}+\sum_{i\neq j,i=1}^{N}|H_{ij}|
\leq H_{ii}+\sum_{i\neq j,i=1}^{N}|H_{ij}|+a_{i0}\\
&=2H_{ii} \leq 2\varepsilon_{2}.
\end{align*}
Thus, the elements $H^{*}_{ij}$ of the matrix $H^{*}$ satisfy the following inequality:
$$H^{*}_{ij}=(-1)^{i+j}M_{ji}\leq\lambda_{M_{ji_{1}}}...\lambda_{M_{ji_{N-1}}}\leq (2\varepsilon_{2})^{N-1},$$
where $M_{ji}$ is the cofactor of the element $H_{ij}$ and $\lambda_{M_{ji_{k}}}$ is the eigenvalue of $M_{ij}$.
Then, it follows that
\begin{equation}
\label{d2345}
\|H^{*}\|_{\infty}=\max_{i} \left\{\sum_{j=1}^{N}|H_{ij}| \right\} \leq N(2\varepsilon_{2})^{N-1}. 
\end{equation}
A lower bound of $\det(H)$ will be derived. 
When $N=2$, the matrix $H$ can be written as 
$$H_{2\times 2}=\begin{bmatrix}
a_{12}+a_{10} & -a_{12}\\
-a_{21} &a_{21}+a_{20}\\
\end{bmatrix},$$ 
    where $a_{12}+a_{21}>0$, $a_{10}+a_{20}>0$ or $a_{12}=a_{21}=0$, $a_{10},a_{20}>0$, because the matrix contains a directed spanning tree. There are $3$ spanning trees: $a_{21},a_{10}\neq 0$, otherwise $0$; $a_{20},a_{12}\neq 0$, otherwise $0$; $a_{20},a_{10}\neq 0$, otherwise $0$. Therefore, the lower bound of $\det(H_{2\times 2})$  can be computed as follows:
\begin{align*}
\det(H_{2\times 2})&=a_{21}a_{10}+a_{20}a_{12}+a_{20}a_{10}\geq \varepsilon_{1}^{2}.
\end{align*}
 According to Cayley's formula \cite{west2001introduction}, the number of spanning trees on the labeled nodes is at most $(k+1)^{k-1}$ when $N=k$. By the matrix tree theorem \cite{west2001introduction}, $\det(H)$ is the number of spanning trees if $a_{ij}=1$ or $0$  for all $i\neq j$. It can be deduced that $\det(H)$ is the sum of the products of  $k$ weights of $(k+1)^{k-1}$spanning trees, i.e.,
$$\det(H)=\sum_{(k+1)^{k-1}}a_{1i_{1}}a_{2i_{2}}...a_{ki_{k}},$$
where $a_{1i_{1}}a_{2i_{2}}...a_{ki_{k}}$ is a spanning tree.
By Assumption \ref{a3}, the graph contains at least one directed spanning tree, i.e.,
$$\det(H)\geq a_{1i_{1}}a_{2i_{2}}...a_{ki_{k}}.$$
Besides, a lower bound of the nonzero elements $a_{ij}$ is $\varepsilon_{1}$, i.e.,  $a_{ij}\geq \varepsilon_{1}$ when $a_{ij}\neq 0$.
Therefore, the lower bound of $\det(H)$ can be estimated
\begin{equation}
\label{d2333}
\det(H)\geq \varepsilon_{1}^{N}.
\end{equation}
Combing equations (\ref{d2345}) and (\ref{d2333}), the upper bound of $\|H^{-1}\|_{\infty}$ can be derived: 
$$\|H^{-1}\|_{\infty}\leq \frac{N(2\varepsilon_{2})^{N-1}}{\varepsilon_{1}^{N}}.$$
Then, it follows that
\begin{equation}\label{d31.222}
\min\{\mathrm{Re}(\lambda_{H})\} \geq \frac{1}{\|H^{-1}\|_{\infty}}  \geq  \frac{\varepsilon_{1}^{N}}{N(2\varepsilon_{2})^{N-1}}.
\end{equation}
Hence, the inequality (\ref{d31}) can be deduced.  Given that $\max\{\mathrm{Re}(\lambda_{S})\} -\mu \min\{\mathrm{Re}(\lambda_{H})\}<0$ according to Inequality (\ref{d31.222}), it follows that $(I_{N}\otimes S)-\mu (H\otimes I_{q})$ is stable. 
\end{proof}

\begin{remark}
Different from \cite{xie2023data} and \cite{jiao2021data}, which consider a known network topology, Theorem \ref{t1} estimates a lower bound on the minimum non-zero eigenvalue of the
Laplacian matrix, ensuring that the matrix $(I_{N}\otimes S) - \mu (H \otimes I_{q})$ is Hurwitz.
\end{remark}

\begin{remark}
In \cite{ li2016distributed}, a method is proposed to calculate the upper bound of $\|H^{-1}\|_{\infty} $, i.e.,
$
\|H^{-1}\|_{\infty} \leq \pi_{N},
$
where $\pi_{N}=A^{N-2}_{N-1}(\frac{4\max(a)}{\min(a)})^{N-2}(\frac{2\min(a)+\max(a)}{(\min(a))^{2}})+\frac{2}{\min(a)}\sum^{N-3}_{k=0}A_{N-1}^{k}(\frac{4\max(a)}{\min(a)})^{k}$, $\min(a) =\min\{a_{ij}:(\mathbf{v}_{j} ,\mathbf{v}_{i}) \in \mathcal{E}\}$, $\max(a) =\max\{a_{ij}:(\mathbf{v}_{j} ,\mathbf{v}_{i}) \in \mathcal{E}\}$ and $A_{n}^{m}=\frac{n!}{(n-m)!}$. Combining Assumption \ref{a4}, this condition can be extended to the case where the Laplacian matrix is unknown, i.e., $\varpi_{N} \geq \pi_{N}$,
where $$\varpi_{N}=A^{N-2}_{N-1}\left(\frac{4\varepsilon_{2}}{\varepsilon_{1}}\right)^{N-2}\left(\frac{2\varepsilon_{1}+\varepsilon_{2}}{\varepsilon_{1}^{2}}\right)+\frac{2}{\varepsilon_{1}}\sum^{N-3}_{k=0}A_{N-1}^{k}\left(\frac{4\varepsilon_{2}}{\varepsilon_{1}}\right)^{k}.$$ 
Then, it follows that
\begin{equation}\label{d31.333}
\min \{\mathrm{Re}(\lambda_{H})\} \geq \frac{1}{\|H^{-1}\|_{\infty}}  \geq  \frac{1}{\varpi_{N}}.  
\end{equation}
\end{remark}

\begin{example}
Let $H=\begin{bmatrix}
12 &0& 0& -7\\ -5 & 10 &0& -5\\ 0 & -5 &10 &-5\\-5& 0 &0 &5
\end{bmatrix}$, it can be verified that $\varepsilon_{1}=\min(a)=5$, $\max(a)=7$, and $\varepsilon_{2}=12$. Invoking Inequalities (\ref{d31.222}) and $(\ref{d31.333})$, the lower bounds of $\mathrm{Re}(\lambda_{H})$ are found as shown in Table \ref{tab2}. It is evident that Inequality (\ref{d31.222}) offers a better estimate compared to other bounds.
Inequality (\ref{d31.333}) is conservative compared to $\frac{1}{\pi_{N}}$ due to the fact that less information is available about the graph.
\begin{table}
\begin{center}
\caption{Estimates of the Lower bound of $\lambda_{H}$}
\label{tab2}
\begin{tabular}{| c | c | c | c |}
\hline
$\min{\mathrm{Re}(\lambda_{H})}$ & $\frac{\varepsilon_{1}^{N}}{N(2\varepsilon_{2})^{N-1}}$ & $\frac{1}{\pi_{N}}$ & $\frac{1}{\varpi_{N}}$ \\
\hline
 $1.6261$& $ 0.0113$
& $0.0074$& $ 0.0020 $\\
\hline
\end{tabular}
\end{center}
\end{table}
\end{example}

\begin{remark}
When the connections between the leader and the followers is unknown, but the connections among the followers are known, with all $a_{ij}=1$ for $i=0,\ldots, N$, more precise lower bounds for $\lambda_{H}$ and $\mu$ can be estimated. Refer to Fig. \ref{fig6} for a visual representation and Table \ref{tab1} for detailed results, where  $\delta_{N}=3\delta_{N-1}-\delta_{N-2}$, $\delta_{1}=1$ and $\delta_{2}=3$. All results in Table \ref{tab1} are derived using an iterative computing method. When the graph is complete, $\det(H)\geq N^{N-2}$ according to the matrix tree theorem \cite{west2001introduction}. The equation $\det(H)= N^{N-2}$ holds when the leader $0$ connects to a single follower $i$. The norm $\|H^{*}\|_{\infty}$ satisfies $\|H^{*}\|_{\infty}\leq (N+1)^{N-1}$. When the leader $0$ connects to all followers, $\|H^{*}\|_{\infty}= (N+1)^{N-1}$. Therefore, $\mathrm{Re}(\lambda_{H})$ satisfies $\mathrm{Re}(\lambda_{H})\geq \frac{\varepsilon_{1}^{N}}{N(2\varepsilon_{2})^{N-1}}$. Besides, the more information available about the graph, the more accurate the estimations of the lower bounds for both $\mathrm{Re}(\lambda_{H})$ and $\mu$.
\begin{figure}[ht]
  \centering
  \includegraphics[width=170pt]{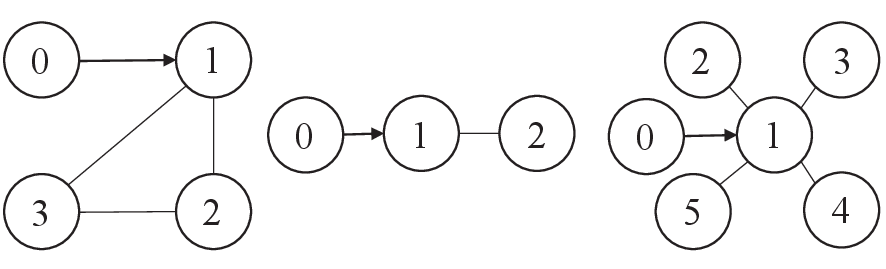}
  \caption{Followers are connected on different graphs.}
  \label{fig6}
\end{figure}

\begin{table}
\begin{center}
\caption{Estimates of the Lower bound of $\lambda_{H}$ and $\mu$ in different graphs}
\label{tab1}
\begin{tabular} {| c | c | c | }
\hline
Graph  & Lower bound of  &
Lower bound of \\
 & $ \mathrm{Re}( \lambda_{H})$ & $\mu$\\
\hline
Complete
graph
& $\frac{N^{N-2}}{(N+1)^{N-1}} $&$\frac{\max\{\mathrm{Re}(\lambda_{S})\}{(N+1)^{N-1}}}{N^{N-2}}$\\ 
\hline
Undirected path& $\frac{1}{\delta_{N}}$ & $\max\{\mathrm{Re}(\lambda_{S})\}\delta_{N}$\\
\hline
Star
& $\frac{1}{(N+1)2^{N-2}}$ & $\max \{\mathrm{Re}(\lambda_{S})\}(N+1)2^{N-2} $ \\
\hline
\end{tabular}
\end{center}
\end{table}
\end{remark}

\begin{remark}
If the leader is directly connected to all followers while the connections among the followers are unknown, $H$ is a strictly diagonally dominant matrix (see Fig. \ref{fig7}).
\begin{figure}[ht]
  \centering
  \includegraphics[width=80pt]{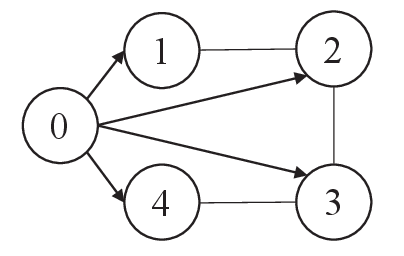}
  \caption{Leader connects to all followers and followers are connected by an undirected graph.}
  \label{fig7}
\end{figure} 
By Gerschgorin’s Circle Theorem,
 the following inequality holds:
$$\mu H_{ii}-S_{ii}> \mu\sum_{i\neq j}|H_{ij}|+\sum_{i\neq j}|S_{ij}|.$$
Then,
$$\mu>\frac{S_{ii}+\sum_{i\neq j}|S_{ij}|}{ H_{ii}-\sum_{i\neq j}|H_{ij}|}.$$ Note that $ H_{ii}-\sum_{i\neq j}|H_{ij}|=a_{i0}$, where $a_{i0}$ denotes the in-degree from the leader to the followers. Assuming $a_{i0}>\bar \varepsilon$, it can be deduced that
\begin{equation} \label{d31.444}
\mu>\frac{\max(S_{ii}+\sum_{i\neq j}|S_{ij}|)}{\bar \varepsilon}.
\end{equation}

\end{remark}

If $S$ is unknown, then the following proposition can be proved to show that the matrix $(I_{N}\otimes S)-\mu (H\otimes I_{q})$ is Hurwitz.
\begin{proposition}
\label{co1}
Under Assumptions \ref{a2}-\ref{a4}, assume that $S$ is unknown, and all elements $s_{ij}$ of $S$ satisfy  $|S_{ij}| \leq \epsilon$. The matrix $(I_{N}\otimes S)-\mu (H\otimes I_{q})$ is Hurwitz, if the constant $\mu$ satisfies 
\begin{equation}
\label{d32}
\mu > \frac {q\epsilon N\epsilon(2\varepsilon_{2})^{N-1}}{\varepsilon_{1}^{N}}.
\end{equation}
\end{proposition}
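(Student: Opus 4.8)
The plan is to reduce this statement to Theorem~\ref{t1} by replacing the quantity $\max\{\mathrm{Re}(\lambda_{S})\}$, which is no longer directly available since $S$ is unknown, with an explicit upper bound computed solely from the entrywise bound $|S_{ij}|\le\epsilon$. Recall from Equation~\eqref{d31.1} and the proof of Theorem~\ref{t1} that the eigenvalues of $(I_{N}\otimes S)-\mu(H\otimes I_{q})$ are exactly $\lambda_{S_{i}}-\mu\lambda_{H_{j}}$, so Hurwitzness is equivalent to $\max\{\mathrm{Re}(\lambda_{S})\}-\mu\min\{\mathrm{Re}(\lambda_{H})\}<0$, i.e.\ $\mu>\max\{\mathrm{Re}(\lambda_{S})\}/\min\{\mathrm{Re}(\lambda_{H})\}$. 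The lower bound $\min\{\mathrm{Re}(\lambda_{H})\}\ge \varepsilon_{1}^{N}/(N(2\varepsilon_{2})^{N-1})$ has already been established in Inequality~\eqref{d31.222}, so only the numerator requires a new estimate.

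First I would bound the spectral abscissa of $S$ via Gerschgorin's Circle Theorem. Every eigenvalue $\lambda_{S}$ lies in some disc centred at a diagonal entry $S_{ii}$ with radius $\sum_{j\neq i}|S_{ij}|$, whence $\mathrm{Re}(\lambda_{S})\le S_{ii}+\sum_{j\neq i}|S_{ij}|$. Using $|S_{ij}|\le\epsilon$ for each of the $q$ entries in a row gives $S_{ii}\le\epsilon$ and $\sum_{j\neq i}|S_{ij}|\le(q-1)\epsilon$, so that $\mathrm{Re}(\lambda_{S})\le q\epsilon$ and therefore $\max\{\mathrm{Re}(\lambda_{S})\}\le q\epsilon$.

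The final step is to substitute this estimate together with \eqref{d31.222} into the ratio condition. If $\mu$ exceeds the stated threshold (of the order $\frac{q\epsilon\,N(2\varepsilon_{2})^{N-1}}{\varepsilon_{1}^{N}}$), then $\mu\min\{\mathrm{Re}(\lambda_{H})\}\ge \mu\,\varepsilon_{1}^{N}/(N(2\varepsilon_{2})^{N-1})>q\epsilon\ge\max\{\mathrm{Re}(\lambda_{S})\}$, which forces every eigenvalue $\lambda_{S_{i}}-\mu\lambda_{H_{j}}$ to have a negative real part and hence renders the matrix Hurwitz. This is exactly the claim, the factor $q\epsilon$ replacing the $\max\{\mathrm{Re}(\lambda_{S})\}$ appearing in \eqref{d31}.

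I do not anticipate a genuine obstacle, as the proposition is essentially a corollary of Theorem~\ref{t1}; the only new content is the Gerschgorin estimate. The one point requiring care is that Gerschgorin must be invoked in its real-part form, bounding $\mathrm{Re}(\lambda_{S})$ rather than $|\lambda_{S}|$, which is what allows the crude entrywise bound $|S_{ij}|\le\epsilon$ to suffice without any assumption on the sign pattern or diagonal dominance of the unknown matrix $S$.
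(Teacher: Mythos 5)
Your proposal is correct and follows essentially the same route as the paper's own proof: Gerschgorin's Circle Theorem in its real-part form gives $\max\{\mathrm{Re}(\lambda_{S})\}\leq q\epsilon$, which is then combined with the bound $\min\{\mathrm{Re}(\lambda_{H})\}\geq \varepsilon_{1}^{N}/(N(2\varepsilon_{2})^{N-1})$ from Inequality~\eqref{d31.222} and the eigenvalue formula $\lambda_{S_{i}}-\mu\lambda_{H_{j}}$ from Equation~\eqref{d31.1}. Your write-up is in fact slightly more explicit than the paper's (spelling out $S_{ii}\leq\epsilon$ and $\sum_{j\neq i}|S_{ij}|\leq(q-1)\epsilon$ rather than passing through $\|S\|_{\infty}$), but the substance is identical.
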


\begin{proof}
According to Gerschgorin’s Circle Theorem, the following inequality holds: 
\begin{equation}
\label{d32.1}
\begin{aligned}
\mathrm{Re}(\lambda_{S})&\leq S_{ii}+\sum^{q}_{j=1,i\neq j} |S_{ij}|  
\leq \|S\|_{\infty}.
\end{aligned}
\end{equation}
    Since $|S_{ij}|\leq \epsilon$, Inequality (\ref{d32.1}) gives rise to
\begin{equation}
\label{d322.1}
\begin{aligned}
\mathrm{Re}(\lambda_{S})\leq \|S\|_{\infty}\leq  q\epsilon.
\end{aligned}
\end{equation}
By applying Equations (\ref{d31.222}) and (\ref{d322.1}), one obtains Equation (\ref{d32}). Consequently, it follows that $\mathrm{Re}(\lambda_{S_{i}}) - \mu \mathrm{Re}(\lambda_{H_{j}}) < 0$. This implies that $(I_{N} \otimes S) - \mu (H \otimes I_{q})$ is stable. 
\end{proof}



\subsubsection{The cooperative output regulation} \hfill 

The following definition is introduced to formalize the informativity for the cooperative output regulation problem based on Lemma \ref{l6}.

\begin{definition} \label{de3}
Under Assumptions \ref{a2}-\ref{a4}, suppose $\mu$ is sufficiently large. The data $(X_{i}, U_{i},\Im_{i})$ are informative for the cooperative output regulation problem for $i \in \{1,...,N\}$,
if  $\sum_{D}\subseteq \sum_{A_{C}} $, and either of the following two conditions is satisfied:
\begin{itemize}
    \item 
$\sum_{Fi}^{e}\subseteq \sum_{zero}^{e}$,
where \\$\sum_{zero}=\left\{  \begin{bmatrix}
   A_{i} &  B_{i} \\
    C_{i} &  D_{i}
\end{bmatrix} \Bigg | \mathrm{rank}\begin{bmatrix}
   A_{i}-\lambda_{S} I_{i} &  B_{i} \\
    C_{i} &  D_{i}
\end{bmatrix}=n_{i}+p_{i}
\right\}. $
\item 
$\sum_{Fi} \subseteq \sum_{regu}$, where $\sum_{regu}=\left\{\begin{bmatrix}
    A_{i}&B_{i}\\C_{i}&D_{i}
\end{bmatrix} \bigg| \textit{Equation \eqref{d34.1} has a soluntion $(\Pi_{i},\Gamma_{i})$}  \right \}.$
\end{itemize}
\end{definition}


\begin{remark}
Definition \ref{de3} encompasses Definition \ref{de2}, and consists of two key components: $A_{C}$ being Hurwitz and $\lim \limits_{t \rightarrow + \infty} e(t)=0$. Besides, the first condition in Definition \ref{de3} is sufficient, while the second condition is both necessary and sufficient, as shown in \cite{su2011cooperative}.
\end{remark}

The following theorem presents a sufficient condition for data-driven cooperative output regulation problem.

\begin{theorem}
\label{t3}
Under Assumptions \ref{a2}-\ref{a4}, suppose $\mu$ is sufficiently  large. The data $(X_{i}, U_{i},\Im_{i})$ are informative for the cooperative output regulation problem, if there exists a right inverse $X_{i}^{+}$ of $X_{i}$ such that $(X_{i}\mathcal{D}_{11}-E_{i}V)X_{i}^{+}$ is stable, and the following data-driven transmission zero condition holds:
\begin{equation}
\label{d29}
\mathrm{rank}\begin{bmatrix} X_{i}\mathcal{D}_{11}-E_{i}V-\lambda_{S} X_{i} \\ \Im_{i}-F_{i}V \end{bmatrix}=n_{i}+p_{i},
\end{equation}
 for all $\lambda_{S} \in\sigma(S)$, and $i \in \{1,...,N\}$.
\end{theorem}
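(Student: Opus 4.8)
The plan is to verify both requirements in Definition \ref{de3}: the stabilizability inclusion $\sum_{D}\subseteq\sum_{A_{C}}$ together with the first (transmission-zero) condition $\sum_{Fi}^{e}\subseteq\sum_{zero}$. The stabilizability half is immediate from Theorem \ref{t0}: the hypothesis that $(X_{i}\mathcal{D}_{11}-E_{i}V)X_{i}^{+}$ is stable for some right inverse $X_{i}^{+}$ and every $i$ is exactly the necessary and sufficient condition established there, so the data $(X_{i},U_{i})$ are informative for the stabilizability of $A_{C}$, and in particular each $X_{i}$ has full row rank.

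The heart of the argument is to transfer the data-driven rank condition (\ref{d29}) to the model-based transmission-zero condition (\ref{d28.1}) for \emph{every} member of $\sum_{Fi}^{e}$. First I would use the defining relations of $\sum_{Fi}^{e}$ in (\ref{d27}), namely $X_{i}\mathcal{D}_{11}-E_{i}V = A_{i}X_{i}+B_{i}U_{i}$ and $\Im_{i}-F_{i}V = C_{i}X_{i}+D_{i}U_{i}$, to rewrite the data pencil as a product,
\begin{equation*}
\begin{bmatrix} X_{i}\mathcal{D}_{11}-E_{i}V-\lambda_{S} X_{i} \\ \Im_{i}-F_{i}V \end{bmatrix} = \begin{bmatrix} A_{i}-\lambda_{S} I_{i} & B_{i} \\ C_{i} & D_{i} \end{bmatrix}\begin{bmatrix} X_{i} \\ U_{i} \end{bmatrix},
\end{equation*}
a factorization that holds identically for every $(A_{i},B_{i},C_{i},D_{i})\in\sum_{Fi}^{e}$, since these are precisely the equations defining that set.

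From the submultiplicativity of rank, the rank of the product is at most the rank of the left factor, i.e.\ the system pencil. Because the left factor has only $n_{i}+p_{i}$ rows, the hypothesis (\ref{d29}) that the product has rank $n_{i}+p_{i}$ forces the system pencil itself to have full row rank $n_{i}+p_{i}$ for all $\lambda_{S}\in\sigma(S)$. Hence (\ref{d28.1}) holds for every element of $\sum_{Fi}^{e}$, which is exactly $\sum_{Fi}^{e}\subseteq\sum_{zero}$. Invoking Definition \ref{de3} with its first (sufficient) branch and Lemma \ref{l6}, the data $(X_{i},U_{i},\Im_{i})$ are informative for the cooperative output regulation problem, with the feedback gain recovered as $K_{1i}=U_{i}X_{i}^{+}$ and $K_{2i}$ obtained from the internal-model construction.

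The step I expect to require the most care is arguing that the factorization and the rank transfer are valid uniformly over the entire consistency set $\sum_{Fi}^{e}$ rather than only for the true plant. Fortunately only the easy direction of the rank inequality is needed: the product having full row rank forces the pencil to have full row rank. No converse is required, because the claim is one of sufficiency and the right factor $\begin{bmatrix} X_{i} \\ U_{i} \end{bmatrix}$ need not have full column rank; this is why only a sufficient (not necessary) condition is asserted in the theorem.
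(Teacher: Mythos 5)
Your proposal is correct and follows essentially the same route as the paper's proof: the same factorization $\begin{bmatrix} X_{i}\mathcal{D}_{11}-E_{i}V-\lambda_{S} X_{i} \\ \Im_{i}-F_{i}V \end{bmatrix}=\begin{bmatrix} A_{i}-\lambda_{S} I_{i} & B_{i} \\ C_{i} & D_{i} \end{bmatrix}\begin{bmatrix} X_{i} \\ U_{i} \end{bmatrix}$, the same rank sandwich $n_{i}+p_{i}=\mathrm{rank}(\Delta_{i}\Psi_{i})\leq\mathrm{rank}(\Delta_{i})\leq n_{i}+p_{i}$, and the same appeal to Theorem \ref{t0} for the stabilizability half before invoking the sufficient branch of Definition \ref{de3}. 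Your added remark that the argument applies uniformly over $\sum_{Fi}^{e}$ and only needs the easy direction of the rank inequality is a point the paper leaves implicit, but it does not change the substance.
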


\begin{proof}
Let $\Delta_{i} =\begin{bmatrix} A_{i}-\lambda_{S} I_{i} & B_{i} \\ C_{i} & D_{i} \end{bmatrix}$, $\Psi_{i}=\begin{bmatrix} X_{i} \\ U_{i} \end{bmatrix}$, $\Theta_{i}=\begin{bmatrix}  X_{i}\mathcal{D}_{11}-E_{i}V-\lambda_{S} X_{i} \\ \Im_{i}-F_{i}V \end{bmatrix}$.
Combining Equations (\ref{d27}) and (\ref{d29}), the following equation holds: 
\begin{align*}
  \begin{bmatrix}  X_{i}\mathcal{D}_{11}-E_{i}V-\lambda_{S} X_{i} \\ \Im_{i}-F_{i}V \end{bmatrix}=\begin{bmatrix} A_{i}-\lambda_{S} I_{i} & B_{i} \\ C_{i} & D_{i} \end{bmatrix}\begin{bmatrix} X_{i} \\ U_{i} \end{bmatrix},
\end{align*}
  i.e., $\Theta_{i}=\Delta_{i}\Psi_{i}$. It is evident that $n_{i}+p_{i}=\mathrm{rank}(\Delta_{i}\Psi_{i}) \leq \mathrm{rank}(\Delta_{i}) \leq n_{i}+p_{i}$. Therefore, $\mathrm{rank}(\Delta_{i})=n_{i}+p_{i}$. Furthermore, since there exists a right inverse $X_{i}^{+}$ of $X_{i}$ such that $(X_{i}\mathcal{D}_{11}-E_{i}V)X_{i}^{+}$ is stable, the data $(X_{i}, U_{i})$ are informative for the stabilizability of $A_{C}$. Consequently, the data $(X_{i}, U_{i},\Im_{i})$ are informative for the cooperative output regulation problem. 
\end{proof}

The following theorem establishes the necessary and sufficient condition for the data-driven cooperative output regulation problem. Additionally, it provides a method to compute the matrix $K_{2i}$.

\begin{theorem}
\label{t4}
Under Assumptions \ref{a2}-\ref{a4}, suppose $\mu$ is sufficiently large. The data $(X_{i}, U_{i},\Im_{i})$ are informative for the cooperative output regulation problem, if and only if, there exists a right inverse $X_{i}^{+}$ of $X_{i}$ such that $(X_{i}\mathcal{D}_{\mathbin{b}}-E_{i}V)X_{i}^{+}$ is stable, and the following data-driven regulator equations have a solution $M_{i}$:
\begin{equation}
\label{d33}
\begin{aligned}
(X_{i}\mathcal{D}_{11}-E_{i}V)M_{i}&=X_{i}M_{i}S-E_{i}, \\
(\Im_{i}-F_{i}V)M_{i}&=-F_{i},
\end{aligned}
\end{equation}
for $i \in \{1,...,N\}$. Besides, $K_{2i}$ can be computed as 
\begin{equation} \label{d34} K_{2i}=(U_{i}-K_{1i}X_{i})M_{i} .\end{equation}

\end{theorem}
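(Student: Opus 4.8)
The plan is to reduce the data-driven regulator equations \eqref{d33} to the model-based regulator equations \eqref{d34.1} of Lemma \ref{l6} through the change of variables $\Pi_{i}=X_{i}M_{i}$ and $\Gamma_{i}=U_{i}M_{i}$, and then invoke the Remark following Definition \ref{de3}, which states that the condition $\sum_{Fi}^{e}\subseteq\sum_{regu}$ is both necessary and sufficient for informativity once $A_{C}$ is stabilizable. First I would record the two defining identities of the exact-data set \eqref{d27}, namely $A_{i}X_{i}+B_{i}U_{i}=X_{i}\mathcal{D}_{11}-E_{i}V$ and $C_{i}X_{i}+D_{i}U_{i}=\Im_{i}-F_{i}V$, valid for \emph{every} $(A_{i},B_{i},C_{i},D_{i})\in\sum_{Fi}^{e}$. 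Substituting $\Pi_{i}=X_{i}M_{i}$, $\Gamma_{i}=U_{i}M_{i}$ into \eqref{d34.1} and eliminating $A_{i}X_{i}+B_{i}U_{i}$ and $C_{i}X_{i}+D_{i}U_{i}$ with these identities turns \eqref{d34.1} into precisely \eqref{d33}; the crucial point is that \eqref{d33} involves only data and the known matrices $E_{i},F_{i},S,\mathcal{D}_{11}$, so a single $M_{i}$ simultaneously solves \eqref{d34.1} for all systems in $\sum_{Fi}^{e}$.

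For sufficiency I would argue as follows. Stability of $(X_{i}\mathcal{D}_{11}-E_{i}V)X_{i}^{+}$ makes the data informative for the stabilizability of $A_{C}$ by Theorem \ref{t0}, so $\sum_{D}\subseteq\sum_{A_{C}}$. If in addition \eqref{d33} has a solution $M_{i}$, the correspondence above shows that $(\Pi_{i},\Gamma_{i})=(X_{i}M_{i},U_{i}M_{i})$ solves \eqref{d34.1} for every $(A_{i},B_{i},C_{i},D_{i})\in\sum_{Fi}^{e}$, i.e. $\sum_{Fi}^{e}\subseteq\sum_{regu}$. By Definition \ref{de3} and Lemma \ref{l6}, this yields informativity for the cooperative output regulation problem.

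For necessity, Theorem \ref{t0} again supplies a right inverse $X_{i}^{+}$ with $(X_{i}\mathcal{D}_{11}-E_{i}V)X_{i}^{+}$ stable and, crucially, the full row rank of $X_{i}$ together with the inclusion $\mathrm{Im}\begin{bmatrix}I_{i}\\K_{1i}\end{bmatrix}\subseteq\mathrm{Im}\begin{bmatrix}X_{i}\\U_{i}\end{bmatrix}$ from Lemma \ref{l5} and \eqref{d119}. Informativity gives a solution $(\Pi_{i},\Gamma_{i})$ of \eqref{d34.1}; the task is to realize it as $(X_{i}M_{i},U_{i}M_{i})$, i.e. to show $\begin{bmatrix}\Pi_{i}\\\Gamma_{i}\end{bmatrix}\in\mathrm{Im}\begin{bmatrix}X_{i}\\U_{i}\end{bmatrix}$, after which $M_{i}$ follows by applying a right inverse. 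I expect this realization step to be the main obstacle: writing $\Gamma_{i}=K_{1i}\Pi_{i}+K_{2i}$, the part $\begin{bmatrix}\Pi_{i}\\K_{1i}\Pi_{i}\end{bmatrix}=\begin{bmatrix}X_{i}\\U_{i}\end{bmatrix}X_{i}^{+}\Pi_{i}$ already lies in the image by \eqref{d119}, so the real content is to absorb the residual component $\begin{bmatrix}0\\K_{2i}\end{bmatrix}$ into $\mathrm{Im}\begin{bmatrix}X_{i}\\U_{i}\end{bmatrix}$, which is exactly where the richness of the data collected over the whole set $\sum_{Fi}^{e}$ must be exploited.

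Finally, the gain formula \eqref{d34} follows directly: from $\Gamma_{i}=K_{1i}\Pi_{i}+K_{2i}$ with $\Pi_{i}=X_{i}M_{i}$ and $\Gamma_{i}=U_{i}M_{i}$ one obtains $K_{2i}=U_{i}M_{i}-K_{1i}X_{i}M_{i}=(U_{i}-K_{1i}X_{i})M_{i}$, which completes the construction.
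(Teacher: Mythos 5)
Your sufficiency half and the gain formula match the paper's proof exactly: you substitute the data identities $X_{i}\mathcal{D}_{11}-E_{i}V=A_{i}X_{i}+B_{i}U_{i}$ and $\Im_{i}-F_{i}V=C_{i}X_{i}+D_{i}U_{i}$ from \eqref{d27} into \eqref{d33}, set $\Pi_{i}=X_{i}M_{i}$, $\Gamma_{i}=U_{i}M_{i}$ to recover \eqref{d34.1} for every system in ${\sum}_{Fi}^{e}$, and use Theorem~\ref{t0} for the stabilizability part; the identity $K_{2i}=(U_{i}-K_{1i}X_{i})M_{i}$ follows as you say. That direction is fine.

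The necessity direction, however, contains a genuine gap, and you flag it yourself without closing it. The crux is precisely the realization step $\mathrm{col}(\Pi_{i},\Gamma_{i})\in\mathrm{Im}\,\mathrm{col}(X_{i},U_{i})$, and the decomposition you sketch — writing $\begin{bmatrix}\Pi_{i}\\\Gamma_{i}\end{bmatrix}$ as $\begin{bmatrix}X_{i}\\U_{i}\end{bmatrix}X_{i}^{+}\Pi_{i}$ plus a residual $\begin{bmatrix}0\\K_{2i}\end{bmatrix}$ to be ``absorbed'' — does not lead anywhere: nothing in your argument forces $\mathrm{col}(0,K_{2i})$ into $\mathrm{Im}\,\mathrm{col}(X_{i},U_{i})$, and saying that ``the richness of the data must be exploited'' is exactly the statement requiring proof, not a proof. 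The paper closes this step with a kernel-duality argument you never invoke. Since the controller $(K_{1i},K_{2i})$ is common to all systems consistent with the data, the pair $(\Pi_{i},\Gamma_{i})$ tied to it by $\Gamma_{i}=K_{1i}\Pi_{i}+K_{2i}$ solves the regulator equations for \emph{every} element of ${\sum}_{Fi}^{e}$. The difference of any two such systems lies in ${\sum}_{Fi}^{0}$ of \eqref{d22}, i.e., it annihilates $\mathrm{col}(X_{i},U_{i})$; subtracting the two copies of \eqref{d34.1} cancels the inhomogeneous terms $\Pi_{i}S-E_{i}$ and $-F_{i}$, so this difference also annihilates $\mathrm{col}(\Pi_{i},\Gamma_{i})$. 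Hence $\mathrm{Ker}\begin{bmatrix}X_{i}^{T} & U_{i}^{T}\end{bmatrix}\subseteq\mathrm{Ker}\begin{bmatrix}\Pi_{i}^{T} & \Gamma_{i}^{T}\end{bmatrix}$, which dualizes to $\mathrm{Im}\begin{bmatrix}\Pi_{i}\\\Gamma_{i}\end{bmatrix}\subseteq\mathrm{Im}\begin{bmatrix}X_{i}\\U_{i}\end{bmatrix}$ and yields $\begin{bmatrix}\Pi_{i}\\\Gamma_{i}\end{bmatrix}=\begin{bmatrix}X_{i}\\U_{i}\end{bmatrix}M_{i}$; back-substituting the data identities then produces \eqref{d33} and, with $\Gamma_{i}=K_{1i}\Pi_{i}+K_{2i}$, formula \eqref{d34}. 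Without this (or an equivalent) argument your necessity proof is incomplete.
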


\begin{proof}
Sufficiency: 
By Equation (\ref{d27}), $X_{i}\mathcal{D}_{11}-E_{i}V=A_{i}X_{i}+B_{i}U_{i}$, $\Im_{i}-F_{i}V=C_{i}X_{i}+D_{i}U_{i}$. Substituting these equations into Equation (\ref{d33}), the following equations can be obtained:
\begin{equation}\label{d35}
\begin{aligned}
A_{i}X_{i}M_{i}+B_{i}U_{i}M_{i}+E_{i}&=X_{i}M_{i}S, \\
C_{i}X_{i}M_{i}+D_{i}U_{i}M_{i}+F_{i}&=0.
\end{aligned}
\end{equation}
 Let $X_{i}M_{i}=\Pi_{i}$, $U_{i}M_{i}=\Gamma_{i}$. By using Equation (\ref{d35}), Equation (\ref{d34.1}) can be obtained, where $\Gamma_{i}=K_{1i} \Pi_{i}+K_{2i}$. Besides, since there exists a right inverse $X_{i}^{+}$ of $X_{i}$ such that $(X_{i}\mathcal{D}_{\mathbin{b}}-E_{i}V)X_{i}^{+}$ is stable, the data $(X_{i}, U_{i})$ are informative for the stabilizability of $A_{C}$. 
 
 Necessity: 
 Equation (\ref{d34.1}) can be expressed as $\begin{bmatrix} A_{i} & B_{i} \\ C_{i} & D_{i} \end{bmatrix}\begin{bmatrix} \Pi_{i} \\ \Gamma_{i} \end{bmatrix} = \begin{bmatrix} \Pi_{i} S-E_{i} \\ -F_{i} \end{bmatrix} $. The homogeneous equation is $\begin{bmatrix} A_{i} & B_{i} \\ C_{i} & D_{i} \end{bmatrix}\begin{bmatrix} \Pi_{i} \\ \Gamma_{i} \end{bmatrix} =\begin{bmatrix} 0 \\ 0 \end{bmatrix}.$ Moreover, $\begin{bmatrix} \Pi_{i}^{T} & \Gamma_{i}^{T} \end{bmatrix}\begin{bmatrix} A_{i} & B_{i} \\ C_{i} & D_{i} \end{bmatrix}^{T}=\begin{bmatrix} 0 & 0 \end{bmatrix}$, i.e., $ \begin{bmatrix} A_{i} & B_{i} \\ C_{i} & D_{i} \end{bmatrix} \in {\sum}_{Fi}^{0} $ with $$ \begin{bmatrix} A_{i} & B_{i} \\ C_{i} & D_{i} \end{bmatrix}^{T} \in \mathrm{Ker} \begin{bmatrix} \Pi_{i}^{T} & \Gamma_{i}^{T} \end{bmatrix}.$$ Since Equation (\ref{d22}) holds, it follows that $ \begin{bmatrix} X_{i}^{T} & U_{i}^{T} \end{bmatrix}\begin{bmatrix} A_{i} & B_{i} \\ C_{i} & D_{i} \end{bmatrix}^{T}=\begin{bmatrix} 0 & 0 \end{bmatrix}$, i.e.,  
$$\begin{bmatrix} A_{i} & B_{i} \\ C_{i} & D_{i} \end{bmatrix}^{T} \in  \mathrm{Ker} \begin{bmatrix} X_{i}^{T} & U_{i}^{T} \end{bmatrix}.$$
 Since the data $(X_{i}, U_{i}, \Im_{i})$ are informative for the cooperative output regulation problem, it follows that $  \mathrm{Ker}  \begin{bmatrix} X_{i}^{T} & U_{i}^{T}\end{bmatrix} \subseteq  \mathrm{Ker} \begin{bmatrix} \Pi_{i}^{T} & \Gamma_{i}^{T} \end{bmatrix} $. This implies that $  \mathrm{Im} \begin{bmatrix} \Pi_{i} \\ \Gamma_{i} \end{bmatrix} \subseteq  \mathrm{Im} \begin{bmatrix} X_{i} \\ U_{i}\end{bmatrix}$, which can be expressed as $\begin{bmatrix} \Pi_{i} \\ \Gamma_{i} \end{bmatrix} = \begin{bmatrix} X_{i} \\ U_{i}\end{bmatrix} M_{i}.$ Recalling that $\Gamma_{i}=K_{1i} \Pi_{i}+K_{2i}$, the following equation is obtained:
 \begin{equation} \label{d22.2}
\begin{bmatrix} \Pi_{i} \\ \Gamma_{i} \end{bmatrix}=\begin{bmatrix} \Pi_{i} \\ K_{1i}\Pi_{i}+K_{2i}\Gamma_{i} \end{bmatrix}=\begin{bmatrix} X_{i} \\ U_{i} \end{bmatrix}M_{i}.
\end{equation}
By combining Equations (\ref{d27}), (\ref{d34.1}) and (\ref{d22.2}),  Equation (\ref{d33}) can be obtained, where $M_{i}$ is the solution of Equation (\ref{d33}). Therefore, Equation (\ref{d33}) has a solution $M_{i}$, and Equation (\ref{d34}) holds. Besides, since the data $(X_{i}, U_{i})$ are informative for the stabilizability of $A_{C}$, there exists a right inverse $X_{i}^{+}$ of $X_{i}$ such that $(X_{i}\mathcal{D}_{\mathbin{b}}-E_{i}V)X_{i}^{+}$ is stable.
\end{proof}

\begin{remark}
The cooperative output regulation problem for discrete-time systems under the persistency of excitation condition has been explored in \cite{liang2024data,xie2023data}. This paper focuses on the cooperative output regulation problem for continuous-time systems, without the need for the persistency of excitation condition as required in \cite{liang2024data}. This approach reduces the data requirements.
\end{remark}

Algorithm \ref{suanfa} is derived from the main conclusions of this paper, detailing the steps for implementing the data-driven cooperative output regulation to compute the controller (\ref{d3}).
\begin{algorithm}
\footnotesize
\caption{Data-Driven Cooperative Output Regulation}
\label{suanfa}
\begin{algorithmic}[1]
        \STATE  Collect $x_{i}(t)$, $u_{i}(t)$, $e_{i}(t)$, $v_{i}(t)$.
        \STATE Express $x_{i}(t)$, $u_{i}(t)$, $e_{i}(t)$, $v_{i}(t)$ using OPBs and collect its coefficients into matrices $X_{i},U_{i},\Im_{i},V_{i}$;
        \STATE Compute $\mu$ through (\ref{d31});
        \STATE Calculate $K_{1i}$ according to \eqref{d28} and \eqref{d17.7};
        \STATE Calculate $M_{i}$ based on (\ref{d33});
        \STATE Compute $K_{2i}$  based on (\ref{d34});
        \STATE Substitute $K_{1i}$, $K_{2i}$, and $\mu$ into (\ref{d3});
        \STATE Return $u_{i}(t)$;
\end{algorithmic}
\end{algorithm}

When $d(t)=0$, the cooperative output regulation problem becomes the output synchronization problem for the multi-agent systems. The set (\ref{d27})  becomes 
\begin{equation}
\label{d39}
\begin{aligned}
{\sum}_{F1i}^{e}=\left\{\begin{bmatrix} A_{i} & B_{i} \\ C_{i} & D_{i} \end{bmatrix}\Bigg|\begin{bmatrix} A_{i} & B_{i} \\ C_{i} & D_{i} \end{bmatrix}\begin{bmatrix} X_{i}\\ U_{i} \end{bmatrix}=\begin{bmatrix} X_{i}\mathcal{D}_{11}\\ Y_{i} \end{bmatrix}\right\},
\end{aligned}
\end{equation}
where $Y_{i}=\begin{bmatrix} \tilde{y}_{i0} & \tilde{y}_{i1} & \tilde{y}_{i2} & ... & \tilde{y}_{iN} \end{bmatrix}$ represents the output matrix of the $i$-th follower and $\tilde{y}_{ik}$ is the coefficient vector corresponding to $b_{k}$. The leader system set can be expressed as
\begin{equation}
\label{d41}
\begin{aligned}
{\sum}_{L1}^{e}=\left\{\begin{bmatrix} A_{0}  \\ C_{0}\end{bmatrix}\Bigg|\begin{bmatrix} A_{0} \\ C_{0} \end{bmatrix}X_{0}=\begin{bmatrix} X_{0}\mathcal{D}_{11}\\ Y_{0} \end{bmatrix}\right\},
\end{aligned}
\end{equation}
where $X_{0}=\begin{bmatrix} \tilde{x}_{00} & \tilde{x}_{01} & \tilde{x}_{02} & ... & \tilde{x}_{0N} \end{bmatrix}$ represents the state matrix of the leader, and $Y_{0}=\begin{bmatrix} \tilde{y}_{00} & \tilde{y}_{01} & \tilde{y}_{02} & ... & \tilde{y}_{0N} \end{bmatrix}$ represents the output matrix of the leader. Here, $\tilde{y}_{0k}$ and $\tilde{x}_{0k}$ are the coefficient vectors corresponding to $b_{k}$. The following corollary addresses the output synchronization for the multi-agent systems when the leader system (\ref{d2.1}) is unknown.

\begin{corollary} \label{co3}
Under Assumptions \ref{a3}-\ref{a4}, suppose the leader system (\ref{d2.1}) is unknown. Assume that all elements of matrix $A_0$ satisfy $|A_{0_{ij}}| \leq \epsilon_{1}$, $\mu$ is sufficiently large, and $X_{0}$ has full row-rank. The data $(X_{i}, U_{i},\Im_{i})$ are informative for output synchronization for all $i \in \{1,...,N\}$, if and only if, the following two conditions are satisfied:
\begin{itemize}
\item The data-driven regulator equations have solutions $M_{i}$:
\begin{equation}
\label{d33.11}
\begin{aligned}
X_{i}\mathcal{D}_{11}M_{i}&=X_{i}M_{i}X_{0}\mathcal{D}_{11}X_{0}^{+}, \\
Y_{i}M_{i}&=Y_{0}X_{0}^{+}.
\end{aligned}
\end{equation}
  \item The data $(X_i, U_i)$ are informative for stabilizability.
  \end{itemize}
\end{corollary}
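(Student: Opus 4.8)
The plan is to obtain Corollary~\ref{co3} as the specialization of Theorem~\ref{t4} to the disturbance-free output synchronization setting, in which the exosystem degenerates to the leader dynamics and the unknown exosystem matrix is reconstructed from leader data. First I would record how the data matrices simplify when $d(t)=0$: then $v(t)=x_0(t)$, $S=A_0$, and since $\begin{bmatrix} E_i \\ F_i \end{bmatrix}=\begin{bmatrix} 0 & \bar E_i \\ -C_0 & \bar F_i \end{bmatrix}$ only the leader-reference block survives, yielding $E_i=0$ and $F_i=-C_0$. Hence $V=X_0$ and the error coefficient matrix becomes $\Im_i = Y_i + F_i V = Y_i - C_0 X_0 = Y_i - Y_0$.

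The enabling step is to reconstruct the unknown leader dynamics from data. Because $X_0$ has full row rank, the leader set~(\ref{d41}) admits a right inverse $X_0^+$ and gives $A_0 = X_0 \mathcal{D}_{11} X_0^+$ and $C_0 = Y_0 X_0^+$. Substituting $S = A_0 = X_0\mathcal{D}_{11}X_0^+$ and $E_i=0$ into the first regulator equation of~(\ref{d33}), namely $(X_i\mathcal{D}_{11}-E_iV)M_i = X_i M_i S - E_i$, collapses it to $X_i\mathcal{D}_{11}M_i = X_i M_i X_0 \mathcal{D}_{11} X_0^+$, the first line of~(\ref{d33.11}). Likewise, inserting $\Im_i = Y_i - Y_0$, $F_i=-C_0$, $V=X_0$ into the second equation $(\Im_i-F_iV)M_i = -F_i$ and using $Y_0 = C_0 X_0$ makes the left factor equal $Y_i$ and the right-hand side equal $C_0 = Y_0 X_0^+$, producing $Y_i M_i = Y_0 X_0^+$, the second line of~(\ref{d33.11}). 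Since this substitution is an exact equality of the two systems of equations, the solvability conditions are literally identical and the ``if and only if'' of Theorem~\ref{t4} transfers unchanged.

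For the stabilizability half of informativity I would split the closed-loop matrix into its two diagonal blocks. The follower block $A+BK_1$ is governed by the second bullet: by Theorem~\ref{t2}, ``$(X_i,U_i)$ informative for stabilizability'' is equivalent to solvability of the inequalities~(\ref{d17.7}) and produces $K_{1i}$ through~(\ref{d28}); with $E_i=0$ this is exactly the requirement that $(X_i\mathcal{D}_{11})X_i^+$ be stable, the form in which the stabilizing condition appears in Theorem~\ref{t4}. The exosystem block $(I_N\otimes S)-\mu(H\otimes I_q)$ must be made Hurwitz even though $S=A_0$ is unknown; for this I would invoke Proposition~\ref{co1}, whose sole extra hypothesis $|A_{0_{ij}}|\le\epsilon_1$ is exactly what the corollary assumes, so that a sufficiently large $\mu$---bounded below through the edge-weight bounds of Assumption~\ref{a4} via Theorem~\ref{t1}---renders this block stable. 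Combining the two blocks with the solvable regulator equations and appealing to Theorem~\ref{t4} then yields the claim.

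The main obstacle is organizational rather than analytic: one must verify that the unknown matrix $S$ enters the regulator equations only through the data-reconstructible product $X_0\mathcal{D}_{11}X_0^+$, so that no hidden dependence on the true leader model survives, and that the topology/stabilizability requirement involving the unknown $S$ is fully absorbed by Proposition~\ref{co1} through the entrywise bound. The full-row-rank hypothesis on $X_0$ is indispensable precisely because it is what makes $A_0$ and $C_0$ identifiable from the leader data; without it the right inverse $X_0^+$, and hence the reduction to~(\ref{d33.11}), would be ill-defined.
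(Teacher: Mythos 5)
Your proposal is correct and follows essentially the same route as the paper: use the full-row-rank hypothesis to identify the leader from data via $A_{0}=X_{0}\mathcal{D}_{11}X_{0}^{+}$ and $C_{0}=Y_{0}X_{0}^{+}$, substitute so that \eqref{d33.11} coincides exactly with the data-driven regulator equations \eqref{d33} of Theorem~\ref{t4} specialized to $E_{i}=0$, $F_{i}=-C_{0}$, $V=X_{0}$, and then carry over Theorem~\ref{t4}'s necessity/sufficiency argument together with the stabilizability condition. You are in fact slightly more explicit than the paper's own (terse) proof, which leaves implicit the point you make about the entrywise bound $|A_{0_{ij}}|\leq\epsilon_{1}$ being what allows Proposition~\ref{co1} to render the block $(I_{N}\otimes S)-\mu(H\otimes I_{q})$ Hurwitz when $S=A_{0}$ is unknown.
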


\begin{proof}
Since $X_{0}$ has full row-rank, Equation (\ref{d41}) becomes 
\begin{equation}
\label{d42}
\begin{aligned}
\begin{bmatrix} A_{0} \\ C_{0} \end{bmatrix}=\begin{bmatrix} X_{0}\mathcal{D}_{11}\\ Y_{0} \end{bmatrix}X_{0}^{+}.
\end{aligned}
\end{equation}
Substituting Equation (\ref{d42}) into Equation (\ref{d33.11}), the following equations can be obtained:
\begin{equation}
\label{d43}
\begin{aligned}
X_{i}\mathcal{D}_{11}M_{i}&=X_{i}M_{i}A_{0}, \\
Y_{i}M_{i}&=C_{0}.
\end{aligned}
\end{equation} 
    Combining Equations (\ref{d39}) and (\ref{d43}), we have
\begin{equation}
\label{d44}
\begin{aligned}
 \Pi_{i} A_{0}&=A_{i} \Pi_{i}+B_{i} \Gamma_{i},\\
  C_{0}&=C_{i}\Pi_{i}+D_{i} \Gamma_{i}.\\
\end{aligned}
\end{equation} 
The rest of the proof is similar to Theorem \ref{t4}, which is omitted here.
\end{proof}

\subsection{Noisy data case}
This subsection examines the cooperative output regulation problem using noisy data, i.e., $W_{i}\neq 0$. 

According to Equation (\ref{d10}), $W_{i}$ can be computed by
\begin{equation}
\label{d49}
\begin{aligned}
W_{i}&=X_{i}\mathcal{D}_{11}-A_{i}X_{i}-B_{i}U_{i}-E_{i}V\\
&=\begin{bmatrix} I_{i} & A_{i} & B_{i}  \end{bmatrix}\begin{bmatrix} X_{i}\mathcal{D}_{11}-E_{i}V \\ -X_{i}\\ -U_{i} \end{bmatrix}.
\end{aligned}
\end{equation}
Substituting Equation (\ref{d49}) into Equation (\ref{d8.8}) yields 
\begin{equation}
\label{d48}
\begin{aligned}
\begin{bmatrix} I_{i} & A_{i} & B_{i}  \end{bmatrix} N_{i} \begin{bmatrix} I_{i} \\ A_{i}^{T} \\ B_{i}^{T} \end{bmatrix} \geq 0,
\end{aligned}
\end{equation}
where
\begin{equation}
\label{d50}
\begin{aligned}
N_{i}\!=\!& \left[ \begin{array} {cc}
I_{i} & X_{i}\mathcal{D}_{11}-E_{i}V \\ \hdashline
0 & -X_{i}\\
0 & -U_{i} \\
\end{array} \right]
\begin{bmatrix} cI_{i} & 0 \\ 0 & -I_{i}\end{bmatrix}\\
&\left[ \begin{array} {cc}
I_{i} & X_{i}\mathcal{D}_{11}-E_{i}V \\ \hdashline
0 & -X_{i}\\
0 & -U_{i} \\
\end{array} \right]^{T}
= \begin{bmatrix} N_{11i} & N_{12i} \\ N_{12i}^{T} & N_{22i}  \end{bmatrix}.\\
\end{aligned}
\end{equation}

The following definition is proposed for the consequent discussions.

\begin{definition} 
The data $(X_{i}, U_{i})$ are informative for quadratic stabilization, if there exist matrices $K_{1i}$ and $Q_{i}=Q^{T}_{i}>0$, such that the Lyapunov function $V_{i}(t)=x_{i}^{T}(t)Q_{i}x_{i}(t)$ satisfies $(A_{i}+B_{i}K_{1i})^{T}Q_{i}+Q_{i}(A_{i}+B_{i}K_{1i})<0$ for all $(A_{i},B_{i})\in$ (\ref{d9}a), where $i \in \{1,...,N\}$. 

\end{definition}

Let $P_{i}=Q_{i}^{-1}>0$. The following inequality can be deduced
\begin{equation}
\label{d52}
\begin{aligned}
(A_{i}+B_{i}K_{1i})P_{i}+P_{i}(A_{i}+B_{i}K_{1i})^{T}<0,
\end{aligned}
\end{equation}
Inequality (\ref{d52}) can be expressed as
$$
\begin{bmatrix} I_{i} & A_{i} & B_{i}  \end{bmatrix} L_{i} \begin{bmatrix} I_{i} \\ A_{i}^{T} \\ B_{i}^{T}  \end{bmatrix} >0,
$$
where 
$$
L_{i}=  
\left[\begin{array}{c:cc}
0 &  -P_{i} & -P_{i}K_{1i} \\ \hdashline
-P_{i} & 0 & 0 \\ -K_{1i}^{T}P_{i} & 0 & 0  
\end{array}\right]
=\begin{bmatrix}
L_{11i} & L_{12i}\\ 
L_{12i}^{T} & L_{22i}
\end{bmatrix}.
$$

The following lemma is referenced to address the quadratic stabilization when $W_{i} \neq 0$.

\begin{lemma} (see \cite{van2020noisy11})\label{l8} Let $ L=\begin{bmatrix} L_{11} & L_{12} \\ L^{T}_{12} & L_{22} \end{bmatrix}, N=\begin{bmatrix}  N_{11} & N_{12} \\ N^{T}_{12} & N_{22} \end{bmatrix} $. Assume $L_{22}\leq0$, $N_{22}\leq0$, and $ker N_{22} \subseteq ker N_{12} $. Suppose that there exist some matrices $\bar{G}$ satisfying the generalized Slater condition
$\begin{bmatrix} I \\ \bar{G} \end{bmatrix}^{T}N\begin{bmatrix} I \\ \bar{G} \end{bmatrix}\geq 0.$
Then, $\begin{bmatrix} I \\ G \end{bmatrix}^{T}L\begin{bmatrix} I \\ G \end{bmatrix}>0$ for all $\begin{bmatrix} I \\ G \end{bmatrix}^{T}N\begin{bmatrix} I \\ G \end{bmatrix}\geq 0$, if and only if, there exist $\alpha \geq 0$ and $\beta>0$ such that
\begin{equation} 
\label{d55}  L-\alpha N \geq \begin{bmatrix} \beta I & 0\\ 0 & 0 \end{bmatrix}. 
\end{equation}
\end{lemma}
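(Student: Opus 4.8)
The plan is to prove the two implications separately, treating the existence of the multiplier $(\alpha,\beta)$ as the routine (sufficiency) direction and the extraction of that multiplier as the substantial one. Throughout I would abbreviate the two matrix-valued quadratic forms by
\[
\Phi(G)=\begin{bmatrix} I \\ G \end{bmatrix}^{T} N \begin{bmatrix} I \\ G \end{bmatrix}, \qquad
\Psi(G)=\begin{bmatrix} I \\ G \end{bmatrix}^{T} L \begin{bmatrix} I \\ G \end{bmatrix},
\]
so the assertion reads: $\Psi(G)>0$ for every $G$ with $\Phi(G)\geq 0$ holds if and only if $L-\alpha N \geq \mathrm{diag}(\beta I,0)$ for some $\alpha\geq 0$ and $\beta>0$.

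First I would dispatch the ``if'' direction, which needs none of the structural hypotheses on $N$. Assuming $L-\alpha N \geq \mathrm{diag}(\beta I,0)$, congruence by $\begin{bmatrix} I \\ G\end{bmatrix}$ gives, for any feasible $G$,
\[
\Psi(G)-\alpha\Phi(G)=\begin{bmatrix} I \\ G\end{bmatrix}^{T}(L-\alpha N)\begin{bmatrix} I \\ G\end{bmatrix}\geq \beta I,
\]
since the lower-right zero block of $\mathrm{diag}(\beta I,0)$ is annihilated and the top-left block contributes $\beta I$. As $\alpha\geq 0$ and $\Phi(G)\geq 0$, the term $\alpha\Phi(G)$ is positive semidefinite, whence $\Psi(G)\geq \beta I>0$. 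This is a one-line S-procedure computation.

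The bulk of the work is the ``only if'' direction, i.e.\ producing the multiplier $\alpha$ from the implication alone, and here I would lean on the standing hypotheses. Because $N_{22}\leq 0$ and $L_{22}\leq 0$, the maps $G\mapsto\Phi(G)$ and $G\mapsto\Psi(G)$ are \emph{concave} in the Loewner order (their second variation in any direction $D$ equals $2D^{T}N_{22}D\leq 0$, respectively $2D^{T}L_{22}D\leq 0$), so the feasible set $\{G:\Phi(G)\geq 0\}$ is convex and the generalized Slater point $\bar G$ witnesses its nonemptiness. The kernel condition $\mathrm{Ker}\,N_{22}\subseteq \mathrm{Ker}\,N_{12}$ rules out ``free directions'' of $\Phi$ along $\mathrm{Ker}\,N_{22}$, which is precisely what permits homogenization: the normalized constraint built on $\begin{bmatrix} I \\ G\end{bmatrix}$ can be lifted, without loss, to the homogeneous form $Z\mapsto Z^{T}NZ$ over all matrices $Z$, with the Slater point certifying that strict feasibility survives the lifting. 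The target statement then becomes an instance of a \emph{lossless matrix S-procedure}.

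The hard part will be establishing that this S-procedure is in fact lossless, i.e.\ that a \emph{single} scalar multiplier $\alpha$ works uniformly. Scalarizing by testing against vectors $z$ (replacing $\Phi,\Psi$ by $z^{T}\Phi z$ and $z^{T}\Psi z$) reduces the problem to a family of classical scalar S-lemma instances, each of which yields its own multiplier; the genuine obstacle is to produce one multiplier valid for all $z$ at once. I would obtain this through a convexity/separation argument: a Dines--Brickman-type theorem shows that, under the present rank and kernel structure, the joint matrix range $\{(Z^{T}NZ,\,Z^{T}LZ)\}$ is convex, and a separating-hyperplane argument (equivalently, LMI duality applied to the convex feasibility program) then delivers $\alpha\geq 0$ as the dual variable, while the strictness $\Psi>0$ together with a compactness/normalization step upgrades the resulting inequality to a uniform margin $\beta>0$. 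This uniform-multiplier step is exactly the technical core of the matrix S-lemma established in \cite{van2020noisy11}, on which I would rely.
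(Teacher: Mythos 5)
The first thing to note is that the paper contains no proof of Lemma~\ref{l8} at all: it is imported verbatim from \cite{van2020noisy11}, so there is no internal proof to compare your attempt against. Your ``if'' direction is correct and complete: congruence of $L-\alpha N\geq\mathrm{diag}(\beta I,0)$ with $\mathrm{col}(I,G)$ gives $\Psi(G)-\alpha\Phi(G)\geq\beta I$ (in your notation for the two quadratic forms), and $\alpha\geq 0$ together with $\Phi(G)\geq 0$ then yields $\Psi(G)\geq\beta I>0$; as you observe, none of the structural hypotheses are needed there. Deferring the ``only if'' direction to \cite{van2020noisy11} is exactly what the paper itself does, so in that respect your proposal is consistent with (indeed does more than) the paper's treatment.

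Two cautions about your sketch of the hard direction, should you ever need to execute it. First, the route you outline --- scalarize into a family of S-lemma instances indexed by test vectors $z$, then unify the resulting multipliers via a ``Dines--Brickman-type'' convexity of the joint matrix range $\{(Z^{T}NZ,\,Z^{T}LZ)\}$ and a separating hyperplane --- would not go through. Dines/Brickman convexity concerns pairs of \emph{scalar} quadratic forms; joint ranges of matrix-valued quadratic maps are not convex in general, and even if a separation argument were available in $\mathbb{S}^m\times\mathbb{S}^m$ it would produce matrix (trace-form) multipliers rather than the single scalar $\alpha$ that \eqref{d55} requires. The proof in \cite{van2020noisy11} is structured differently: the matrix implication, combined with the Slater point, $N_{22}\leq 0$, $L_{22}\leq 0$ and $\mathrm{Ker}\,N_{22}\subseteq\mathrm{Ker}\,N_{12}$, is first shown to force the \emph{vector} implication $z^{T}Nz\geq 0\Rightarrow z^{T}Lz\geq 0$ over the whole space; the classical single-constraint S-lemma (lossless under Slater) is then applied \emph{once} to this pair of scalar forms, producing one $\alpha\geq 0$ directly, and the margin $\beta>0$ is obtained by a final perturbation argument (running the same reasoning on $L-\epsilon\,\mathrm{diag}(I,0)$). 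So there is no ``family of multipliers to unify''; the genuine technical core is the matrix-to-vector reduction, which your sketch does not address. Second, the Slater condition as restated in Lemma~\ref{l8} ($\geq 0$) is weaker than in the source, which requires $\mathrm{col}(I,\bar{G})^{T}N\,\mathrm{col}(I,\bar{G})>0$; with only non-strict feasibility (mere nonemptiness of the feasible set, which is how you use it) the losslessness claim can fail, so your argument should be anchored to the strict form --- the one the paper itself invokes when applying the lemma in Theorem~\ref{t5}.
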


Since $N_{22i}=-\begin{bmatrix} X_{i} \\ U_{i} \end{bmatrix} \begin{bmatrix} X_{i}^{T} & U_{i}^{T} \end{bmatrix}<0$, $N_{12i}=(X_{i}\mathcal{D}_{11}-E_{i}V)\begin{bmatrix} X_{i}^{T} & U_{i}^{T} \end{bmatrix}$, it is obvious that $\ker N_{22i} \subseteq  \ker N_{12i}$. Besides, $L_{22i}=\begin{bmatrix} 0&0\\0&0 \end{bmatrix}$. Therefore, $L_{i}$ and $N_{i}$ satisfy Lemma \ref{l8}.

 According to Lemma \ref{l8}, the following theorem provides the necessary and sufficient condition such that the data $(X_{i}, U_{i})$ are informative for quadratic stabilization. Additionally, it also offers a method to compute  $K_{1i}$. 

\begin{theorem}\label{t5}
Let $G_{i}=\begin{bmatrix} A_{i} \\ B_{i}\end{bmatrix}$, $\bar{G}_{i}=\begin{bmatrix} \bar{A}_{i} \\ \bar{B}_{i}\end{bmatrix}$, and suppose $\begin{bmatrix} I \\ \bar{G}_{i} \end{bmatrix}^{T} N_{i}\begin{bmatrix} I \\ \bar{G}_{i} \end{bmatrix}>0$ holds. Then, the data $(X_{i}, U_{i})$ are informative for quadratic stabilization, if and only if, there exist $\alpha_{i} \geq 0$, $\beta_{i}>0$, $P_{i}$ and $\mathcal{J}_{i}$, such that the following inequality holds:
\begin{equation}
\label{d56}
\begin{aligned}
&\left[\begin{array}{c:cc}
-\beta_{i} &  -P_{i} & -\mathcal{J}_{i}^{T} \\ \hdashline
-P_{i} & 0 & 0 \\ -\mathcal{J}_{i} & 0 & 0  
\end{array}\right]
-\alpha_{i} N_{i}
\geq 0,
\end{aligned}
\end{equation}
and $i \in \{1,...,N\}$. Besides, if $P_{i}$ and $\mathcal{J}_{i}$ satisfy Inequality~\eqref{d56}, $K_{1i}$ is determined by
\begin{equation} \label{d4525}
    K_{1i}=\mathcal{J}_{i}P^{-1}_{i}. 
\end{equation}
\end{theorem}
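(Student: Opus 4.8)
The plan is to recognize that quadratic stabilization of $(X_i,U_i)$ is exactly a \emph{robust} matrix inequality over the noisy data set, and then to invoke the matrix S-procedure of Lemma~\ref{l8} essentially verbatim. First I would restate the informativity requirement in quadratic-form language: by definition the data are informative for quadratic stabilization precisely when there exist $P_i=P_i^T>0$ and $K_{1i}$ so that the Lyapunov inequality \eqref{d52} holds for \emph{every} admissible pair $(A_i,B_i)$. Using the quadratic-form packaging already recorded in the excerpt, \eqref{d52} is equivalent to $\begin{bmatrix} I_i & A_i & B_i\end{bmatrix} L_i \begin{bmatrix} I_i \\ A_i^T \\ B_i^T\end{bmatrix}>0$, while the admissible set is exactly $\{(A_i,B_i):\begin{bmatrix} I_i & A_i & B_i\end{bmatrix} N_i \begin{bmatrix} I_i \\ A_i^T \\ B_i^T\end{bmatrix}\ge 0\}$ coming from the noise bound \eqref{d48}. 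Hence informativity is the statement ``$\begin{bmatrix} I \\ G_i\end{bmatrix}^T L_i\begin{bmatrix} I \\ G_i\end{bmatrix}>0$ for all $G_i$ with $\begin{bmatrix} I \\ G_i\end{bmatrix}^T N_i\begin{bmatrix} I \\ G_i\end{bmatrix}\ge 0$,'' which matches the hypothesis pattern of Lemma~\ref{l8}.

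Second, I would verify the structural hypotheses of Lemma~\ref{l8} for the pair $(L_i,N_i)$; these are all already at hand. We have $L_{22i}=0\le 0$; also $N_{22i}=-\begin{bmatrix} X_i\\ U_i\end{bmatrix}\begin{bmatrix} X_i^T & U_i^T\end{bmatrix}< 0$, so $N_{22i}\le 0$ and $\ker N_{22i}\subseteq \ker N_{12i}$ holds trivially (the kernel is $\{0\}$). The generalized Slater condition is furnished directly by the standing assumption $\begin{bmatrix} I \\ \bar G_i\end{bmatrix}^T N_i \begin{bmatrix} I \\ \bar G_i\end{bmatrix}>0$ evaluated at the true plant $\bar G_i$. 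With these checks in place Lemma~\ref{l8} applies and yields the equivalence: informativity holds if and only if there exist $\alpha_i\ge 0$ and $\beta_i>0$ with $L_i-\alpha_i N_i\ge \begin{bmatrix}\beta_i I & 0\\ 0 & 0\end{bmatrix}$.

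Third, the remaining work is the change of variables that turns this still-bilinear condition (it hides the product $K_{1i}P_i$ inside $L_i$) into the linear matrix inequality \eqref{d56}. I would set $\mathcal{J}_i:=K_{1i}P_i$; since $P_i=P_i^T$, the off-diagonal block of $L_i$ involving $K_{1i}$ becomes $-\mathcal{J}_i^T$ and its transpose becomes $-\mathcal{J}_i$, while moving the $\beta_i I$ term into the leading block converts $L_{11i}=0$ into $-\beta_i I_i$. Substituting these identifications reproduces \eqref{d56} exactly as an LMI in the decision variables $(\alpha_i,\beta_i,P_i,\mathcal{J}_i)$. Since $P_i>0$ is invertible, the feedback gain is recovered as $K_{1i}=\mathcal{J}_iP_i^{-1}$, giving \eqref{d4525}; both directions of the equivalence follow because every step is reversible.

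I expect the main obstacle to be bookkeeping rather than anything conceptual. One must confirm that the $P_i$ delivered by \eqref{d56} is genuinely positive definite, so that $K_{1i}=\mathcal{J}_iP_i^{-1}$ is well defined and the recovered matrix $Q_i=P_i^{-1}$ certifies quadratic stability; this positivity should be read off from the block structure together with the strict Slater inequality, and it is the step needing the most care. The only other delicate point is keeping the transposes in the blocks of $L_i$ consistent when translating between $K_{1i}P_i$ and $\mathcal{J}_i$.
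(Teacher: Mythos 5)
Your route---recast informativity as the robust inequality ``the $L_{i}$-form is positive whenever the $N_{i}$-form is nonnegative,'' check the structural hypotheses of Lemma~\ref{l8} ($L_{22i}=0$, $N_{22i}<0$ hence $\mathrm{Ker}\,N_{22i}=\{0\}\subseteq\mathrm{Ker}\,N_{12i}$, Slater point supplied by the true plant), apply the lemma, and absorb the bilinearity via $\mathcal{J}_{i}=K_{1i}P_{i}$---is exactly the argument the paper intends (the paper carries out the hypothesis-checking in the text preceding the theorem and then invokes Lemma~\ref{l8} without a written proof), and your necessity direction is complete. The gap is in the sufficiency direction, at precisely the step you flagged and deferred: positive definiteness of $P_{i}$ \emph{cannot} ``be read off from the block structure together with the strict Slater inequality.'' Inequality \eqref{d56} plus the Slater condition simply do not constrain the sign of $P_{i}$, because $P_{i}$ enters \eqref{d56} only through off-diagonal blocks. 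Concretely, take $n_{i}=m_{i}=1$ and data whose consistent set is the disc $\{(A_{i},B_{i}):(A_{i}-1)^{2}+B_{i}^{2}\leq\epsilon^{2}\}$ with $0<\epsilon<1$, i.e.
\begin{equation*}
N_{i}=\begin{bmatrix}\epsilon^{2}-1 & 1 & 0\\ 1 & -1 & 0\\ 0 & 0 & -1\end{bmatrix}
\end{equation*}
(realizable in the form \eqref{d50}: let $\mathrm{col}(X_{i},U_{i})$ have orthonormal rows, $X_{i}\mathcal{D}_{11}-E_{i}V=X_{i}$, $c_{i}=\epsilon^{2}$, true plant $(\bar A_{i},\bar B_{i})=(1,0)$, which also furnishes the strict Slater point since the form evaluates to $\epsilon^{2}>0$). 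These data are \emph{not} informative for quadratic stabilization: $(A_{i},B_{i})=(1,0)$ is consistent, and $2Q_{i}(1+0\cdot K_{1i})<0$ is impossible for $Q_{i}>0$. Yet $P_{i}=-1$, $\mathcal{J}_{i}=0$, $\alpha_{i}=1$, $\beta_{i}=(1-\epsilon^{2})/2>0$ satisfy \eqref{d56}, since
\begin{equation*}
\begin{bmatrix}-\beta_{i} & -P_{i} & -\mathcal{J}_{i}^{T}\\ -P_{i} & 0 & 0\\ -\mathcal{J}_{i} & 0 & 0\end{bmatrix}-\alpha_{i}N_{i}
=\begin{bmatrix}\tfrac{1-\epsilon^{2}}{2} & 0 & 0\\ 0 & 1 & 0\\ 0 & 0 & 1\end{bmatrix}\geq 0.
\end{equation*}
This is consistent with Lemma~\ref{l8}: with $P_{i}=-1$ the ``Lyapunov'' form equals $2A_{i}>0$ on the disc, but it is a wrong-signed certificate, and $K_{1i}=\mathcal{J}_{i}P_{i}^{-1}=0$ stabilizes nothing.

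The repair is not to derive positivity but to impose it: $P_{i}=P_{i}^{T}>0$ must be part of the existence statement accompanying \eqref{d56}. That is also the only reading under which the theorem itself is true, and it is what the paper implicitly does---$P_{i}$ is introduced as $Q_{i}^{-1}>0$ immediately before the theorem and that constraint is silently carried into the LMI. With $P_{i}>0$ imposed, your argument closes in both directions: Lemma~\ref{l8} applied backwards yields $A_{i}P_{i}+P_{i}A_{i}^{T}+B_{i}\mathcal{J}_{i}+\mathcal{J}_{i}^{T}B_{i}^{T}<0$ for every consistent pair, and then $K_{1i}=\mathcal{J}_{i}P_{i}^{-1}$ with certificate $Q_{i}=P_{i}^{-1}>0$ gives quadratic stabilization. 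It is worth understanding why this pitfall is invisible in the cited source \cite{van2020noisy11}: there the systems are discrete-time, the Lyapunov matrix enters the $(1,1)$ block of the S-lemma LMI (as $Y-\beta I$ against a positive-semidefinite noise block), so positivity of $Y$ is forced by the LMI itself; in the continuous-time setting of this paper the $(1,1)$ block is $-\beta_{i}I-\alpha_{i}N_{11i}$ and contains no $P_{i}$ at all, so the constraint must be stated explicitly.
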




Due to the noisy data, Equation (\ref{d34.1}) cannot obtain exact solutions. Corresponding approximate regulator equations are formulated as follows
\begin{equation}
\label{d60}
\begin{aligned}
 \omega_{i}+\Pi_{i} S&=A_{i} \Pi_{i}+B_{i} \Gamma_{i}+E_{i},\\
  0&=C_{i}\Pi_{i}+D_{i} \Gamma_{i}+F_{i},\\
\end{aligned}
\end{equation}
where $\omega_{i}$ is the unknown error matrix caused by the noisy data and $i \in \{1,...,N\}$. 

The following definition characterizes the informativity for the approximate cooperative output regulation problem.
\begin{definition}
Under Assumptions \ref{a2}-\ref{a4}, assume that the eigenvalues of $S$ lie on the imaginary axis, and that $\mu$ is sufficiently large. There exists a controller \eqref{d3} such that the data $(X_{i}, U_{i},\Im_{i})$ are informative for the approximate cooperative output regulation problem, if the data $(X_{i}, U_{i})$ are informative for the stabilizability of $A_{C}$ and
$\sum_{Fi} \subseteq \sum_{appr}$, where $\sum_{appr}=\left\{\begin{bmatrix}
    A_{i}&B_{i}\\C_{i}&D_{i}
\end{bmatrix} \bigg| \textit{Equation \eqref{d60} has a solution $(\Pi_{i},\Gamma_{i})$}  \right \}$ and $i \in \{1,...,N\}$. 
\end{definition}

The unknown approximate regulator equations \eqref{d60} can be determined by solving the following optimization problem:
\begin{equation}
\label{d61}
\begin{aligned}
\min_{\Pi_{i}, \Gamma_{i}}\| \omega_{i} \|=\min_{\Pi_{i}, \Gamma_{i}}&\| A_{i} \Pi_{i}+B_{i} \Gamma_{i}+E_{i}-\Pi_{i} S\|, \\
s.t.  \quad -F_{i}&=(\Im_{i}-F_{i}V)M_{i},
\end{aligned}
\end{equation}
where the constraint is derived from Equation \eqref{d33}, which is $0=C_{i}\Pi_{i}+D_{i} \Gamma_{i}+F_{i}$. Based on Equation \eqref{d61}, the following proposition demonstrates that the norm  $\|\omega_{i}\|$ is bounded.

\begin{proposition}
Under Assumptions \ref{a3}-\ref{a4}, assume that the eigenvalues of $S$ lie on the imaginary axis, $\mu$ is sufficiently large, and $col(X_{i},U_{i})$ has full row-rank. Then, the norm of the error matrix $\| \omega_{i} \|$ is bounded.
\end{proposition}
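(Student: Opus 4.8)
The plan is to rewrite the error matrix $\omega_i$ as the sum of a computable data term and a single noise-dependent term, and then bound each piece separately, the noise contribution being controlled through Lemma~\ref{l3}. First I would exploit the full-row-rank assumption on $\mathrm{col}(X_i,U_i)$: it guarantees a right inverse, so the pair $(\Pi_i,\Gamma_i)$ produced by the optimization~\eqref{d61} admits the representation $\Pi_i=X_iM_i$, $\Gamma_i=U_iM_i$, exactly as in the exact-data argument of Theorem~\ref{t4}, with $M_i$ recoverable as a bounded function of $(\Pi_i,\Gamma_i)$. Under this parametrization the constraint of~\eqref{d61}, namely $-F_i=(\Im_i-F_iV)M_i$, reproduces the second data-driven regulator equation of~\eqref{d33}, and it carries no noise because the output relation $\Im_i-F_iV=C_iX_i+D_iU_i$ in~\eqref{d9} is exact (the perturbation $W_i$ enters only the state equation). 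Hence the feasible set of~\eqref{d61} is nonempty.

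Next I would substitute the parametrization into the definition of $\omega_i$ in~\eqref{d60} and eliminate the unknown true matrices by means of the noisy-data identity $\bar A_iX_i+\bar B_iU_i=X_i\mathcal{D}_{11}-E_iV-W_i$ from~\eqref{d10}. This produces the decomposition
\begin{equation*}
\omega_i=\big[(X_i\mathcal{D}_{11}-E_iV)M_i-X_iM_iS+E_i\big]-W_iM_i .
\end{equation*}
The bracketed term depends only on measured data and the known matrices $E_i$, $S$, and it is precisely the residual of the first equation in the data-driven regulator system~\eqref{d33}. Choosing $M_i$ to solve~\eqref{d33} (equivalently, letting the minimization in~\eqref{d61} drive this residual to its data-driven optimum) makes the bracket a finite data matrix, which I denote $\hat\omega_i$; in particular, an exact solution of~\eqref{d33} gives $\hat\omega_i=0$ and $\omega_i=-W_iM_i$.

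Finally I would invoke the triangle inequality together with the noise bound, obtaining
\begin{equation*}
\|\omega_i\|\le\|\hat\omega_i\|+\|W_i\|\,\|M_i\|\le\|\hat\omega_i\|+\tfrac{2V(x_i^{(2)})}{\sqrt{\pi}(N-1)}\,\|M_i\|,
\end{equation*}
where the estimate $\|W_i\|\le \tfrac{2V(x_i^{(2)})}{\sqrt{\pi}(N-1)}=\sqrt{c_i}$ is supplied by Lemma~\ref{l3}. Since $\hat\omega_i$ and $M_i$ are finite data matrices and $\sqrt{c_i}$ is finite, $\|\omega_i\|$ is bounded; the optimal value of~\eqref{d61} is no larger, so the minimized error is bounded as well, and the resulting estimate is manifestly increasing in the noise level.

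The main obstacle I expect is the elimination step. The objective in~\eqref{d61} is written through the unknown $\bar A_i,\bar B_i$, and the whole argument hinges on converting it into data quantities via~\eqref{d10} so that the only residual dependence on the unmeasured noise is the isolated factor $W_iM_i$. Verifying that the bracketed residual coincides with the first equation of~\eqref{d33}, and that the output constraint remains noise-free so that feasibility is preserved, is the delicate bookkeeping; once this is in place, the norm bound follows immediately from Lemma~\ref{l3}, matching the claim that the error is positively correlated with the noise bound.
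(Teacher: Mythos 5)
Your proof is correct and takes essentially the same route as the paper's: both use the full-row-rank assumption to write $(\Pi_i,\Gamma_i)=(X_iM_i,U_iM_i)$, eliminate the unknown system matrices via the noisy-data identity $A_iX_i+B_iU_i=X_i\mathcal{D}_{11}-E_iV-W_i$, split off the noise term by the triangle inequality, and bound it by $\sqrt{c_i}$ through Lemma \ref{l3}. The only cosmetic difference is that the paper carries the factor $\mathrm{col}(X_i,U_i)^{+}\mathrm{col}(X_i,U_i)$ inside both terms of its bound, whereas you absorb it using $\mathrm{col}(X_i,U_i)\,\mathrm{col}(X_i,U_i)^{+}=I$; the two decompositions are equivalent.
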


\begin{proof}
Since $\mathrm{col}(X_{i},U_{i})$ has full row-rank, the following equation holds:
$$
\begin{bmatrix}A_{i}&B_{i}\\C_{i}&D_{i}\end{bmatrix}=\begin{bmatrix} X_{i}\mathcal{D}_{11}-E_{i}V-W_{i}\\ \Im_{i}-F_{i}V \end{bmatrix}\begin{bmatrix} X_{i}\\ U_{i} \end{bmatrix}^{+}.
$$
Combing Equations \eqref{d22.2} and \eqref{d61}, the following inequality is obtained:
\begin{equation}
\begin{aligned}
\label{d655}
 &\| \omega_{i} \|\\
=&\left\| (X_{i}\mathcal{D}_{11}-E_{i}V-W_{i}) \begin{bmatrix}X_{i}\\U_{i}\end{bmatrix}^{+}\begin{bmatrix}X_{i}\\U_{i}\end{bmatrix}M_{i}+E_{i}-X_{i}M_{i} S\right\|\\
\leq&\left\| (X_{i}\mathcal{D}_{11}-E_{i}V) \begin{bmatrix}X_{i}\\U_{i}\end{bmatrix}^{+}
\begin{bmatrix}X_{i}\\U_{i}\end{bmatrix}M_{i}+E_{i}-X_{i}M_{i} S\right\|\\
&+\left\|W_{i}\begin{bmatrix}X_{i}\\U_{i}\end{bmatrix}^{+}\begin{bmatrix}X_{i}\\U_{i}\end{bmatrix}M_{i}\right\|\\
\leq&\left\| (X_{i}\mathcal{D}_{11}-E_{i}V) \begin{bmatrix}X_{i}\\U_{i}\end{bmatrix}^{+}
\begin{bmatrix}X_{i}\\U_{i}\end{bmatrix}M_{i}+E_{i}-X_{i}M_{i} S\right\|\\
&+\sqrt{c_{i}}\left\|\begin{bmatrix}X_{i}\\U_{i}\end{bmatrix}^{+}\begin{bmatrix}X_{i}\\U_{i}\end{bmatrix}M_{i}\right\|,
\end{aligned}
\end{equation}
where the final inequality uses Inequality \eqref{d8.8}. Therefore, the norm $\| \omega_{i} \|$ is bounded.
\end{proof}

The norm $\| \omega_{i} \|$ decreases as the bound of the noise $\|W_{i}\|$ decreases. The following theorem demonstrates that the tracking error~$e(t)$ is ultimately uniformly bounded. Besides, it provides a method to compute $K_{2i}$.

\begin{theorem}
\label{t6}
Under Assumptions \ref{a3}-\ref{a4}, assume that the eigenvalues of $S$ lie on the imaginary axis, $\mu$ is sufficiently large, and $col(X_{i},U_{i})$ has full row-rank. The tracking error $e_{i}(t)$ is ultimately uniformly bounded for all $i \in \{1,...,N\}$. Moreover, $K_{2i}$ is given by  $  K_{2i}=(U_{i}-K_{1i}X_{i})M_{i}$, where $M_{i}$ can be computed as follows: 
   \begin{equation} \label{d80}
   \begin{aligned}
  \min_{M_{i}}&\Bigg\{\left\| (X_{i}\mathcal{D}_{11}-E_{i}V) \begin{bmatrix}X_{i}\\U_{i}\end{bmatrix}^{+}
\begin{bmatrix}X_{i}\\U_{i}\end{bmatrix}M_{i}+E_{i}-X_{i}M_{i} S\right\|\\
&+\sqrt{c_{i}}\left\|\begin{bmatrix}X_{i}\\U_{i}\end{bmatrix}^{+}\begin{bmatrix}X_{i}\\U_{i}\end{bmatrix}M_{i}\right\| \Bigg\},
 \\
&s.t.  \quad -F_{i}=(\Im_{i}-F_{i}V)M_{i}.
\end{aligned}
\end{equation} 
\end{theorem}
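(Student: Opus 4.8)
The plan is to combine the distributed-observer convergence guaranteed by Theorem~\ref{t1} with a steady-state coordinate change, thereby reducing the tracking error to a Hurwitz linear system driven by the residual $\omega_i$. First I would establish that the observer error $\tilde\eta_i(t)=\eta_i(t)-v(t)$ vanishes. Stacking the $\eta_i$ and subtracting $\hat v=1_N\otimes v$ gives $\dot{\tilde\eta}=[(I_N\otimes S)-\mu(H\otimes I_q)]\tilde\eta$, and since $\mu$ is sufficiently large this matrix is Hurwitz by Theorem~\ref{t1}, so $\tilde\eta_i(t)\to 0$ exponentially for every $i$.

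Next I would introduce the steady-state deviation $\bar x_i(t)=x_i(t)-\Pi_i v(t)$, where $(\Pi_i,\Gamma_i)$ solves the approximate regulator equations \eqref{d60} with residual $\omega_i$. Differentiating $\bar x_i$ along \eqref{d4} and substituting $E_i-\Pi_i S=-A_i\Pi_i-B_i\Gamma_i+\omega_i$ together with $\Gamma_i=K_{1i}\Pi_i+K_{2i}$, the cross terms telescope to
\[
\dot{\bar x}_i=(A_i+B_iK_{1i})\bar x_i+B_iK_{2i}\tilde\eta_i+\omega_i v .
\]
Because the eigenvalues of $S$ lie on the imaginary axis, $v(t)$ is bounded, so $\omega_i v(t)$ is a bounded persistent perturbation while $B_iK_{2i}\tilde\eta_i(t)\to 0$. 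Since the data are informative for quadratic stabilization, Theorem~\ref{t5} supplies $K_{1i}$ and a matrix $Q_i=Q_i^{T}>0$ rendering $A_i+B_iK_{1i}$ Hurwitz; evaluating $\dot V_i$ for $V_i=\bar x_i^{T}Q_i\bar x_i$ and completing the square on the forcing terms yields ultimate uniform boundedness of $\bar x_i$, with ultimate bound proportional to $\|\omega_i\|\,\sup_t\|v(t)\|$ and hence, through the preceding Proposition and \eqref{d655}, to the noise level $\sqrt{c_i}$.

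To pass from $\bar x_i$ to the actual tracking error I would rewrite $e_i$ in the new coordinates as $e_i=(C_i+D_iK_{1i})\bar x_i+D_iK_{2i}\tilde\eta_i+[(C_i+D_iK_{1i})\Pi_i+D_iK_{2i}+F_i]v$. Using $\Gamma_i=K_{1i}\Pi_i+K_{2i}$, the bracketed coefficient equals $C_i\Pi_i+D_i\Gamma_i+F_i$, which vanishes by the \emph{second} row of \eqref{d60}; this is precisely the purpose of the equality constraint $-F_i=(\Im_i-F_iV)M_i$ in \eqref{d80}. Thus $e_i=(C_i+D_iK_{1i})\bar x_i+D_iK_{2i}\tilde\eta_i$ is ultimately uniformly bounded. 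For the gain, setting $\Pi_i=X_iM_i$ and $\Gamma_i=U_iM_i$ reproduces, exactly as in Theorem~\ref{t4}, the formula $K_{2i}=\Gamma_i-K_{1i}\Pi_i=(U_i-K_{1i}X_i)M_i$, and minimizing the data-based upper bound on $\|\omega_i\|$ from \eqref{d655} over the feasible set defined by the constraint produces the program \eqref{d80}; full row rank of $\mathrm{col}(X_i,U_i)$ ensures the objective is well defined.

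The main obstacle I anticipate is the quantitative UUB step: one must show that a strictly Hurwitz $A_i+B_iK_{1i}$ excited by a bounded residual yields an ultimate bound \emph{controlled} by $\|\omega_i\|$, not merely a qualitative boundedness claim, and then track how that bound propagates through $C_i+D_iK_{1i}$ to $e_i$ to exhibit the constant $\gamma$. Tying the ultimate bound to $c_i$ relies on the preceding Proposition together with the imaginary-axis (persistence) property of $v(t)$; by contrast, the cancellation of the $v$-coefficient in $e_i$ via the second regulator equation is the clean part of the argument.
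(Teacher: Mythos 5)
Your proposal is correct, and it reaches the paper's conclusions by a genuinely different route. The paper stays entirely in the OPB coefficient domain: it writes the noisy closed-loop data equation $X_{c}\mathcal{D}_{11}=A_{C}X_{c}+B_{C}\bar{V}+\bar{W}$, stacks the approximate regulator equations into the compact form $\omega_{C}+\Pi_{C}(I\otimes S)=A_{C}\Pi_{C}+B_{C}$, $0=C_{C}\Pi_{C}+D_{C}$, performs the coordinate shift $\bar{X}=X_{C}-\Pi_{C}\bar{V}$ to obtain $\bar{X}\mathcal{D}_{11}=A_{C}\bar{X}+(\omega_{C}\bar{V}+\bar{W})$ and $\Im=C_{C}\bar{X}$, and then concludes boundedness from the boundedness of the forcing term, with the truncation noise $\bar{W}$ appearing explicitly as an input to the stacked dynamics. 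You instead work agent-by-agent in the time domain: you isolate the observer error $\tilde\eta_{i}=\eta_{i}-v$ (Hurwitz by Theorem~\ref{t1}, so exponentially vanishing), shift by the approximate steady state $\bar{x}_{i}=x_{i}-\Pi_{i}v$, and get $\dot{\bar x}_{i}=(A_{i}+B_{i}K_{1i})\bar x_{i}+B_{i}K_{2i}\tilde\eta_{i}+\omega_{i}v$, where the data noise enters \emph{only} through the regulator-equation residual $\omega_{i}$ — which is faithful to the setup, since the true continuous-time plant is noise-free and $W_{i}$ is purely a representation (truncation) artifact. Your route buys two things: the UUB claim is established directly for the time-domain signal $e_{i}(t)$ via a Lyapunov argument (the paper's last step, passing from boundedness of coefficient-domain quantities to Inequality~\eqref{d25}, is asserted rather than argued), and the cancellation of the $v$-coefficient in $e_{i}$ via the second equation of \eqref{d60} makes transparent why the constraint $-F_{i}=(\Im_{i}-F_{i}V)M_{i}$ must be imposed exactly rather than approximately. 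The paper's route buys compactness and uniformity with the exact-data development (it reuses \eqref{d3.3}--\eqref{d7.1} verbatim). Two shared caveats, neither specific to you: both arguments implicitly assume the eigenvalues of $S$ on the imaginary axis are semisimple (otherwise $v(t)$ grows polynomially and neither $\omega_{C}\bar V$ nor $\omega_{i}v$ is bounded), and both implicitly assume the data are informative for (quadratic) stabilization so that $K_{1i}$ exists — your explicit appeal to Theorem~\ref{t5} at least makes that dependence visible, whereas the theorem statement lists neither hypothesis.
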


\begin{proof} 
 Due to the presence of noise $W_{i}$, Equation \eqref{d3.3} is modified to
\begin{equation}
\label{d63}
\begin{aligned}
X_{c}\mathcal{D}_{11}=&\begin{bmatrix} A+BK_{1} & BK_{2}\\ 0 & (I_{N}\otimes S)-\mu (H\otimes I_{q}) \end{bmatrix}X_{c} \\ 
&+\begin{bmatrix} E \\ \mu(H\otimes I_{q}) \end{bmatrix} \bar{V}+\begin{bmatrix} W \\ \tilde{W} \end{bmatrix},\\
\Im=&\begin{bmatrix} C+DK_{1} & DK_{2}  \end{bmatrix}X_{c}+F\bar{V},
\end{aligned}
\end{equation}
where $W=\mathrm{col}(W_{1},\cdots,W_{N})$ and $\tilde{W}=\mathrm{col}(\tilde{W}_{1}, \cdots,\tilde{W}_{N})$, with $\tilde{W}_{i}$ representing the noisy data about $\tilde{\eta}_{i}$. Let $\bar W=\begin{bmatrix} W \\ \tilde{W} \end{bmatrix}$, Equation (\ref{d63}) can be expressed as 
\begin{equation}
\label{d64}
 \begin{aligned}
X_{c}\mathcal{D}_{11}&=A_{C}X_{c}+B_{C}\bar{V}+\bar W,\\
\Im&=C_{C}X_{c}+D_{C}\bar{V}.
\end{aligned}
\end{equation} 
Let $\Pi_{C}=\begin{bmatrix} \Pi \\ I_{qN} \end{bmatrix}$, $\omega=\mathrm{blockdiag}(\omega_{1},\omega_{2},\cdots,\omega_{N})$, $\Pi=\mathrm{blockdiag}(\Pi_{1},\Pi_{2},\cdots,\Pi_{N})$ and $\omega_{C}=\begin{bmatrix} \omega \\ 0 \end{bmatrix}$. The compact form of Equation (\ref{d60}) is
 \begin{align*}
\omega_{C}+\Pi_{C}( I \otimes S)&=A_{C} \Pi_{C}+B_{C},\\
  0&=C_{C}\Pi_{C}+D_{C}.\\
\end{align*}
Let $\bar X=X_{C}-\Pi_{C}\bar V$. Equation (\ref{d64}) can be expressed as
\begin{align*}
\bar X\mathcal{D}_{11}&=A_{C}\bar X+(A_{C}\Pi_{C}+B_{C})\bar V+\bar W\\
&=A_{C}\bar X+[A_{C}\Pi_{C}+B_{C}-\Pi_{C}(I_{N}\otimes S)]\bar V+\bar W\\
&=A_{C}\bar X+(\omega_{C} \bar V+\bar W),\\
\Im&=C_{C}X_{C}+D_{C}\bar V
=C_{C}X_{C}+(C_{C}\Pi_{C}+D_{C})\bar V\\
&=C_{C}\bar X.\\
\end{align*}
Clearly, $\omega_C$ and $\bar{W}$ are bounded. Additionally,  the eigenvalues of $S$ lie on the imaginary axis. Hence, $(\omega_{C} \bar V+\bar W)\mathrm{b}$ is bounded, implying that Inequality (\ref{d25}) holds. Consequently, the tracking error $e(t)$ is ultimately uniformly bounded. By invoking Equation \eqref{d22.2}, one obtains $K_{2i}=(U_{i}-K_{1i}X_{i})M_{i}$. The optimization problem \eqref{d80} is derived from \eqref{d655}. 
\end{proof}

The steps for computing the controller \eqref{d3} are similar to Algorithm \ref{suanfa}, which are omitted here.

\section{NUMERICAL SIMULATION}
Consider a directed weighted graph composed of $N = 4$
agents and a leader labeled as $0$. Assuming $4 \geq L_{ij}\geq 2$, the graph $\mathcal{G}_{1}$ is depicted in Fig. \ref{fig1}.

\begin{figure}[htb]
  \centering
  \includegraphics[scale=0.4]{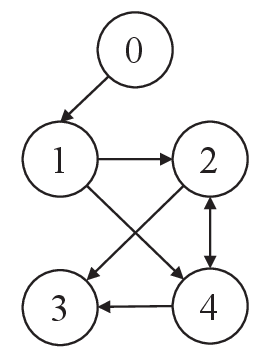}
  \caption{Network topology for the graph $\mathcal{G}_{1}$}
  \label{fig1}
\end{figure}

\begin{figure}[ht]
  \centering
  \includegraphics[width=200pt]{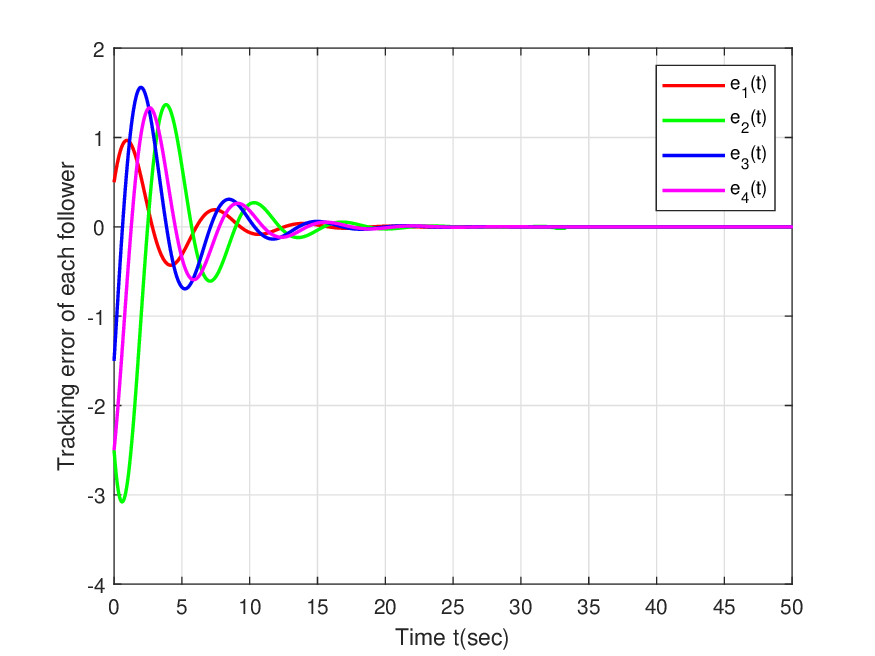}
  \caption{Tracking errors of the followers using exact data.}
  \label{fig2}
\end{figure}
The Laplacian matrix $\mathcal{L}_{1}$ is
$$\mathcal{L}_{1}=\begin{bmatrix}
    0& 0 &0 &0 &0 \\
    -2& 2& 0& 0& 0\\
    0 &-2 &4& 0 &-2\\
    0& 0& -2& 4& -2\\
   0 &-2 &-2& 0& 4\\
\end{bmatrix}.$$ It is easy to obtain $\varepsilon_{1}=2$ and 
$\varepsilon_{2}=4$. Invoking Inequality (\ref{d31}), the bound for $\mu > 128$ is derived.

The leader is given by Equation (\ref{d2}) with $S=\begin{bmatrix} 1 &0\\ 0& 1 \end{bmatrix}$. The followers are given by Equation (\ref{d1}) with
\begin{equation}
\begin{aligned} \label{d212}
&A_{1}=A_{2}=\begin{bmatrix} 0& 1 \\0 &0 \end{bmatrix}, A_{3}=A_{4}=\begin{bmatrix} 0& 1 \\1 &0 \end{bmatrix},\\
&B_{1}=B_{2}=\begin{bmatrix} 0 \\1 \end{bmatrix},
B_{3}=B_{4}=\begin{bmatrix} 1 \\0 \end{bmatrix},\\
&C_{1}=C_{2}=\begin{bmatrix} 1 &0 \end{bmatrix}, 
C_{3}=C_{4}=\begin{bmatrix} 0 &1 \end{bmatrix}, \\
&D_{1}=D_{2}=D_{3}=D_{4}=0, \\
&E_{1}=E_{2}=\begin{bmatrix} 1& 0 \\0 &1 \end{bmatrix}, E_{3}=E_{4}=\begin{bmatrix} 0& 0 \\0 &0 \end{bmatrix},\\
&F_{1}=F_{2}=F_{3}=F_{4}=\begin{bmatrix} -1& 0 \end{bmatrix}.
\end{aligned}
\end{equation}
Let $x_{i}(0)=\begin{bmatrix} 1 & 1\end{bmatrix}^{T}$, $v(0)= \begin{bmatrix} 0.5 & 0.5\end{bmatrix}^{T}$ and $u_{i}(t)=e^{-t}$. Then, 
the state trajectories are as follows:
\begin{align*} 
&x_{1}(t)=x_{2}(t)= \begin{bmatrix} \frac{3}{2}t+2cosh(t)-1\\ \frac{1}{2}e^{t}-e^{-t}+\frac{3}{2}\end{bmatrix},\\
&x_{3}(t)=x_{4}(t)= \begin{bmatrix} e^{t} +\frac{1}{4}e^{-t}(2t + e^{2t} - 1)\\ e^{t}-\frac{1}{4}e^{-t}(2t + e^{2t} - 1)\end{bmatrix},\\
&v(t)= \begin{bmatrix} \frac{1}{2}e^{t}\\\frac{1}{2}e^{t}\end{bmatrix}.
\end{align*}
Using Chebyshev polynomials, $\mathcal{D}$ is computed as
$$\mathcal{D}=\begin{bmatrix}
0 &0& 0 &0 &0 &...\\1& 0& 0 &0& 0& ...\\0& 4& 0 &0& 0& ...\\3& 0& 6 &0& 0& ...\\0& 8& 0 &8& 0& ...\\5& 0& 10 &0& 10& ...\\ \vdots& \vdots& \vdots &\vdots&\vdots& \ddots
\end{bmatrix}.$$ 
Using the Chebfun toolbox \cite{driscoll2014chebfun}, it can be verified that $16$ samples are sufficient to compute the Chebyshev representation of the signals with machine precision, refer to Table \ref{tab5}.
\begin{table}
\begin{center}
\caption{CHEBYSHEV COEFFICIENTS}
\label{tab5}
\begin{tabular}{| c | c | c | }
\hline
$U^{T}$ & $X_{1}^{T}$ & $V^{T}$  \\
\hline
$1.26$ & $1.53$ $8.67$·$10^{-1}$& $6.33$·$10^{-1}$ $6.33$·$10^{-1}$ \\
 $-1.13$& $1.50$ $1.70$ & $5.65$·$10^{-1}$ $5.65$·$10^{-1}$ \\
  $2.72$·$10^{-1}$&$5.43$·$10^{-1}$ $-1.36$·$10^{-1}$& $1.36$·$10^{-1}$ $1.36$·$10^{-1}$ \\
$-4.43$·$10^{-2}$& $1.11$·$10^{-16}$ $6.65$·$10^{-2}$& $2.20$·$10^{-2}$ $2.20$·$10^{-2}$ \\
 $5.5$·$10^{-3}$& $1.11$·$10^{-2}$ $2.70$·$10^{-3}$& $2.74$·$10^{-3}$ $2.74$·$10^{-3}$ \\
 $5.43$·$10^{-4}$& $2.29$·$10^{-17}$ $8.14$·$10^{-4}$& $2.72$·$10^{-4}$ $2.72$·$10^{-4}$ \\
  $4.50$·$10^{-5}$&$9.00$·$10^{-5}$ $-2.25$·$10^{-5}$& $2.25$·$10^{-5}$ $2.25$·$10^{-5}$  \\
  $-3.20$·$10^{-6}$&$4.65$·$10^{-18}$ $4.80$·$10^{-6}$& $1.60$·$10^{-6}$ $1.60$·$10^{-6}$  \\
 $1.99$·$10^{-7}$& $3.98$·$10^{-7}$ $-9.96$·$10^{-8}$&$9.96$·$10^{-8}$ $9.96$·$10^{-8}$ \\
  $-1.10$·$10^{-8}$& $-2.78$·$10^{-17}$ $1.66$·$10^{-8}$& $5.52$·$10^{-9}$ $5.52$·$10^{-9}$\\
  $5.51$·$10^{-10}$& $1.10$·$10^{-9}$ $-2.75$·$10^{-10}$&$2.75$·$10^{-10}$ $2.75$·$10^{-10}$ \\
 $2.50$·$10^{-11}$& $3.66$·$10^{-17}$ $3.75$·$10^{-11}$& $1.25$·$10^{-11}$ $1.25$·$10^{-11}$ \\
  $1.04$·$10^{-12}$& $2.08$·$10^{-12}$ $-5.20$·$10^{-13}$& $5.20$·$10^{-13}$ $5.20$·$10^{-13}$\\
  $-4.00$·$10^{-14}$& $2.00$·$10^{-17}$ $5.99$·$10^{-14}$& $2.00$·$10^{-14}$ $2.00$·$10^{-14}$\\
  $1.43$·$10^{-15}$& $2.83$·$10^{-15}$ $-6.45$·$10^{-16}$& $7.15$·$10^{-16}$ $7.15$·$10^{-16}$ \\
  $0.00$& $0.00$ $0.00$& $0.00$ $0.00$ \\
\hline
 $\Im_{1}^{T}$ & $\Im_{3}^{T}$ & $X_{3}^{T}$ \\
 \hline
 $9.00$·$10^{-1}$ & $3.51$·$10^{-1}$ & $9.83$·$10^{-1}$ $1.55$ \\
 $9.35$·$10^{-1}$ & $1.83$ & $2.39$ $9.95$·$10^{-1}$ \\
$4.07$·$10^{-1}$ &  $-1.58$·$10^{-1}$ & $-2.22$·$10^{-2}$ $5.65$·$10^{-1}$  \\
$-2.22$·$10^{-2}$ & $1.14$·$10^{-1}$ & $1.36$·$10^{-1}$ $-2.74$·$10^{-3}$  \\
$-8.20$·$10^{-3}$ & $-8.50$·$10^{-3}$ & $-5.75$·$10^{-3}$ $1.67$·$10^{-2}$  \\
$-2.71$·$10^{-4}$ & $1.90$·$10^{-3}$ & $2.20$·$10^{-3}$ $-5.65$·$10^{-4}$  \\
$6.74$·$10^{-5}$ & $-1.14$·$10^{-4}$ & $-9.16$·$10^{-5}$ $1.82$·$10^{-4}$ \\
$-1.60$·$10^{-6}$ &  $1.45$·$10^{-5}$ & $1.61$·$10^{-5}$ $-6.50$·$10^{-6}$  \\
$2.99$·$10^{-7}$ &  $-7.03$·$10^{-7}$ &$-6.03$·$10^{-7}$ $1.00$·$10^{-6}$   \\
$-5.52$·$10^{-9}$ &  $6.10$·$10^{-8}$ &$6.65$·$10^{-8}$ $-3.34$·$10^{-8}$  \\
$-8.26$·$10^{-10}$ &  $-2.49$·$10^{-9}$ &$-2.21$·$10^{-9}$ $3.32$·$10^{-9}$    \\
$-1.25$·$10^{-11}$ &  $1.63$·$10^{-10}$ &$1.75$·$10^{-10}$ $-1.00$·$10^{-10}$    \\
$1.56$·$10^{-12}$ &  $-5.74$·$10^{-12}$ &$-5.22$·$10^{-12}$ $7.29$·$10^{-12}$ \\
$-2.00$·$10^{-14}$ &  $3.00$·$10^{-13}$ &$3.20$·$10^{-13}$ $-2.00$·$10^{-13}$ \\
$2.12$·$10^{-15}$ &  $-9.25$·$10^{-15}$ &$-8.53$·$10^{-15}$ $1.15$·$10^{-14}$   \\
$0.00$ &  $4.30$·$10^{-16}$ & $4.30$·$10^{-16}$ $-2.84$·$10^{-16}$  \\
\hline
\end{tabular}
\end{center}
\end{table}
Invoking Equation (\ref{d28}), one obtains $K_{11}=K_{21}=\begin{bmatrix}-1.000&-0.500\end{bmatrix}$ and $K_{31}=K_{41}=\begin{bmatrix}-0.500&-2.000\end{bmatrix}$. 
Calculating Equation (\ref{d29}), one obtains
$$ \mathrm{rank}\begin{bmatrix} X_{i}\mathcal{D}_{\mathbin{b}}-E_{i}V-\lambda X_{i} \\ \Im_{i}-F_{i}V \end{bmatrix}=3.$$
It proves that the data  $(X_{i}, U_{i},\Im_{i})$  are informative for data-driven cooperative output regulation problem. By combing Equations (\ref{d33}) and (\ref{d34}), one obtains $K_{12}=K_{22}=\begin{bmatrix}1.000&-1.000\end{bmatrix}$ and $K_{32}=K_{42}=\begin{bmatrix}2.500&0.000\end{bmatrix}$.
The simulation results are depicted in Fig. \ref{fig2}.  The tracking errors of all the followers converge asymptotically to zero. 

Let the dynamic topology switch from graph $\mathcal{G}_{2}$ to graph $\mathcal{G}_{3}$ at $t=10s$, see Fig. \ref{fig4}. 
\begin{figure}[ht]
  \centering
  \includegraphics[width=200pt]{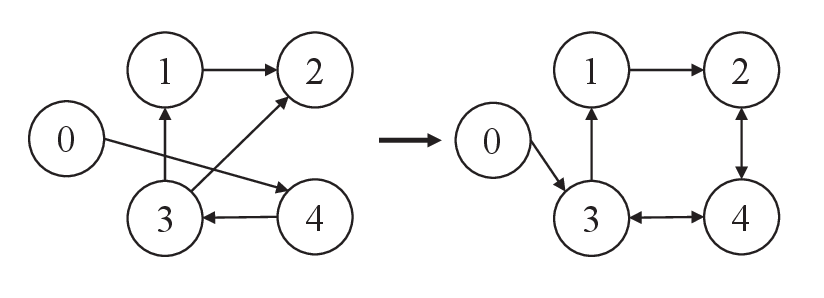}
  \caption{Switching network topology from $\mathcal{G}_{2}$ to $\mathcal{G}_{3}$.}
  \label{fig4}
\end{figure}
The matrices $H_{2}$ and $H_{3}$ are
$$H_{2}=\begin{bmatrix}
     3& 0& -3& 0\\
    -2 &4& -2 &0\\
     0& 0& 2& -2\\
   0 &0& 0& 2\\
\end{bmatrix},H_{3}=\begin{bmatrix}
     3& 0& -3& 0\\
   -2 &4& 0 &-2\\
    0& 0& 4& -2\\
   0 &-2& -2& 4\\
\end{bmatrix}. $$
The tracking errors of all the followers also converge asymptotically to zero, see Fig. \ref{fig3}.
\begin{figure}[ht]
  \centering
  \includegraphics[width=200pt]{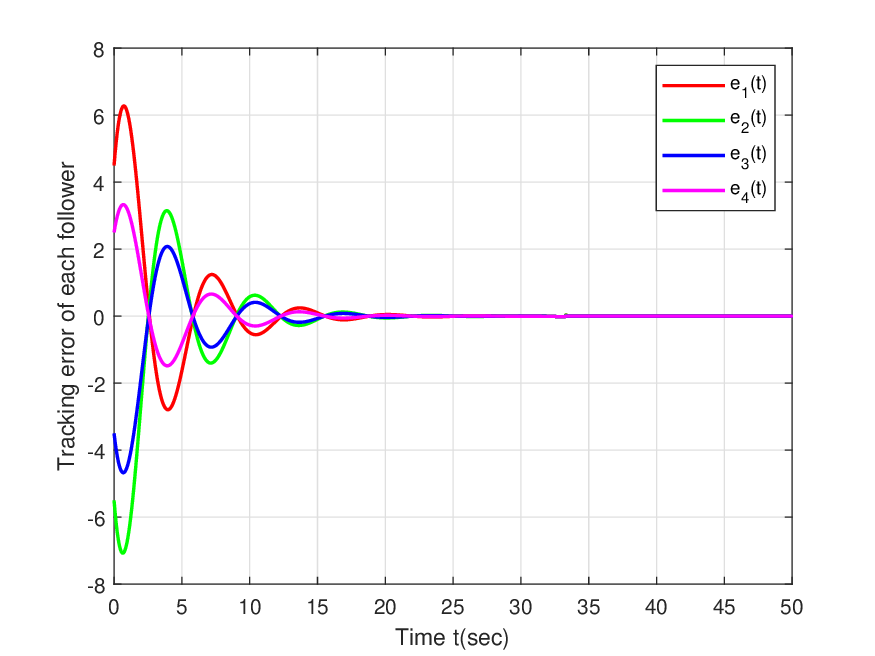}
  \caption{Tracking errors of the followers using exact data.}
  \label{fig3}
\end{figure}

In the noisy data case, the leader is given by Equation (\ref{d2}) with $S=\begin{bmatrix} 0 &1\\ -1& 0 \end{bmatrix}$. The followers are described by Equation (\ref{d1}) with matrices \eqref{d212}. According to Lemma \ref{l3}, one obtains $c_1=0.16$ and $c_2=0.04$. The norms of the error matrices satisfy $\| \omega_{1} \|\leq 1.57$ and $\| \omega_{3} \|\leq 1.19$ according to Inequality \eqref{d655}.  Applying Equations \eqref{d56} and (\ref{d4525}), one obtains $K_{11}=K_{21}=\begin{bmatrix} -2.8176&-3.2005\end{bmatrix}$ and $K_{31}=K_{41}=\begin{bmatrix}-1.0942&-1.0951\end{bmatrix}$. By combing Equations (\ref{d34}) and \eqref{d80}, one obtains $K_{12}=K_{22}=\begin{bmatrix}42.7&8.9\end{bmatrix}$ and $K_{32}=K_{42}=\begin{bmatrix}2.0246&-49.0265\end{bmatrix}$. Figures \ref{fig5} compare the noisy and exact data cases, demonstrating that the controller designed using noisy data ensures that the tracking errors remain bounded. Figure \ref{fig4} demonstrates that the tracking errors are ultimately uniformly bounded.

\begin{figure}[ht]
  \centering
  \includegraphics[width=200pt]{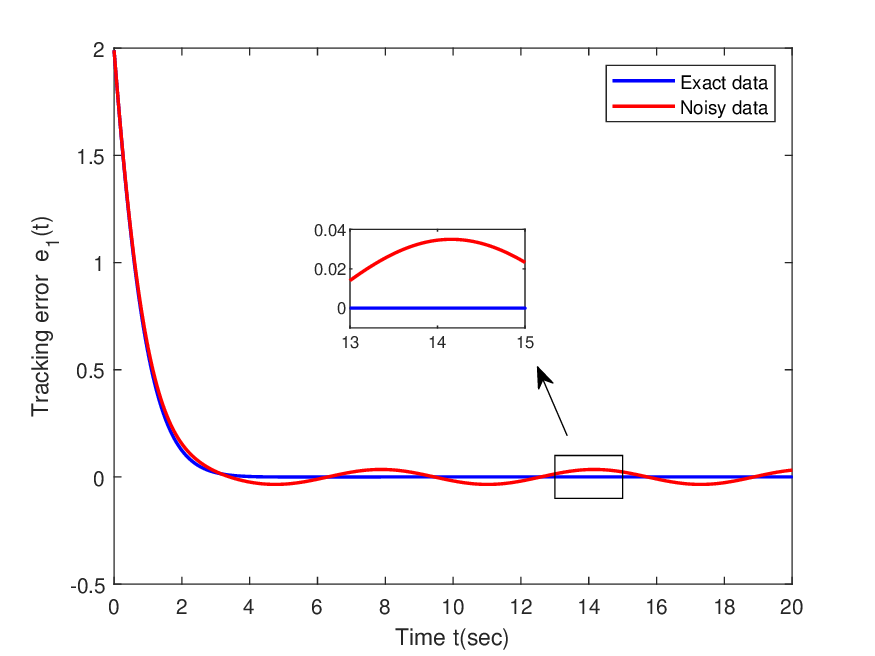}
  \caption{Tracking errors $e_{1}(t)$ using exact data and noisy data.}
  \label{fig5}
\end{figure}


\begin{figure}[ht]
  \centering
  \includegraphics[width=200pt]{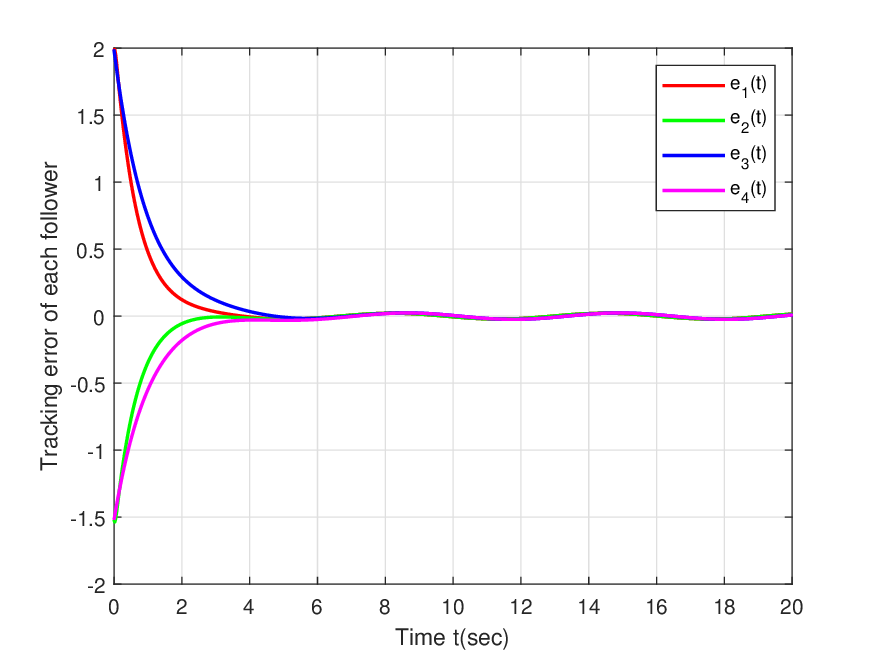}
  \caption{Tracking errors of the followers using noisy data.}
  \label{fig4}
\end{figure}

\section{CONCLUSION}
This paper has studied the data-driven cooperative output regulation problem for continuous-time multi-agent systems with an unknown network topology. For the exact data case, some necessary and sufficient conditions for the data-driven cooperative output regulation problem have been established. 
In the case of noisy data, the bound on the output error has been found, demonstrating a positive correlation with the noise bound. A lower bound of the minimum non-zero eigenvalue of the Laplacian matrix independent of the network structure has been estimated. Additionally, a distributed controller has been designed. In future studies, the data-driven nonlinear cooperative output regulation problem will be investigated.  

\bibliographystyle{IEEEtran}
\bibliography{IEEEabrv,SMC}

\begin{thebibliography}{10}
\providecommand{\url}[1]{#1}
\csname url@samestyle\endcsname
\providecommand{\newblock}{\relax}
\providecommand{\bibinfo}[2]{#2}
\providecommand{\BIBentrySTDinterwordspacing}{\spaceskip=0pt\relax}
\providecommand{\BIBentryALTinterwordstretchfactor}{4}
\providecommand{\BIBentryALTinterwordspacing}{\spaceskip=\fontdimen2\font plus
\BIBentryALTinterwordstretchfactor\fontdimen3\font minus \fontdimen4\font\relax}
\providecommand{\BIBforeignlanguage}[2]{{%
\expandafter\ifx\csname l@#1\endcsname\relax
\typeout{** WARNING: IEEEtran.bst: No hyphenation pattern has been}%
\typeout{** loaded for the language `#1'. Using the pattern for}%
\typeout{** the default language instead.}%
\else
\language=\csname l@#1\endcsname
\fi
#2}}
\providecommand{\BIBdecl}{\relax}
\BIBdecl

\bibitem{sun2021distributed}
Z.~Sun, A.~Rantzer, Z.~Li, and A.~Robertsson, ``Distributed adaptive stabilization,'' \emph{Automatica}, vol. 129, p. 109616, 2021.

\bibitem{wen2020coordination}
G.~Wen, X.~Yu, W.~Yu, and J.~Lu, ``Coordination and control of complex network systems with switching topologies: A survey,'' \emph{IEEE Transactions on Systems, Man, and Cybernetics: Systems}, vol.~51, no.~10, pp. 6342--6357, 2020.

\bibitem{zhou2021terminal}
J.~Zhou, Y.~Lv, G.~Wen, and G.~Chen, ``Terminal-time synchronization of multivehicle systems under sampled-data communications,'' \emph{IEEE Transactions on Systems, Man, and Cybernetics: Systems}, vol.~52, no.~4, pp. 2625--2636, 2021.

\bibitem{sun2016exponential}
Z.~Sun, S.~Mou, B.~D. Anderson, and M.~Cao, ``Exponential stability for formation control systems with generalized controllers: A unified approach,'' \emph{Systems \& Control Letters}, vol.~93, pp. 50--57, 2016.

\bibitem{xu2023adaptive}
T.~Xu, Z.~Duan, Z.~Sun, and G.~Chen, ``Adaptive distributed formation-containment control on switching directed networks: A dynamic triggering framework,'' \emph{IEEE Transactions on Control of Network Systems}, 2023.

\bibitem{cheng2019cooperative}
B.~Cheng, Z.~Li, and X.~Wang, ``Cooperative output regulation of heterogeneous multi-agent systems with adaptive edge-event-triggered strategies,'' \emph{IEEE Transactions on Circuits and Systems II: Express Briefs}, vol.~67, no.~10, pp. 2199--2203, 2019.

\bibitem{liu2017cooperative}
K.~Liu, Y.~Chen, Z.~Duan, and J.~L{\"u}, ``Cooperative output regulation of lti plant via distributed observers with local measurement,'' \emph{IEEE Transactions on Cybernetics}, vol.~48, no.~7, pp. 2181--2191, 2017.

\bibitem{su2011cooperative}
Y.~Su and J.~Huang, ``Cooperative output regulation of linear multi-agent systems,'' \emph{IEEE Transactions on Automatic Control}, vol.~57, no.~4, pp. 1062--1066, 2011.

\bibitem{su2012cooperative1}
{Y. Su and J. Huang}, ``Cooperative output regulation with application to multi-agent consensus under switching network,'' \emph{IEEE Transactions on Systems, Man, and Cybernetics, Part B (Cybernetics)}, vol.~42, no.~3, pp. 864--875, 2012.

\bibitem{huang2013cooperative1}
C.~Huang and X.~Ye, ``Cooperative output regulation of heterogeneous multi-agent systems: An \(\mathcal{H}_\infty\) criterion,'' \emph{IEEE Transactions on Automatic Control}, vol.~59, no.~1, pp. 267--273, 2013.

\bibitem{huang2016cooperative}
J.~Huang, ``The cooperative output regulation problem of discrete-time linear multi-agent systems by the adaptive distributed observer,'' \emph{IEEE Transactions on Automatic Control}, vol.~62, no.~4, pp. 1979--1984, 2016.

\bibitem{liu2018adaptive}
T.~Liu and J.~Huang, ``Adaptive cooperative output regulation of discrete-time linear multi-agent systems by a distributed feedback control law,'' \emph{IEEE Transactions on Automatic Control}, vol.~63, no.~12, pp. 4383--4390, 2018.

\bibitem{hu2017cooperative}
W.~Hu, L.~Liu, and G.~Feng, ``Cooperative output regulation of linear multi-agent systems by intermittent communication: A unified framework of time-and event-triggering strategies,'' \emph{IEEE Transactions on Automatic Control}, vol.~63, no.~2, pp. 548--555, 2017.

\bibitem{van2012subspace}
P.~Van~Overschee and B.~De~Moor, \emph{Subspace identification for linear systems: Theory—Implementation—Applications}.\hskip 1em plus 0.5em minus 0.4em\relax Springer Science \& Business Media, 2012.

\bibitem{8933093}
C.~De~Persis and P.~Tesi, ``Formulas for data-driven control: Stabilization, optimality, and robustness,'' \emph{IEEE Transactions on Automatic Control}, vol.~65, no.~3, pp. 909--924, 2020.

\bibitem{van2020data}
H.~J. Van~Waarde, J.~Eising, H.~L. Trentelman, and M.~K. Camlibel, ``Data informativity: A new perspective on data-driven analysis and control,'' \emph{IEEE Transactions on Automatic Control}, vol.~65, no.~11, pp. 4753--4768, 2020.

\bibitem{trentelman2021informativity}
H.~L. Trentelman, H.~J. van Waarde, and M.~K. Camlibel, ``An informativity approach to the data-driven algebraic regulator problem,'' \emph{IEEE Transactions on Automatic Control}, vol.~67, no.~11, pp. 6227--6233, 2021.

\bibitem{zhu2024data}
L.~Zhu and Z.~Chen, ``Data informativity for robust output regulation,'' \emph{IEEE Transactions on Automatic Control}, 2024.

\bibitem{liang2024data}
D.~Liang, Y.~Dong, C.~Wang, and G.~Zhai, ``Data-driven cooperative output regulation of linear discrete-time multiagent systems with unknown dynamics,'' \emph{IEEE Transactions on Systems, Man, and Cybernetics: Systems}, 2024.

\bibitem{xie2023data}
K.~Xie, Y.~Jiang, X.~Yu, and W.~Lan, ``Data-driven cooperative optimal output regulation for linear discrete-time multi-agent systems by online distributed adaptive internal model approach,'' \emph{Science China Information Sciences}, vol.~66, no.~7, p. 170202, 2023.

\bibitem{jiao2021data}
J.~Jiao, H.~J. van Waarde, H.~L. Trentelman, M.~K. Camlibel, and S.~Hirche, ``Data-driven output synchronization of heterogeneous leader-follower multi-agent systems,'' in \emph{2021 60th IEEE Conference on Decision and Control (CDC)}.\hskip 1em plus 0.5em minus 0.4em\relax IEEE, 2021, pp. 466--471.

\bibitem{trefethen2019approximation}
L.~N. Trefethen, \emph{Approximation theory and approximation practice, extended edition}.\hskip 1em plus 0.5em minus 0.4em\relax SIAM, 2019.

\bibitem{rapisarda2023orthogonal}
P.~Rapisarda, H.~van Waarde, and M.~Camlibel, ``Orthogonal polynomial bases for data-driven analysis and control of continuous-time systems,'' \emph{IEEE Transactions on Automatic Control}, 2023.

\bibitem{shivakumar1996two}
P.~Shivakumar, J.~J. Williams, Q.~Ye, and C.~A. Marinov, ``On two-sided bounds related to weakly diagonally dominant {M}-matrices with application to digital circuit dynamics,'' \emph{SIAM Journal on Matrix Analysis and Applications}, vol.~17, no.~2, pp. 298--312, 1996.

\bibitem{west2001introduction}
D.~B. West \emph{et~al.}, \emph{Introduction to graph theory}.\hskip 1em plus 0.5em minus 0.4em\relax Prentice hall Upper Saddle River, 2001, vol.~2.

\bibitem{li2016distributed}
X.~Li, M.~Z. Chen, H.~Su, and C.~Li, ``Distributed bounds on the algebraic connectivity of graphs with application to agent networks,'' \emph{IEEE Transactions on Cybernetics}, vol.~47, no.~8, pp. 2121--2131, 2016.

\bibitem{van2020noisy11}
H.~J. van Waarde, M.~K. Camlibel, and M.~Mesbahi, ``From noisy data to feedback controllers: Nonconservative design via a matrix s-lemma,'' \emph{IEEE Transactions on Automatic Control}, vol.~67, no.~1, pp. 162--175, 2020.

\bibitem{driscoll2014chebfun}
T.~A. Driscoll, N.~Hale, and L.~N. Trefethen, ``Chebfun guide,'' 2014.

\end{thebibliography}

\end{document}